\documentclass{article}
\usepackage{subcaption}
\usepackage{graphicx} 
\usepackage{authblk}
\usepackage{soul}

\usepackage[thmtools-compat]{keytheorems}
\usepackage{amsmath,amssymb,mathtools}
\usepackage{hyperref}

\usepackage{zref-clever}
\newcommand{\cref}[1]{\zcref{#1}}
\newcommand{\Cref}[1]{\zcref[S]{#1}}

\usepackage{color}
\usepackage{geometry}
\usepackage{braket}
\usepackage{todonotes}
\usepackage{pdfpages}
\usepackage{comment}
\usepackage{circuitikz}
\usetikzlibrary{arrows.meta}
\usetikzlibrary{shapes.geometric}
\usepackage{algorithm,algpseudocode}

\newcommand{\R}{\mathbb R}
\newcommand{\E}{\mathbb E}

\newcommand{\Z}{\mathbb Z}
\newcommand{\T}{\top}

\newcommand{\sfe}{\mathbb{S}^{d-1}}
\newcommand{\dd}{\operatorname{d}}

\newcommand{\vol}{\operatorname{vol}}
\newcommand{\feastol}{\mathrm{feasTol}}
\newcommand{\opttol}{\mathrm{optTol}}
\newcommand{\ball}{\operatorname{\mathbb{B}_2}}

\DeclarePairedDelimiter\abs{\lvert}{\rvert}

\def\ve#1{\mathchoice{\mbox{\boldmath$\displaystyle\bf#1$}}
{\mbox{\boldmath$\textstyle\bf#1$}}
{\mbox{\boldmath$\scriptstyle\bf#1$}}
{\mbox{\boldmath$\scriptscriptstyle\bf#1$}}}

\newcommand\vea{{\ve a}}
\newcommand\veb{{\ve b}}
\newcommand\vecc{{\ve c}}

\newcommand\vece{{\ve e}}
\newcommand\vef{{\ve f}}

\newcommand\vep{{\ve p}}

\newcommand\ves{{\ve s}}
\newcommand\vet{{\ve t}}

\newcommand\vev{{\ve v}}

\newcommand\vex{{\ve x}}
\newcommand\vey{{\ve y}}
\newcommand\vez{{\ve z}}

\newcommand\veo{{\ve 0}}
\newcommand\vetheta{\ve\theta}
\newcommand\velambda{\ve\lambda}

\newcommand{\norm}[1]{\|#1\|}

\newcommand{\eps}{\varepsilon} 
\renewcommand{\epsilon}{\varepsilon} 

\newtheorem{theorem}{Theorem}
\newtheorem{remark}[theorem]{Remark}

\newtheorem{lemma}[theorem]{Lemma}
\newtheorem{definition}[theorem]{Definition}

\definecolor{eleoncolor}{rgb}{0.8, 0.8, 1}
\definecolor{sophcolor}{rgb}{0.8, 1, 0.95}

\definecolor{alexcolor}{rgb}{0.3, .5, 0.7}

\usepackage[style=alphabetic,maxnames=100,minnames=100,minalphanames=6,maxalphanames=6]{biblatex}
\addbibresource{bibliography.bib}

\title{Beyond Smoothed Analysis: Analyzing the Simplex Method by-the-book}

\author[1]{Eleon Bach}
\author[2]{Alexander E. Black}
\author[3]{Sophie Huiberts}
\author[4]{Sean Kafer}

\affil[1]{Technische Universit\"{a}t M\"{u}nchen}
\affil[2]{Bowdoin College}
\affil[3]{LIMOS, CNRS, University Clermont Auvergne}
\affil[4]{Illinois State University}
\date{}

\begin{document}

\maketitle

\begin{abstract}
Narrowing the gap between theory and practice is a longstanding goal of the algorithm analysis community.
To further progress our understanding of how algorithms work in practice, we propose a new algorithm analysis framework that we call \emph{by-the-book analysis}.
In contrast to earlier frameworks, by-the-book analysis not only models an algorithm's input data, but also the algorithm itself.
Results from by-the-book analysis are meant to correspond well with established knowledge of an algorithm's practical behavior, as they are meant to be grounded in observations from implementations, input modeling best practices, and measurements on practical benchmark instances. 
We apply our framework to the simplex method, an algorithm which is beloved for its excellent performance in practice and notorious for its high running time under worst-case analysis.
The simplex method similarly showcased the previous state of the art framework \emph{smoothed analysis} (Spielman and Teng, STOC'01).
We explain how our framework overcomes several weaknesses of smoothed analysis and we 
prove that under input scaling assumptions, feasibility tolerances and other design principles used by simplex method implementations, the simplex method indeed attains a polynomial running time. 
Our results provide analytical justification for these features which are common to all high-quality simplex method implementations.
\end{abstract}

\maketitle

\section{Introduction}
There is a longstanding need for frameworks explaining the performance of algorithms whose users---whether they are engineers, operations researchers or mathematicians---know that they perform very well for their applications, but for which traditional worst-case analysis is too pessimistic.
The simplex method for linear programming (LP) problems is an example of such an algorithm: it was a driving force in the development of electronic computers \cite{marosbook} as it was observed to perform efficiently;
requiring a linear number of pivot steps in practice \cite{randibmmanual, dantzigbook,sha87, andrei2004complexity,makhoringlpk, xpress}.
A long line of research attempts to explain this behavior theoretically by a suitable algorithm analysis framework.
The project of explaining the performance of the simplex method is not only interesting from a mathematical point of view.
There is also interest from practical users who are constantly facing new areas of application. They might fear that the algorithm will have worse performance in their new application domains.
As such, there is a need for theoretical models showing that the simplex method is guaranteed to be efficient. 

The simplex algorithm serves well as a showcase for the developments in the algorithm analysis community. 
Following the traditional line of algorithm analysis research, in 1972, the first exponential worst-case analysis lower bounds were shown for the number of pivot steps the simplex method may take with certain pivot rules, and many analogous results followed \cite{km72, jer73, AC78, GS79, Mur80, g83, k92,msw96, jour/cm/AZ98, conf/stoc/FHZ11, conf/ipco/Friedmann11,hz15, disser2020exponential, simplexzeroone,black22, dissermosis, black2024exponentiallowerboundspivot}. This discovery opened up the search for algorithm analysis frameworks which could more properly capture the behavior of the simplex algorithm and algorithms which showed a similar behavior.

For algorithms like these, average-case analysis is the theoretician's next attempt to provide a better understanding of the running time.
In the case of the simplex method, average-case analysis saw much activity during the 1970s and 1980s, and as a result, we know tight polynomial upper and lower bounds in the average case setting \cite{thesis/Borgwardt77, b82,b87, report/Haimovich83, jour/jacm/AM85, jour/mapr/Megiddo86, jour/mapr/Todd86, jour/jc/AKS87, b87,b99,bdghl21}. 
However, linear programs seen in the average case setting greatly differ on a structural level from linear programs seen in practice. As such, average case analysis offered an incomplete explanation. Despite the development of algorithms for linear programming which run in polynomial time under a worst-case analysis, the simplex method is still an essential part of all linear programming software.
As such, there remains a need to understand its performance in practice.

In their seminal work \cite{ST04}, Spielman and Teng proposed the smoothed analysis framework as an explanation as to ``why the simplex method usually takes polynomial time," and which sought to incorporate the benefits of randomness in average case analysis while retaining some of the structure found in practical linear programs.
Their framework was subsequently applied to many more algorithms (see \cite{bwcabook} for an overview).
In \Cref{sec:smoothed_intro}, we review smoothed analysis of the simplex method.
In \Cref{sec:smoothed_limitations}, we discuss the limitations of the smoothed analysis framework as a model for the simplex method.
These limitations undermine the ability of smoothed analysis to explain why the simplex method usually runs in polynomial time.

In \Cref{sec:BTB_intro}, we describe our new algorithm analysis framework and how our framework is overcoming these limitations.
 
 Using our by-the-book-analysis framework, 
 we explain how a simplex method with bound perturbations is guaranteed to solve LPs in expected polynomial time.

\subsection{Smoothed Analysis and the Simplex Method}\label{sec:smoothed_intro}
We first recall the basic principles of the simplex method. The simplex method is best thought of as a class of algorithms. A simplex method first determines a basic feasible solution using a \emph{Phase I} procedure and then iteratively moves to new basic feasible solutions until an optimal solution has been found. The individual moves are called \emph{pivot steps}, and the number of pivot steps is a proxy for the running time. The choice for which basic feasible solution to move to is governed by a \emph{pivot rule}.
Some popular examples of pivot rules are the most negative reduced cost rule \cite{dan51}, the steepest edge rule and its approximations \cite{harris, gol76, jour/mapr/FG92}, and the shadow vertex rule, also known as the parametric rule \cite{gas55, thesis/Borgwardt77}.
It is this last pivot rule that is the foundational tool for most probabilistic analyses, including all results in smoothed analysis.

In the smoothed analysis of linear programming, we assume that an adversary specifies an~LP
\begin{align*}
    \operatorname{maximize} \quad & \vecc^\T \vex \\
    \operatorname{subject~to} \quad & \bar A\vex \leq \bar \veb,
\end{align*}
and that subsequently a perturbation $\hat A \in \R^{n \times d}, \hat \veb \in \R^n$ is sampled.
We assume that the entries of $(\hat A, \hat \veb)$ are independently sampled from the Gaussian distribution with mean $0$ and standard deviation $\sigma > 0$,
and we assume that the rows of the extended matrix $(\bar A, \bar \veb)$ each have Euclidean norm at most $1$.
Under these assumptions, smoothed analysis aims to develop simplex methods for solving
\begin{align*}
    \operatorname{maximize} \quad & \vecc^\T \vex \\
    \operatorname{subject~to} \quad & (\bar A + \hat A)\vex \leq \bar \veb + \hat \veb,
\end{align*}
for which the expected running time can be bounded by a polynomial function in $\sigma^{-1}, d$ and $n$.
After a line of work \cite{ST04, ds05, ver09, thesis/Schnalzger14, DH18, hlz}, Bach and Huiberts \cite{optimal_smoothed_analysis} found a simplex method that requires no more than $$O(\sigma^{-1/2} d^{11/4} \log(n)^{7/4})$$ pivot steps.
They also described a lower bound which applies to all pivot rules, stating that $$\Omega(\sigma^{-1/2} d^{1/2}\ln(1/\sigma)^{-1/4})$$ pivot steps are necessary when $n = (4/\sigma)^d$.

Despite these upper and lower bounds agreeing on the exponent of $\sigma^{-1/2}$, smoothed analysis can hardly be said to be a complete explanation of the simplex method's running time. The LPs in smoothed analysis are unlike LPs in practice in a number of important ways, of which we describe three in the next section.
These three differences undermine the three leading interpretations given for the smoothed analysis of the simplex method.

\subsection{Limitations of Smoothed Analysis}\label{sec:smoothed_limitations}

\paragraph{Sparse perturbations}
Most linear programs seen in practice are very \emph{sparse}; less than 0.1\% of their entries are non-zero (see, e.g., \cite{marosbook, Hall2005, miplib,gay1985electronic}).
The same holds true for known worst-case inputs.
This stands in stark contrast with smoothed analysis, where 100\% of the entries are non-zero since they are randomly perturbed according to a continuous probability distribution.
As such, in the smoothed probability model, the set of sparse linear programs has measure $0$.

Since both practical linear programs and theoretical worst-case inputs are sparse, it remains unclear whether smoothed analysis provides a viable synthetic model for describing the ``brittleness'' of worst-case inputs and whether smoothed analysis indeed helps to explain why the simplex method is usually fast on inputs observed in practice.

In their foundational work introducing smoothed analysis \cite{ST04}, Spielman and Teng highlighted the lack of sparsity as a weakness of their model.
As a possible fix for this problem, they proposed zero-preserving and multiplicative perturbations.
However, we note that exponential running times for simplex methods still occur in this zero-preserving model, even for constant-size perturbations. Consider the Klee-Minty instance \cite{km72}:
\begin{align*}
    \operatorname{maximize} \quad &\vex_d \\
    \operatorname{subject~to} \quad & 0 \leq \vex_1 \leq 1 \\
    & \eps \vex_{j-1} \leq \vex_j \leq 1-\eps \vex_{j_1} \qquad \forall j \in \{2,\dots,d\}.
\end{align*}
This construction gives exponential-length simplex paths for Bland's least-index rule for any value of $0 < \eps < 1/2$.
The different occurrences of $\eps$ need not be equal for their proof to work.
As such, if we take an instance of the KM-cube with $\eps=1/4$ and perturb all non-zero entries with small but constant-size noise, the same simplex path will remain a simplex path for Bland's rule.
This demonstrates that the simplex method with Bland's rule has exponential smoothed running time for both zero-preserving and multiplicative perturbations.
No other sparse smoothed analysis model has been proposed.

\paragraph{Matrix entry precision}
When LP constraint data is noisy or is rounded to low precision, this affects LP solver behavior negatively. For example, a nominally singular set of constraints might become a basis (potentially an ill-conditioned one) due to improper rounding, and this can lead to performance degradation.
For this reason it is recommended to specify all constraints with maximum numerical precision (see, e.g., \cite{gurobi,klotzseminar}): when possible use all-integer inputs, otherwise prefer double-precision over single-precision floating point numbers.
Higher-precision matrix entries tend to give better performance.

In contrast, smoothed analysis produces stronger performance guarantees if \textit{more} noise is added to the constraint matrix, thereby seeming to predict that lowering the precision of constraint data yields performance improvement.
This appears to be a disagreement between practical advice and smoothed analysis' theoretical prediction.
This disagreement indicates that smoothed analysis might not be an 
appropriate model for the effects of round-off error or floating-point imprecision.

\paragraph{Independence of matrix entries}
A small perturbation to the constraint matrix can lead to a large change in the feasible region.
This fact has long been observed on practical LPs in the robust optimization literature.
For example, \cite{robustoptimization} describes that the instance \texttt{PILOT4} has constraints whose entries are seemingly random with 7 digits significant.
One example they give is constraint \texttt{372} which reads
\begin{align*}
    \vea^\T \vex \equiv &- 15.79081\vex_{826} - 8.598819\vex_{827} - 1.88789\vex_{828} - 1.362417\vex_{829} - 1.526049\vex_{830} \\
&-0.031883\vex_{849} - 28.725555\vex_{850} - 10.792065\vex_{851} - 0.19004\vex_{852} - 2.757176\vex_{853} \\
&-12.290832\vex_{854} + 717.562256\vex_{855} - 0.057865\vex_{856} - 3.785417\vex_{857} - 78.30661\vex_{858} \\
&-122.163055\vex_{859} - 6.46609\vex_{860} - 0.48371\vex_{861} - 0.615264\vex_{862} - 1.353783\vex_{863} \\
&-84.644257\vex_{864} - 122.459045\vex_{865} - 43.15593\vex_{866} - 1.712592\vex_{870} - 0.401597\vex_{871} \\
&+\vex_{880} - 0.946049\vex_{898} - 0.946049\vex_{916} \geq b \equiv 23.387405.
\end{align*}
When the ``ugly reals''\footnote{We use the terminology used in \cite{robustoptimization}.} entries are each independently randomly changed by as little as $0.1\%$, then the original LP solution has a mean violation $\max(0,\frac{b - \vea^\T \vex^*}{b})$ of $150\%$.
Surely we must assume that even these LPs, which are sensitive to small \emph{independent} random perturbations, were formulated appropriately by the modeler for their purpose.
Despite looking random, these entries may have non-trivial relations between them which are destroyed when perturbing independently.

We interpret this to imply that the constraint data is highly accurate, in the sense that the constraint matrix entries cannot be independently perturbed by small amounts.
In this manner, we argue that the independence assumption makes smoothed analysis an inappropriate model for measurement error or other forms of limited precision input.

\paragraph{What is modelled}
The variable nature of $\sigma$ leaves smoothed analysis agnostic to what it models.
Although different interpretations have been given in the literature, they largely fall within one of three categories:
smoothed analysis models the ``brittleness'' or ``isolation'' of worst-case instances,
smoothed analysis models the effects of numerical imprecision in floating-point arithmetic,
or smoothed analysis models the effects of measurement error or other arbitrary influences.
As we have seen above, each of these three interpretations has notable weaknesses.
We must note that, when considered as a \emph{scientific theory}, the flexibility of interpretation in regards to the meaning of $\sigma$ is not a strength of the smoothed analysis model.

We conclude that smoothed analysis is a breakthrough framework that led to sound mathematical theorems.
As a scientific theory, there are several interpretations about what smoothed analysis purports to model. This ambiguity results from the fact that its inherent assumptions are made for mathematical simplicity, without being traceable to direct observations.
In this work we will propose an alternative analytical framework designed to\textemdash among other things\textemdash address precisely this weakness of smoothed analysis.
We offer what we argue is a much improved explanation for the practical behavior of the simplex method and we expect this new framework to offer improved explanations on the behavior of other algorithms as well.  Although we seek specifically to deviate from those aspects of smoothed analysis which we have identified above as weaknesses, the proof techniques which were developed in the smoothed analysis literature will play a crucial role in the proposed \textit{by-the-book analysis} presented in this paper; a framework we detail in the next section.

\subsection{Introducing By-The-Book Analysis}\label{sec:BTB_intro}
In this paper we propose a new algorithm analysis framework that we call by-the-book analysis.
Unlike smoothed analysis, by-the-book analysis is explicitly empirical by being intentional about what effect it models. 
As part of this, we open up the possibility of modeling not only the input data but also aspects of an algorithm's implementation.
It is well-known that theoretical descriptions of algorithms differ from their software implementations.
These differences are often thought to be incidental and to be theoretically uninteresting, even though they have repeatedly proven to be beneficial in practice.
We posit that these differences, if they are found in state of the art implementations, can be essential to a full theoretical understanding.
In doing so, by-the-book analysis answers the call of \cite{Hooker1994} for an empirically-based explanatory theory capturing ``the all-important tricks that are engineered into commercial codes.''

With by-the-book analysis, we strive to prove performance guarantees while using assumptions that are maximally grounded in computational experience.
For this reason, our research process has occurred in three phases as depicted in \Cref{fig:schematic}.  We will first describe what we propose as the general framework of by-the-book analysis, and then we will exemplify this framework by outlining the specifics of our own by-the-book analysis of the simplex method.

\begin{figure}[!ht]
\centering
\includegraphics[width=0.55\textwidth]{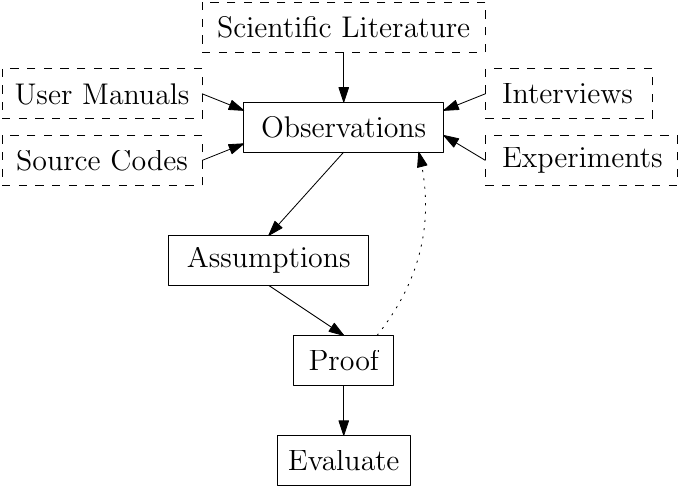}
\caption{Schematic depiction of by-the-book analysis}
\label{fig:schematic}
\end{figure}


In general, by-the-book analysis proceeds in three phases. In the first phase, one establishes an understanding of true and relevant features of real-world implementations of an algorithm and features of real-world inputs.  This process can involve reading of the scientific literature, open-source code, user manuals, running experiments etc. in order to establish how the algorithm is implemented in practice, how users of the algorithm are instructed to formulate their inputs, and what features are common among real problem instances.

The second phase is a process which converts parts of this understanding to a mathematical model of the algorithm on which some form of traditional theoretical analysis can be performed.   Based on the observations made in the first phase, one chooses an algorithm description and a collection of assumptions satisfied by the algorithm and/or its inputs.  The chosen algorithm description may deviate from traditional theoretical descriptions of the algorithm if that deviation is warranted by observations. One might assume, for example, that certain relevant parameters of the input are bounded.  Although one must inevitably make assumptions that make the upcoming mathematical analysis tractable, one should, to the greatest extent possible, choose a mathematical model and mathematical assumptions that comport with the observations made in the first phase. This is what gives a by-the-book analysis its empirical basis.

The third phase is a traditional mathematical analysis of the chosen model under the chosen assumptions.  In this phase, by-the-book analysis is agnostic to the mathematical techniques used provided that any further assumptions inherent to those techniques are likewise grounded (to the greatest extent possible) in the observations of the first phase.  For example, in this phase the mathematical analysis could resemble parts of a worst-case, average-case, smoothed analysis, etc. Ultimately, what makes something a by-the-book analysis is not the choice of mathematical tools used, but that the model, assumptions, and tools are grounded in observation.

Note that the order of the phases is important.  In particular, the determination of mathematical assumptions occurs \textit{before} the production of theoretical running time bounds. It is contrary to by-the-book analysis to produce a theoretical bound parameterizing by whichever parameters turn out to be mathematically fruitful and to only then search for explanations as to why these parameters may be bounded in practice.  By \textit{basing} the theoretical approach on observations, we argue that this methodology is better positioned to produce accurate explanations.  This is the ultimate objective of performing a by-the-book analysis.  In this spirit, we emphasize again that what makes something a by-the-book analysis is not which type of mathematical tools or proof techniques are used, but whether or not the research methodology follows the general framework outlined above.  To call something a by-the-book analysis (or not) is not a mathematical distinction, but a methodological one; it is not a question of \textit{what} mathematical assumptions are made, but of \textit{how} they are produced in the first place.

We will now briefly outline how we applied the above procedure to the simplex method in the present analysis.
In the initial phase we studied the scientific literature, source-code, and user manuals of simplex implementations, and we interviewed developers working on multiple different LP software packages.
A primary role in this phase was played by the titular textbooks \cite{orchardhaysbook, marosbook} which describe how to implement a simplex method.
After observing the prominent role that numerical tolerances are given in both books (on which we will elaborate later), we verified their importance by way of inspecting both the source code of open-source LP solvers and the user manuals of closed-source LP solvers.
Because the simplex method is a popular and well-documented algorithm, we were able to draw on a diverse array of sources and identify points of broad agreement.
The outcomes of this phase are described in detail in \Cref{sec:bythebooksimplexfindings}.

In the second phase, we balanced verisimilitude with our observations against the tractability of the upcoming mathematical analysis.
We aimed for a traditional, theoretical performance guarantee which, when all parameters are filled in with plausible values, gives the strongest possible conclusion in terms of how well this bound is reflecting an algorithm's behavior in practice (subject, of course, to adherence to our observations in the first phase). We observed that every major modern simplex solver perturbs its right-hand side vector with random exponentially distributed noise (whose magnitude is grounded in our observations). 
For this analysis, we hence chose a model of the simplex method that perturbs the right-hand side vector and which solves the input LP up to optimality and feasibility tolerances (whose magnitudes are grounded in our observations) using the shadow vertex pivot rule with an auxiliary objective function chosen uniformly at random.
To maintain mathematical tractability, we assumed the algorithm to be implemented using exact real valued arithmetic, Dantzig's ratio test, and the shadow vertex pivot rule.
Relevant instance-specific parameters are assumed to be bounded in magnitude (again, grounded in observation). Detailed descriptions can be found in \Cref{sub:introbythebookproperties} and \Cref{sec:bythebooksimplexfindings}.

In the third and final phase, we used our mathematical assumptions on the data and the algorithm to prove running time bounds using a mathematical analysis similar to those found in smoothed analysis, using the randomness in our perturbations and our auxiliary objective function.

As is evident from the specificity of some choices detailed above, different by-the-book analyses for the same algorithm are possible. They can vary in how the algorithm and the data are modeled, i.e., which assumptions and parameters are chosen, and they can vary in the types of mathematical techniques used to prove running time bounds.
We can distinguish at least two paths for improvements on a particular by-the-book analysis. The first occurs in the third phase and is one that we are used to in theoretical computer science: for the same set of mathematical assumptions, stronger mathematical analysis techniques can be developed, leading to stronger theorems.
The second is to formulate \emph{different} mathematical assumptions to model the same or different observations performed in the first phase. 

One is unlikely to produce a singular, holistic explanation for the performance of an algorithm, parameterizing by all possible factors and perfectly distilling the impact of these parameters into a bound. Theoretical analysis requires simplified models on which theoretical analyses can be performed.
The goal is not to produce a perfect model of reality, but instead to produce a mathematical model which is better informed by, and more reflective of, what we can observe.
As such, a by-the-book analysis can be improved in the first phase by making more or better observations, and second phases by better modelling.
Some axes on which such improvement can be evaluated include, but are not limited to, evaluability and verisimilitude.
It is preferable to express bounds in terms of parameters which can be effectively observed. 
It is also preferable to use mathematical assumptions which more closely resemble computational practice, such as to have a theoretical description of an algorithm which comports with the way in which that algorithm is implemented in code.

To exemplify this last point, we wield it against our present analysis in \Cref{sec:scorecard}.
We evaluate how well the chosen model, assumptions, and mathematical techniques comport with our observations, and evaluate how well our research process adhered to the framework of by-the-book analysis.
For the sake of good scientific practice, such a final evaluation phase is crucial. 
We recommend the evaluation process to be not only conducted as an isolated final last step. Instead, we recommend to evaluate the observations and decisions made as a steady part of the process throughout all phases. 

\subsection{Observations of Practical LP}\label{sub:introbythebookproperties}
Having described the framework of by-the-book analysis in general, we can now present a by-the-book analysis of the simplex method.  We begin here with an overview of our findings about the properties held by LPs in practice. As described in \Cref{sec:BTB_intro}, these findings will inform our mathematical assumptions.  As noted already, two main contributing factors to our findings are the software implementations of the simplex method and the observed characteristics of (well-formulated) inputs.
A given problem can be modeled in various ways, and a given LP has a wide array of equivalent formulations.  When LPs are solved in practice, they are typically modeled by domain experts who are constructing their formulations according to best practice recommendations explicitly detailed in software user manuals.

We review a number of software packages and describe our observations.
For the open-source solvers Glop and HiGHS, we report our observations from reading their respective source codes.
For the closed-source solvers Gurobi and MOSEK, we read their user manuals in detail and spoke extensively with developers.
We ran an experiment on the MIPLIB 2017 benchmark set to support our assumption that feasible sets of practical LPs have small mean width.
The features of solvers and data described here are mostly limited to those which are directly relevant to our models of scaling, tolerances, and bound/cost perturbations.
We first describe the general principles of what we observe, and in \Cref{sec:bythebooksimplexfindings}, we describe what observations we have with regard to specific software packages.
There are many other features we do not remark upon, although we believe that some can be potentially of interest to theoretical computer scientists and may lead to improved by-the-book analyses of the simplex method.

\paragraph{Condition number}
In an IEEE 754 double precision floating point number, there is 1 bit for the sign, 11 bits for the exponent, and 53 bits for the significand (of which 52 are explicitly stored).
This means that any such number can have a relative precision of at most $2^{-53} \approx 10^{-16}$.
Any arithmetic performed with numbers whose relative magnitudes differ by many orders of magnitude will yield imprecise results.
This effect is particularly visible when solving a system of linear equations.
Here, the backward error up to which it can be solved increases when the linear system has a high \emph{condition number}. 
The condition number $\kappa$ of a matrix $A$ is defined as $\kappa \coloneqq \norm{A}\norm{A^{-1}}$.
The relation to the backward error is the following. When solving the linear system $Ax = b$, a condition number $\kappa = 10^k$ indicates that one may loose up to $k$ digits of accuracy in $x$ from the accuracy in $b$.
Typically, in linear programming software, the condition number of linear systems is assumed to be no greater than $10^{12}$, see the discussion pertaining to Gurobi in \Cref{sec:bythebooksimplexfindings} for example. 

\paragraph{Tolerances}\label{par:tolerances}
With $53$ bits of accuracy and a condition number of order $10^{10}$, solving a linear system $A_B \vex = \veb_B$ can be done up to backward error around $10^{-6}$. That is, the software is able to find $\hat \vex$ such that $\norm{A_B^{-1}\veb_B - \hat \vex} \leq 10^{-6}\cdot \norm{A_B^{-1}\veb_B}$.
Because of this inevitable inaccuracy, every LP solver based on floating-point arithmetic includes a number of different tolerances and thresholds.
Here we discuss only the user-facing \emph{absolute feasibility tolerances}.
There are two of these, a primal feasibility tolerance and a dual feasibility tolerance, also called the \emph{optimality tolerance}.
These two numbers, both commonly of order $10^{-6}$ (e.g. the default value in Glop and Gurobi, as noted in \Cref{sec:bythebooksimplexfindings}), indicate how large the violations of primal or dual constraints are allowed to be for the solver to still report a solution as being `feasible and optimal'.
In contrast to the relative error described in the preceding paragraph, solver tolerances are absolute.

In this work we model the primal feasibility and complementary slackness conditions as follows. For more information on the deduction of this statement of complementary slackness, see \Cref{sec:algorithm}. There, we solve linear programs of the form 
\begin{align*}
    \operatorname{maximize} \quad & \vecc^\T \vex \\
    \operatorname{subject~to} \quad & A\vex \leq \veb \\
    &\vex \geq \veo,
\end{align*}
\noindent where we assume that every row of $A$ has norm $1$.
The solver may return (as optimal and feasible up to tolerances) any primal-dual solution pair $\vex^* \in \R^d, \vey^* \in \R^n_{\geq 0}$ which satisfies primal feasibility up to primal feasibility tolerance $\feastol>0$ and complementary slackness up to dual feasibility tolerance $\opttol>0$.  Formally, they satisfy the primal feasibility conditions
\begin{align}
    A\vex^* &\leq \veb + \feastol\cdot \ve1_n \label{eq:apx-feas}\\
    \vex &\geq -\feastol\cdot\ve1_d, \nonumber
\end{align}
and dual feasibility conditions
\begin{align*}
    A^\T \vey^* &\geq \vecc - \opttol\ve1_d \\
    \vey^* &\geq \veo.
\end{align*}
To ensure approximate optimality, we have a complementary slackness condition
\begin{align}
    \text{~for all~} i \in [n] &\text{,~if~} \vey^*_i > 0 \text{~then~} (A\vex^*)_i \geq \veb_i. \label{eq:apx-opt} \\
    \text{~for all~} j \in [d] &\text{,~if~} (A^\T \vey^*)_j > c_j + \opttol \text{~then~} \vex_j^* \leq 0. \nonumber
\end{align}

In our analysis we will take $\vey^* \geq \veo$, but solvers may assume a bit more flexibility and take $\vey^* \geq -\opttol\ve1$ due to their need to obtain $\vey^*$ as the solution to a system of linear equations.

\paragraph{Scaling}

A well-formed LP should be scaled appropriately: the input data, the output solution, and intermediate values should be not too large and not too small.
This is important to the functioning of real-world software.

The primary way in which this scaling requirement has been explained is through the effects of floating-point arithmetic.
A poorly scaled matrix is likely to have a high condition number.
The numerical concerns described above are an important reason for why proper scaling of the input is explicitly recommended.

The tolerances provide a second important anchor point for scaling.
When some of the optimal values are small relative to the tolerances, this can negatively affect the quality of the solution for its intended purpose.
When some of the solution values are too large relative to the tolerance, the solver may be unable to satisfy all feasibility and optimality conditions within the small (absolute) tolerance.

It is best practice to choose the measurement units for the rows and columns such that the non-zero values of variables across basic feasible solutions lie in some common, limited range of magnitudes.
Most LP solvers also include automatic scaling algorithms in case the user's scaling is deemed insufficient.
This practice speaks directly to the geometry of the feasible set.

Overall, the role of scaling for the performance of the simplex method is poorly understood \cite[p.~110]{marosbook}.

\paragraph{Perturbations}
When feasibility tolerances are permitted, solvers can be made faster by intentionally changing the right-hand side $\veb$, variable bounds, or the objective $\vecc$.
A number of mechanisms make use of this allowance.
We focus on one specific such mechanism, which is random perturbation before starting the simplex method.
Generally these perturbations are done in such a manner as to make the feasible region larger.
The perturbations are of a similar order of magnitude to the feasibility tolerance as discussed in \Cref{sec:bythebooksimplexfindings}. The primary use of a priori perturbation in practice is in perturbing the objective vector $\vecc$ before starting the dual simplex method.
During both the primal and dual simplex method, more perturbations and shifts can be performed at later iterations if needed. 
One consequence of bound perturbations is that degeneracy is avoided and consequently the simplex method cannot cycle. We furthermore analyze their implications on the running time of the simplex method.

As noted by \cite[p.~61]{qi-huangfu}, perturbations are ubiquitous among LP solvers but ``despite its wide application, the discussion of why perturbation works is relatively rare."
He highlights that in the survey by Bixby \cite{BixbySurvey} it is noted that more ``aggressive'' perturbations are experimentally viewed as more effective and that, in personal communication, it became clear there was no theoretical understanding as to why.
Our results provide the first theoretical explanation of this phenomenon.

\subsection{Overview}
\label{sec:model_overview}
The body of the paper is split into three parts.
In \Cref{sec:bythebooksimplexfindings} we describe in detail our observations from the user manuals of Gurobi and MOSEK and the source code of HiGHS and Glop, as well as the results of experiments on the LP relaxations of instances in MIPLIB 2017.
The observations described are related to the three properties described above: scaling, tolerances, and perturbations, as well as the condition number of the constraint matrix. It is common 
for a simplex method in state of the art implementations to add perturbations. 
We choose these perturbations proportional to the feasibility tolerances, mirroring their use in practice as found in \Cref{sec:bythebooksimplexfindings}.

After mathematical preliminaries in \Cref{sec:prelims}, in \Cref{sec:algorithm} we describe our full \emph{two-phase} simplex method whose running time we analyze, and connect the required parameters with the observations from \Cref{sec:bythebooksimplexfindings}.
Its analysis relies on \Cref{cor:easytouse}, which we devote 
\Cref{sec:mathysection} to proving. On the perturbed linear program, we assume our algorithm stays exactly feasible at all times without considering the tolerances, i.e., without accounting for floating point arithmetic errors.
These observations will yield an argument that certain parameters should be assumed to be bounded, show that bound perturbations can be part of a complete simplex method, and provide estimates for the order of magnitude that different quantities have in practice.
We end with a discussion of the results in \Cref{sec:discussion}.
In the remainder of this overview, we summarize \Cref{sec:algorithm} and \Cref{sec:mathysection}, which contain all proofs.

\paragraph{Two-Phase Simplex Method}
We describe our algorithm in detail in \Cref{sec:algorithm}.
In phase 1, roughly speaking, we start from the basis of non-negativity constraints
and add an auxiliary variable $0 \leq s \leq \norm{\veb}_\infty$ to make the corresponding point $\vex = \veo$ feasible as $A\vex + s\frac{\veb}{\norm{\veb}_\infty} \leq \veb$ with $\vex = \veo$ and $s=\norm{\veb}_\infty$.
We pivot to find a basic feasible solution with $s=0$, which yields a basic feasible solution to the original LP.
Then Phase II works as expected, pivoting to the optimal solution.
Assuming non-degeneracy, this strategy can be executed with the semi-random shadow vertex method with no unexpected complications: we move from an initial objective to a random objective on all the coordinates $(\vex,s)$, then we minimize $s$ to find a basic feasible solution which is optimal for a random objective on $\vex$, and finally we move from the random objective to the input objective.

The main challenge in our phase 1 is to make this strategy work when the right-hand side and bound vectors are perturbed: the basis of non-negativity constraints should remain feasible.
To achieve this, we slightly change the coefficients on $s$ to obtain
$A\vex + s\frac{\veb -\feastol\sqrt{\ln n}\ve1}{\norm{\veb}_\infty} \leq \veb$.
Now the point $(\veo, \norm{\veb}_\infty)$ has slack $\feastol\sqrt{\ln n}$ on each constraint.
Using Hoeffding's inequality, we can show that the solution $(\hat \veo - \frac{\feastol}{2}\ve1, \norm{\veb}_\infty)$ is feasible with good probability.
Note that this requires that solution to have mean $(\veo,\norm{\veb}_\infty)$, which we achieve by adding a coefficient on the auxiliary variables into the perturbed non-negativity bounds $\vex \geq \hat \veo + \frac{\feastol}{2\norm{\veb}_\infty}\ve1 s$. 
Perturbing the right-hand side $\veb$ to make $\hat \veb$ does not change this feasibility because $\hat\veb \geq \veb$.
The bounds $0 \leq s \leq \norm{\veb}_\infty$ remain unperturbed.
The analysis can accommodate for a single variable with fixed bounds, incurring a constant factor increase in the expected number of pivot steps.

We end \Cref{sec:algorithm} with a proof that the resulting solution is primal feasible up to primal feasibility tolerance, dual feasible up to the dual feasibility tolerance, and that it satisfies the complementary slackness up to optimality tolerance, indicating that the returned solution satisfies all requirements.

\paragraph{Running time bound}
\Cref{sec:mathysection} folds the variable bounds into the constraint matrix and considers linear programs of the form
\begin{align*}
    \operatorname{maximize} \quad & \vecc^\T \vex \\
    \operatorname{subject~to} \quad & A\vex \leq \veb,
\end{align*}
where again every row of the matrix $A$ is assumed to have norm equal to $1$.
Assuming that the right-hand side vector $\veb$ is perturbed with two-sided exponentially distributed perturbations with rate $\eta > 0$, the expected  number of pivot steps taken by the shadow vertex method to travel from the vertex maximizing a uniformly random objective vector $\vetheta \in \sfe$ to the vertex maximizing an objective $\vecc + \eps\vetheta$ is, by \Cref{cor:easytouse}, at most
\[O\left(d \sqrt{\frac{d \ln(n) M}{\eta} \ln\left(\frac{dN\ln (n)}{\eta \cdot \eps} \right)}\right),\]
where $M > 0$ indicates the expected mean width of the feasible set and $N$ is a proxy for the maximum absolute objective value.
The ratio $N/\eps$ is essentially scale-invariant in $\norm{\vecc}$, but our analysis does require the linear program to be ``well-scaled'' in two ways:
the feasible set must be scaled such that the mean width $M$ is bounded, and the constraint matrix must be scaled such that the rows of $A$ each have norm at most $1$.
These assumptions are to be expected.
Scaling assumptions are required for the feasibility tolerances to be well-defined, and are common for state of the art software implementations as well, as described in \Cref{sec:bythebooksimplexfindings}.
The perturbation size is chosen such that $\eta \cdot \ln(n) \approx 10^{-6}$ is the feasibility tolerance.

In order to learn the mean width of typical linear programming problems, we sampled the widths for LP relaxations of the MIPLIB 2017 benchmark set \cite{miplib} and for LP problems found in NETLIB \cite{gay1985electronic}.
For the MIPLIB instances, we found that the majority of LP relaxations has a feasible set with mean width smaller than $\sqrt{\frac{2d}{\pi}}$, the limit mean width of the unit cube.
On NETLIB, the mean widths were often larger, but only seldom above $10^6$.
We expect that if the columns of the NETLIB instances are rescaled properly, that the mean widths will be smaller.

All parameters which appear in our running time bound can be effectively estimated.
That is, after treating our chosen parameters (as identified during the first and second phases of by-the-book analysis) as having realistic and constant values, we obtain a polynomially bounded running time. 
Importantly, we have produced a \textit{theoretical} treatment of the simplex method's running time that is principally grounded in observable traits of real-world codes and problem instances.

We note that the explanatory power of this result strongly exceeds that of smoothed analysis, since the result presented here follows from modeling practical implementations of the simplex method applied to real-world data. In particular, the constraint matrix is not perturbed, so our analysis applies to sparse matrices, and all parameters involved are based on measures computable in practice. 
Our analysis is built to know ahead of time exactly what real-world properties our mathematical assumptions are modeling.  In contrast (as discussed in \Cref{sec:smoothed_intro}), there is not agreement on what, if anything, is actually modeled in smoothed analysis by perturbing the constraint matrix. We argue that the result presented here narrows the gap between what we can explain using theoretical frameworks and what is observed in practice.

\paragraph{Proof overview}

The proof of our main technical result \Cref{cor:easytouse} proceeds in three phases. First, in \Cref{sec:goodslacks} we describe the effect of the right-hand side perturbations.
We show that, conditional on a basis being feasible, the corresponding basic solution will have all its non-zero slack values on the inequalities be large. 
Specifically, the smallest non-zero slack will be at least $\frac{\eta}{1240 d \ln(n)}$ with probability at least $0.9$.
As in the previous paragraph, $\eta$ is a measure of the magnitude of the perturbations.
The lower bound on the slacks is derived using a Chernoff bound.
This argument is derived from a similar one used in smoothed analyses of the simplex method \cite{hlz,optimal_smoothed_analysis}.

Second, in \Cref{sub:largeredcosts} we show a dual analog to the above, based on the shadow vertex simplex path from a random auxiliary objective $\vez$ to a largely fixed objective $\eps^{-1} \vecc + \vez$.
We show that, conditional on a basis lying on this shadow path, with probability at least $0.9$ there exists an intermediate objective $t \vecc + \vez$ with $t \in [0,\eps^{-1}]$ for which this basis is optimized and for which the reduced costs on the tight constraints are large. Note that the existence of $t \in [0,\varepsilon^{-1}]$ for which the basis is optimal for $t \vecc + \vez$ is guaranteed by virtue of being on the shadow path. The novelty is that we can guarantee a choice of $t$ yielding large reduced costs on tight constraints.
In order to simplify this part of the analysis, we will throughout \Cref{sec:mathysection} assume that the random objective $\vez$ is sampled from an $L$-log-Lipschitz probability distribution.
This will not affect the final conclusion.

Finally, in \Cref{sec:shadowpath} we show our upper bound on the length of the semi-random shadow vertex path.
As a consequence of our results in \Cref{sec:goodslacks} and \Cref{sub:largeredcosts}, we show that a large fraction of the feasible bases on this path must have both (in the above explained senses) large non-zero slacks and admit an intermediate objective for which all $d$ reduced costs are large.
Afterwards, we argue that whenever two subsequent bases on this path both have both properties, then the pivot step leading from the first to the second of them must make large progress.
This progress is measured through four different potential functions.
Two of these potential functions measure the progress of the intermediate objective vector $t \vecc + \vez$ on the line segment from $\vez$ to $\eps^{-1}\vecc + \vez$.
The third and fourth potentials encode progress in the objective value and auxiliary objective value respectively.
A majority of pivot steps will consume a certain amount of at least one of these four potential functions.
The total amount of potential available is bounded, which leads to our upper bound on the expected number of pivot steps.

\subsection{Related Work}
The simplex method has been studied in the worst case complexity framework under various additional assumptions.
One line of work has assumed that the constraint matrix $A \in \Z^{n \times d}$ is integer and has every square submatrix have a determinant which is bounded from above in absolute value.
Under this assumption, upper bounds on the running time of the simplex method have been proven \cite{Eisenbrand2016, conf/icalp/BR13, dadushhahnle, dyerfrieze1994, Bonifas2012}. The semi-random shadow vertex pivot rule of \cite{dadushhahnle} strongly informed our own.
Approximating the maximum subdeterminant is known to be NP-hard \cite{Summa2014}.
Similar results are available for a closely related condition measure, the circuit imbalance \cite{farbod-bento-laci-survey}.
A study estimating the circuit imbalance on the instances in MIPLIB 2017 \cite{jakub-martin-experimental} found it to require much higher numerical precision than is available with IEEE 754 double precision floating point arithmetic.

A parallel line of work investigates the worst-case performance of the simplex method when the slack values and reduced costs are bounded away from zero \cite{kitaharamizuno2011primal,kitaharamizuno2012dual,tanomiyashirokitahara2019steepest}.
These papers extend an earlier result about the policy iteration algorithm for Markov Decision Processes with bounded discount rate \cite{Ye2005}, and are similar to a calculation done by Dantzig \cite{Dantzig1990origins}.
The condition number underlying this line of work is NP-hard to compute as shown in \cite{kunosanotsuruda2018computing}.
The same paper also estimated this condition number on the LP instances in the NETLIB test problem set \cite{gay1985electronic}, though the measurement approach they used severely limits what conclusions can be drawn based on these estimates.
The use of maximal and minimal non-zero slacks was a direct inspiration for our analysis, although we have the minimal non-zero slack be probabilistically bounded using perturbations instead of deterministically by assuming structure in the input data.
 
In our by-the-book analysis, we use a simplex method with bound perturbations to obtain provable running time guarantees.
Bound perturbations are often incorporated in codes with the stated purpose of preventing issues of degeneracy and stalling \cite{marosbook, koberstein}.
Other methods to prevent stalling are possible, including the use of expanding tolerances or minimum step sizes \cite{Gill1989}.
In a theoretical regime where bounds may not be changed, other approaches are available \cite{Bland1977, murtydegeneracy, Kukharenko2024}.

A technique akin to, but developed independently of, bound perturbations was previously used in \cite{kelnerspielman} to find a weakly polynomial-time algorithm for linear programming. Although aspects of what they do are similar to what we do here, their arguments adapted to our setting would produce exponentially worse running time dependencies on the parameters we consider.

\section{Observations}\label{sec:bythebooksimplexfindings}
In the following we discuss our findings in the several open source codes and user manuals which provide evidence and specificity in support of our claims in  \Cref{sub:introbythebookproperties}. The order of the solvers is alphabetical.

\paragraph{Glop}
As part of Google's software suite OR-Tools \cite{ortools}, Glop is an open-source LP solver. For any finite values in the input LP, Glop by default\footnote{\url{https://github.com/google/or-tools/blob/9b770/ortools/glop/parameters.proto\#L481}} allows numbers with absolute values as low as $10^{-30}$ and as high as $10^{30}$.
However, in order to increase sparsity, by default any floating-point numbers with absolute value below $10^{-14}$ are set to zero\footnote{\url{https://github.com/google/or-tools/blob/9b770/ortools/glop/parameters.proto\#L183}}. Thus, some restrictions on the scaling of the LP are inherent in the solver.

Glop has configurable primal and dual feasibility tolerances, which have a value of $10^{-6}$ by default\footnote{\url{https://github.com/google/or-tools/blob/9b770/ortools/glop/parameters.proto\#L251}}.
In the dual simplex method, Glop can perturb costs before starting.
When this is done, the default perturbation on the $i$'th element of the cost vector $\vecc$ is given by\footnote{\url{https://github.com/google/or-tools/blob/9b770/ortools/glop/reduced_costs.cc\#L255}}
\begin{align*}
r_i \cdot (10^{-5} \vecc_i + 10^{-7} \norm{\vecc}_\infty),
\end{align*}
where $r_i$ is sampled uniformly from the interval $[1,2]$. This cost perturbation
is turned off by default\footnote{\url{https://github.com/google/or-tools/blob/9b770/ortools/glop/parameters.proto\#L415}}.

\paragraph{Gurobi}
The Gurobi user manual recommends that any right-hand side coefficients are scaled such that their absolute value is $10^4$ or less \cite[Section 7.3.4]{gurobi}.
The same recommendation is given with regards to the objective coefficients and variable bounds.
Similarly, they recommend that every LP has an optimal value less than $10^4$.
The latest version of Gurobi includes 3 different automatic scaling modes, but the documentation does not specify what algorithms are used or when they are used.

In Gurobi, any constraint coefficient whose absolute value is smaller than $10^{-13}$ is treated as zero \cite[Section 7.3.6]{gurobi}.
The user manual recommends that the user scales the input problem such that all non-zero matrix coefficients have their absolute value between $10^{-3}$ and $10^{6}$, and that they span no more than $6$ orders of magnitude.
The default big-$M$ value (essentially, the finite number that stands in for `infinity' in ILP formulations) is $10^6$, further indicating that coefficients larger than this are undesirable.

The primal feasibility tolerance defaults to $10^{-6}$ but is configurable to any value between $10^{-2}$ and $10^{-9}$. The same holds for the optimality tolerance.
The primal tolerance is described as follows \cite[Section 7.3]{gurobi}: ``$\vea \cdot \vex \leq b$ will be considered to hold if $(\vea \ast \vex) - b \leq$ FeasibilityTol.''  Here, `$\cdot$' is the exact inner product, whereas `$\ast$' is the computed product (subject to errors).

The Gurobi user manual further indicates that a condition number of $10^{12}$ for a basic submatrix is considered large \cite[Section 7.4]{gurobi}.
The Gurobi solver warns the user of large bounds when a variable has a bound with absolute value that is $10^{10}$ or larger. It warns the user of large objective coefficients when there is a coefficient with absolute value $10^{10}$ or larger. 

Conversation with the developers of Gurobi confirmed that perturbations are actively used in all their simplex codes.
Initial perturbations have magnitude proportional to the feasibility tolerance and perturbation size may grow when the simplex method makes insufficient progress.

\paragraph{HiGHS}
In the open-source MIP solver HiGHS \cite{highs}, the simplex method incorporates two scaling methods (one using equilibration and one based on maximum value\footnote{\url{https://ergo-code.github.io/HiGHS/dev/options/definitions/\#option-simplex-scale-strategy}}). If all non-zero elements of the constraint matrix are between $0.2$ and $5.0$ then no automatic scaling will take place\footnote{\url{https://github.com/ERGO-Code/HiGHS/blob/c67f06df422ce1faff4089033513874b8/highs/lp_data/HighsLpUtils.cpp\#L861}}. Notably, this automatic scaling is not performed for the interior-point method.

HiGHS has configurable parameters for the smallest non-zero and largest finite entries that may be present in the constraint matrix\footnote{\url{https://ergo-code.github.io/HiGHS/dev/options/definitions/\#option-small-matrix-value}}.
By default, any matrix entry whose absolute value is smaller than $10^{-9}$ is treated as $0$,
and any matrix entry whose absolute value is larger than $10^{15}$ is treated as infinity.

There are configurable primal and dual feasibility tolerances which equal $10^{-7}$ by default and which can be tuned to be as low as $10^{-10}$.
In the primal simplex method, when a bound is found to be infeasible, it is shifted to make it feasible by at least the feasibility tolerance, plus a uniformly random quantity between $0$ and the feasibility tolerance\footnote{\url{https://github.com/ERGO-Code/HiGHS/blob/c67f06df422ce1faff4089033513874b8/highs/simplex/HEkkPrimal.cpp\#L2793}}.
In the dual simplex method, the solver perturbs the cost vector already during initialization\footnote{\url{https://github.com/ERGO-Code/HiGHS/blob/c67f06df422ce1faff4089033513874b8/highs/simplex/HEkkDual.cpp\#L126}}.
These perturbations are uniformly distributed in an interval.

The PhD thesis of Qi Huangfu \cite{qi-huangfu} documents the development of the HiGHS simplex method solver. In his work, he suggests that feasibility and optimality  tolerances are typically on the order of $10^{-6}$ or $10^{-7}$. For HiGHS he describes\footnote{See also \url{https://github.com/ERGO-Code/HiGHS/blob/c67f06df422ce1faff4089/highs/simplex/HEkk.cpp\#L2488}} an explicit choice of value for cost perturbation of
\[(10^{-5}|\vecc_{j}| + 10^{-7}\norm{\vecc}_\infty +10 \cdot \opttol) (r+1), \]
where $\opttol > 0$ is the dual feasibility tolerance and $r$ is a uniformly random number in $(0,1]$. In particular, the above cost perturbation incorporates a term whose magnitude is proportional to the objective and a term with absolute magnitude. In our analysis, we model these terms as being the same by assuming our LP is sufficiently well-scaled.

The HiGHS documentation recommends users to scale their primal variables such that the optimal solution has its entries be of order unity.
The documentation further states that ``this typically occurs if all objective and constraint matrix coefficients, as well as finite variables and constraint bounds, are of order unity.''

\paragraph{MOSEK}

The MOSEK instruction manual \cite{mosekmanual} mentions that problems containing large or small coefficients, outside the range $[10^{-7}, 10^9]$, are often hard to solve\footnote{\url{https://docs.mosek.com/latest/pythonapi/presolver.html\#index-5}}.
Although it includes a method to automatically scale variables, no specifics are reported about how this scaling is performed or under what circumstances.
The solver will print an error\footnote{\url{https://docs.mosek.com/latest/pythonapi/parameters.html\#mosek.dparam.data_tol_aij_large}} to warn the user if any matrix element is larger than $10^{10}$, if a cost entry is larger than $10^8$ or if a bound is larger than $10^8$.
If an upper and a lower bound for a variable differ by less than $10^{-8}$, then MOSEK will consider the two numbers to be equal\footnote{\url{https://docs.mosek.com/latest/pythonapi/parameters.html\#mosek.dparam.data_tol_x}}. The MOSEK Modeling Cookbook \cite[Section 7.3]{mosekmodeling} considers ``a model to be badly scaled if variables are measured in very different scales, or constraints or bounds are measured on very different scales.''

The absolute tolerance\footnote{\url{https://docs.mosek.com/latest/pythonapi/parameters.html\#mosek.dparam.basis_tol_x}} for the primal and dual infeasibility both default to $10^{-6}$.
A technical report \cite{mosekfarkas} mentions that, when troubleshooting infeasible problems, the most important constraints  are those whose relative values in the Farkas certificate are at least $10^{-8}$.

Interviews with the developers of MOSEK confirm that perturbations are actively used in all their simplex codes, and that adding larger perturbations are an effective tool when the simplex method stalls.
Initial perturbations have magnitude proportional to the feasibility tolerance.

\paragraph{Experiments on MIPLIB 2017}\label{par:MIPLIB}
Our main technical result \Cref{cor:easytouse} is stated in terms of the \emph{mean width} of the feasible set of the linear program (after bounds are perturbed).
The mean width of a polyhedron $P$ is defined as the expected value of
\begin{align*}
    \operatorname{maximize} \quad & \vetheta^\T (\vex - \vex')\\
    \operatorname{subject~to} \quad & \vex, \vex' \in P,
\end{align*}
where $\vetheta \in \sfe$ is uniformly random distributed on the unit sphere.

In order to empirically ground this quantity, we performed experiments to measure it on two benchmark libraries: the NETLIB library of LPs \cite{gay1985electronic} and the LP relaxations of the MIPLIB 2017 benchmark set of Mixed Integer Linear Programs \cite{miplib}.
The experiment is rudimentary in design, foregoing any considerations of automatic scaling or presolve.

For every instance, we sampled the width in 100 random directions as follows.
Due to Milman's concentration of measure phenomenon (see, e.g., \cite{ball}), the width is quite sharply concentrated around the mean.

We sampled an objective vector $\vez$ with entries sampled independently at random from a Gaussian distribution with mean $0$ and standard deviation $1$.
We loaded the \texttt{MPS} files using \texttt{Gurobi 13.0.1} and used it to obtain the LP relaxation.
Over this LP feasible set we maximized and minimized the objective $\vez^\T \vex$.
The difference between the two values was divided by $\norm{\vez}$ to obtain the width in direction $\frac{\vez}{\norm{\vez}}$, which is a uniformly random unit direction.
The code for this experiment can be found at \url{https://github.com/sophiehuiberts/miplib-meanwidth}, and it was run on an Intel i5-14600KF CPU.

Besides \texttt{neos-5114902-kasavu}, every individual minimization and maximization completed within the deterministic work limit of 300.
Any instance finding an unbounded or infeasible status code (\texttt{INFEASIBLE}, \texttt{INF\_OR\_UNBD} or \texttt{UNBOUNDED}) was dropped.
On NETLIB, 58 out of 114 instances remained after this phase.
On MIPLIB, 186 out of 240 instances remained.

All sample means obtained were between $10^{-1}$ and $10^{9}$.
In MIPLIB, the sample means were relatively small. 46 instances have sample means in $[10^{-1},10^1]$, 65 have sample means in $[10^1,10^2]$, 54 in $[10^2,10^4]$, 10 in $[10^4,10^6]$ and the remaining two instances with sample means between $10^6$ and $10^9$.
The mean widths on NETLIB were larger on average, with 7 instances each having sample mean widths in $[10^{-1},10^1]$ and $[10^1,10^2]$, 26 instances in $[10^2,10^4]$, 12 in $[10^4,10^6]$ and the remaining 6 in $[10^6,10^9]$.

The full results are drawn in \Cref{fig:meanwidth}, with the sample mean width on the vertical axis and the total variable count of the instance on the horizontal axis.
Besides the results for NETLIB and MIPLIB benchmark instances, the figure includes sample mean widths for unit cubes $[0,1]^d$ in various dimensions.
The mean width of the unit cube converges to $\sqrt{\frac{2d}{\pi}}$, and most MIPLIB instances are below that line.

\begin{figure}[!ht]
\centering
\includegraphics[width=0.85\textwidth]{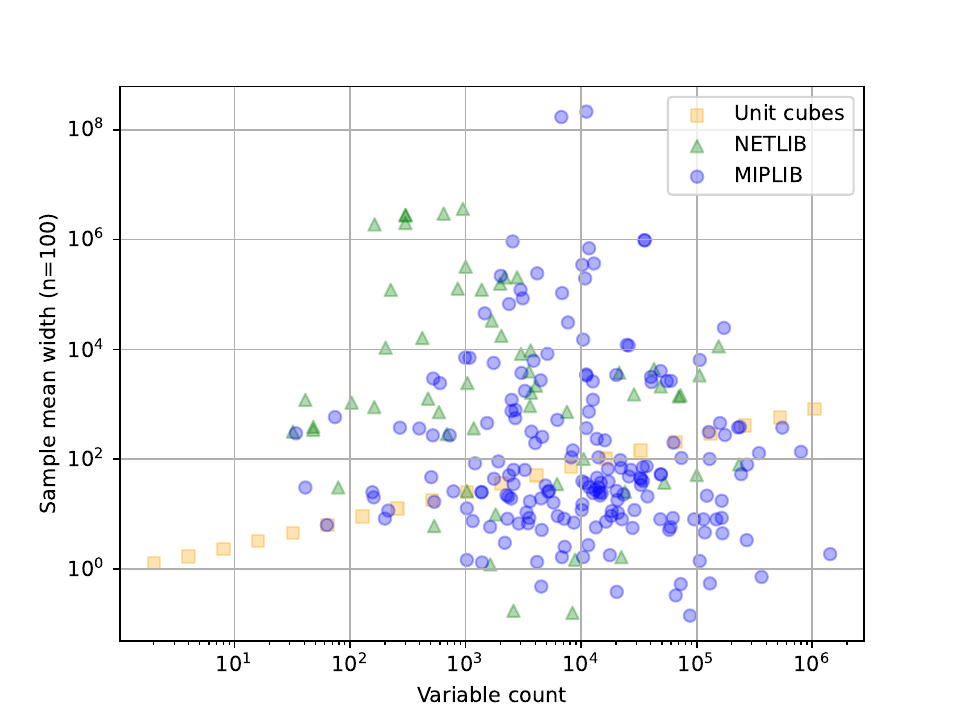}
\caption{Measurement results.}
\label{fig:meanwidth}
\end{figure}

We see that many instances have sample mean widths that are small.
We take these results as preliminary evidence that the mean width of the feasible set is bounded from above for well-scaled linear programs.

Although many instances have small mean widths, there is a sizable fraction with large mean width.
This is especially pronounced for the NETLIB set.
We suspect that the instances the instances with high mean width are poorly scaled.
More work is needed to verify whether indeed the better scaled instances tend to have lower mean widths.

In production software, the simplex method is run on linear programs after they have been presolved and automatically rescaled.
In our experiment we measured the mean width of the feasible set as given by the user, instead of the mean width of the presolved and scaled feasible set.
This may affect the observed results.
An in-depth computational study of the interaction of mean width and presolve/scaling is justified.

\section{Preliminaries}\label{sec:prelims}

We write $[n] = \{1,\dots,n\}$ and $\binom{[n]}{d} = \{B \subseteq [n] : \abs{B}=d\}$.
We take $[0]=\emptyset$ to be the empty set.
The $d \times d$ identity matrix is denoted $I_{d\times d}$.
Given a matrix $A \in \R^{n \times d}$ and a set $S \subseteq [n]$ we write $A_S$ to indicate the submatrix consisting of rows indicated by indexes in $S$, and its associated map $A_S : \R^d \to \R^S$.
A set $B \in \binom{[n]}{d}$ is called a \emph{basis} if the square matrix $A_B$ has full rank.

Vectors are bolded. The all-ones vector is $\ve1_d$, where the dimension of the vector may be omitted and denoted as just $\ve1$ if clear from context. We denote the $i$'th vector in the standard coordinate basis is $\vece^{i}$.
For a vector $\veb \in \R^n$ we write $\veb_S \in \R^S$ to indicate the restriction of $b$ to the entries indexed in $S$.
When $A, \veb$ are clear from context, we define the basic solution for a basis $B$ as $\vex^B = A_B^{-1} \veb_B$.
Scalars are denoted by lowercase letters, vectors by bold lowercase letters, and sets and matrices by uppercase letters.
For two vectors $\vev, \vev' \in \R^d$ we denote by $[\vev,\vev'] \subset \R^d$ the straight line segment connecting them.
The unit sphere is denoted by $\sfe = \{\vex \in \R^d : \norm{\vex}=1\}$.

\subsection{The Shadow Vertex Simplex Method}\label{sec:shadow_vertex_magic}

Our analysis makes use of a particular simplex method called the shadow vertex simplex method originally introduced by Gass and Saaty in \cite{gas55} in the context of parametric linear programming. A textbook introduction that proves the standard properties we rely on may be found in \cite{bwcachapter}. 

The shadow vertex simplex method works as follows. 
We fix a linear program
\begin{align*}
    \operatorname{maximize} \quad & \vecc^\T \vex \\
    \operatorname{subject~to} \quad & A\vex \leq \veb,
\end{align*} with $ A \in \R^{n \times d}$ and $ \veb \in \R^n$ 
as input and start with an initial feasible basis $B$. We obtain an auxiliary objective $\vez \in \mathbb{R}^{d}$ as follows. Let $\vez$ be generically chosen such that $\vez A_{B}^{-1} > \veo$ holds. 
If $\vecc A_{B}^{-1} \geq \veo$, already the current basis is optimal and we are done. Otherwise, there exists a coordinate $i$ such that $(\vecc A_{B}^{-1})_{i} < 0$. Over all those indices $i$ for which $(\vecc A_{B}^{-1})_{i} < 0$, we apply the following ratio test. We define 
\[\lambda = \min_i\left(- \frac{(\vez A_{B}^{-1})_{i}}{(\vecc A_{B}^{-1})_{i}}: (\vecc A_{B}^{-1})_{i} < 0 \right),\]
and let $p$ be the argument minimizing the same function. Then, by the minimality of $\lambda$ and positivity of $\vez A_{B}^{-1}$, $((\vez + \lambda \vecc)A_{B}^{-1})_{i} > 0$ for all $i \neq p$ and $((\vez + \lambda \vecc)A_{B}^{-1})_{p} = 0$. By non-degeneracy, there is a unique basis $B^{1} \neq B$ such that $B^{1} \supseteq B \setminus \{p\}$.   We let $B^{1}$ be the new basis and $\vez+ \lambda\vecc$ as the new auxiliary objective and continue the process from $B^{1}$ following a sequence of adjacent bases $B^{1}, B^{2},\dots, B^{k}$ until reaching an optimal basis. We note that the genericity of $\vez$ ensures that for each basis $B^{i}$ on the path, there exists $t > 0$ such that $(\vez+t \vecc)A_{B^{i}}^{-1} > 0$. The existence of such a $t$ is what allows the loop to continue even though $(\vez + \lambda \vecc)A_{B^{1}}^{-1}$ is only positive in the coordinates of $B^{1} \cap B$ and $0$ otherwise.  

There are two standard geometric interpretations of the shadow vertex simplex method. The first is in terms of a two dimensional linear projection. Let  $\vev = A_{B}^{-1}\veb_{B}$ be the vertex for the initial basis $B$. Then $\vez$ is a generic vector optimized uniquely at $\vev$. Then orthogonally projecting onto the $(\vez,\vecc)$-plane gives a polygon. By construction, $\vev$ and the $\vecc$-maximizer are both projected onto the boundary of that polygon and are the vertices maximizing the first and second coordinate respectively. A path in the polygon between the projection of $\vev$ and the projection of the $\vecc$-maximizer corresponds uniquely to a path on the polytope of the same length. This path is precisely the path chosen by the shadow vertex simplex method with auxiliary vector $\vez$. 

The second geometric interpretation relies on a dual perspective on a polytope. Given any feasible basis $B$, the cone $\R^{d}_{\geq 0} A_{B}$ is called the normal cone of $B$ or equivalently the normal cone of the vertex $A_B^{-1}\veb_{B}$. It is precisely the cone of all objectives optimized at that vertex. Then the shadow vertex simplex method functions by choosing a generic vector $\vez$ in the normal cone of the initial vertex $\vev$. Then the path traversed consists of all vertices with normal cones that intersect the line segment $[\vez, \vecc]$. Equivalently, that is precisely the set of normal cones of feasible bases that intersect the ray $\{\vez +\lambda \vecc: \lambda \geq 0\}$. We call the points on $[\vez, \vecc]$ or the ray $\vez + \lambda \vecc$ intermediate objectives.

When a linear program has no feasible solutions, any shadow path on it will be considered to have length $0$.

These geometric interpretations are useful for understanding the algorithm and explaining its efficiency in average case, smoothed analysis, and in the by-the-book analysis we undertake here.

\section{A Simplex Method with Bound Perturbations}\label{sec:algorithm}

In this section, we overview and describe how to form a complete two-phase simplex method with running time that can be analyzed using \Cref{cor:easytouse}.
For notational simplicity we assume that we are given a linear program of the form
\begin{align}
    \operatorname{maximize} \quad & \vecc^\T \vex \tag{input-LP} \label{eq:input-LP}\\
    \operatorname{subject~to} \quad &  A\vex \leq \veb \nonumber \\
    & \vex \geq \veo \nonumber
\end{align}
with $A \in \R^{n \times d}$. We assume that every row of $A$ has $\ell_2$ norm at most $1$.
Any LP can easily be transformed to this format.
This specific form of LP will aid in a simpler presentation of our Phase I method.
We will solve this LP \emph{up to tolerances} $\feastol > 0$ and $\opttol > 0$.
See \eqref{eq:apx-feas} and \eqref{eq:apx-opt} for definitions of solving up to tolerances.

To start, we perturb our bound vector $\veo$ and the right-hand side vector $\veb$.
Unlike the uniform distribution used in solvers, we perturb according to the following distribution:
\begin{definition}\label{def:ourperturbations}
    Let $\eta, \gamma \in \mathbb{R}_{>0}$, and let $\vev \in \mathbb{R}^{k}$. Then define for each $i \in [k]$, 
    \[f_{i}(t) = \frac{1}{2\eta} e^{-\abs{t - \vev_i - \gamma\eta}/\eta}.\] 
    Define a random vector $\hat \vev$ to be $(\vev, \eta,\gamma)$-exponentially distributed if its entries $\hat \vev_{i}$ are independently distributed with each entry's probability density function given by $f_{i}$.
\end{definition}
We sample $(-\hat \veo, \hat \veb) \in \R^{d+n}$ to be $\left((\veo,\veb), \frac{\feastol}{4\ln(d+n)}, 2\ln(d+n)\right)$-exponentially distributed.
Strong tail bounds will ensure that these perturbed bounds are suitable:
\begin{lemma}\label{lem:perturbationsbounded}
    Let $\eta, \gamma \in \mathbb{R}_{>0}$, and let $\vev \in \R^k$. Let $\hat \vev$ be a $(\vev, \eta,\gamma)$-exponentially distributed random vector. Then 
    \[\Pr\big[\hat \vev_{i} \in [\vev_{i},\vev_{i} + 2\gamma \eta] \text{ for all } i \in[k]\big] \geq 1 - ke^{-\gamma}. \]
\end{lemma}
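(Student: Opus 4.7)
The plan is to recognize that each coordinate $\hat\vev_i$ follows a Laplace distribution centered at $\vev_i + \gamma\eta$ with scale parameter $\eta$, and that the target interval $[\vev_i, \vev_i + 2\gamma\eta]$ is exactly the symmetric interval of radius $\gamma\eta$ around that center. So the event $\hat\vev_i \in [\vev_i, \vev_i + 2\gamma\eta]$ is equivalent to $\abs{\hat\vev_i - (\vev_i + \gamma\eta)} \leq \gamma\eta$, which is a standard Laplace tail event.

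First I would compute the per-coordinate tail probability. Setting $\mu_i := \vev_i + \gamma\eta$, the density $f_i(t) = \frac{1}{2\eta} e^{-\abs{t-\mu_i}/\eta}$ gives
\[
\Pr\!\big[\abs{\hat\vev_i - \mu_i} > \gamma\eta\big] \;=\; 2\int_{\gamma\eta}^{\infty} \frac{1}{2\eta} e^{-s/\eta}\,\dd s \;=\; e^{-\gamma},
\]
where the factor of $2$ accounts for the symmetric two-sided tail and the change of variables $s = t - \mu_i$ turns the integrand into a pure exponential. Equivalently, the probability that $\hat\vev_i$ falls outside $[\vev_i, \vev_i + 2\gamma\eta]$ is exactly $e^{-\gamma}$.

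Second, I would apply a union bound over the $k$ coordinates. Since the entries of $\hat\vev$ are independent (in fact, the argument only needs a union bound, so independence is not essential here), the probability that at least one coordinate escapes its target interval is at most $ke^{-\gamma}$, and taking complements gives the claimed lower bound of $1 - ke^{-\gamma}$ on the simultaneous event.

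There is no real obstacle; the only thing to be careful about is matching the parametrization in \Cref{def:ourperturbations} (center shifted by $\gamma\eta$, scale $\eta$) to the symmetric Laplace tail formula, and making sure the target interval $[\vev_i, \vev_i + 2\gamma\eta]$ is recognized as the radius-$\gamma\eta$ interval around the center $\vev_i + \gamma\eta$. Once that identification is made, the proof reduces to one explicit integral and a union bound.
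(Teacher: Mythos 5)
Your proposal is correct and follows essentially the same route as the paper: a per-coordinate Laplace tail computation showing each coordinate leaves $[\vev_i,\vev_i+2\gamma\eta]$ with probability exactly $e^{-\gamma}$ (the paper computes one side as $\tfrac12 e^{-\gamma}$ and doubles by symmetry, which is the same calculation), followed by a union bound over the $k$ coordinates. No gaps.
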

\begin{proof}
    By applying a union bound we know that
    \[
    1 - \Pr\big[\hat \vev_{i} \in [\vev_{i},\vev_{i} + 2\gamma \eta] \text{ for all } i \in [k]\big] \leq \sum_{i = 1}^{k} \Pr[ \hat \vev_i < \vev_i] + \Pr[ \hat \vev_i > \vev_i+2\gamma\eta].
    \]
    Note that by construction the latter probabilities are symmetrically of the same size, i.e., $\Pr[\hat \vev_i < \vev_i] = \Pr[\hat \vev_i > \vev_i+2\gamma\eta]$
    for every $i \in [k]$.
    Writing down their probability densities, we find
    \begin{align*}
        \Pr[\hat \vev_i < \vev_i] &= \frac{1}{2\eta} \int_{-\infty}^{\vev_i} e^{-\abs{t-\vev_i - \gamma\eta}/\eta}\dd t \\
         &= \frac{1}{2\eta} \int_{-\infty}^{-\gamma\eta} e^{-\abs{t}/\eta}\dd t \\
         &= \frac{1}{2} \int_{-\infty}^{-\gamma} e^{-\abs{t}}\dd t = \frac{1}{2}e^{-\gamma}.
    \end{align*}

    Summing over all $i \in [k]$,
    we see that $1 - \Pr\big[\hat \vev_{i} \in [\vev_{i},\vev_{i} + 2\gamma \eta] \text{ for all } i \in [k]\big] \leq ke^{-\gamma}$, and hence $\Pr\big[\hat \vev_{i} \in [\vev_{i},\vev_{i} + 2\gamma \eta ]\text{ for all } i\in [k]\big] \geq 1 - ke^{-\gamma}.$
\end{proof}
 By plugging in our chosen parameter values, it follows from \Cref{lem:perturbationsbounded} that with probability greater than $1-\frac{1}{d+n}$ we have
\begin{align}
    -\feastol\cdot \ve1_d &\leq \hat \veo \leq \veo \label{eq:perturbationswithintols}\\
    \veb &\leq \hat \veb \leq \veb + \feastol\cdot \ve1_n. \nonumber
\end{align}
In order to ensure that the above inequalities \eqref{eq:perturbationswithintols} are guaranteed to hold for our algorithm's output, we reject any vector $(\hat \veo, \hat \veb)$ which violates these conditions and run the entire algorithm anew with fresh samples $(\hat \veo, \hat \veb)$.
The expected number of samples needed is $(1-\frac{1}{n+d})^{-1}$. Wald's equation tells us that the expected number of pivot steps when this rejection sampling is applied grows by at most a factor $(1-\frac{1}{n+d})^{-1}$ compared to when rejection sampling is inactive.

The simplex method described will, in Phase II, visit vertices of the perturbed LP
\begin{align}
    \operatorname{maximize} \quad & \vecc^\T \vex \tag{perturbed-LP} \label{eq:perturbed-LP}\\
    \operatorname{subject~to} \quad &  A\vex \leq \hat \veb \nonumber \\
    & \vex \geq \hat \veo \nonumber
\end{align}

\paragraph{Sampling Perturbations}
The space of possible perturbation distributions has not been explored systematically.
At the moment of writing it is not clear what options are available when designing a perturbation distribution, nor is it clear how to evaluate the quality of any such distribution.
Nevertheless, we describe here how to sample from the distribution used in this paper as an aid for readers looking to try the above perturbation distribution in their own simplex code.

The general sampling mechanism has four parameters: an offset, a lower bound for rejection sampling, an upper bound for rejection sampling, and a decay rate.
The probability density is then the unique function that has its maximum at the offset, which continuously decays at the prescribed rate when going up or down from the offset to either bound, and which is zero outside the bounds.
Code to sample is given in \Cref{alg:sampling}.

\begin{algorithm}
\caption{General sampling method for perturbations}
\label{alg:sampling}
\begin{algorithmic}
\Require \texttt{offset, lowerReject, upperReject, logLipschitz} $\in \R$
\Repeat
\State \texttt{sign} $\gets \operatorname{Bernoulli}(1/2)$
\State \texttt{uni} $\gets \operatorname{Uniform}[0,1]$
\State \texttt{exponential} $\gets -\ln(\texttt{uni})$
\If{\texttt{sign == 1}}
    \State \texttt{sample $\gets$ offset + logLipschitz $\cdot$ exponential}
\Else
    \State \texttt{sample $\gets$ offset - logLipschitz $\cdot$ exponential}
\EndIf
\Until{\texttt{lowerReject < sample} and \texttt{sample < upperReject}}
\State return \texttt{sample}
\end{algorithmic}
\end{algorithm}

In the main mathematical analysis of \Cref{sec:mathysection}, we assume that no rejection sampling is performed, i.e., that \texttt{lowerReject}$ = -\infty$ and \texttt{upperReject}$ = \infty$.
In order to bound the expected running time with finite rejection bounds, we choose \texttt{lowerReject}$ = $\texttt{offset}$ - 2\ln(d+n)$\texttt{logLipschitz} and \texttt{upperReject}$ = $\texttt{offset}$ + 2\ln(d+n)$\texttt{logLipschitz}.
Using \Cref{lem:perturbationsbounded}, we see that the total probability of any sample out of $n+d$ being rejected is less than $1-\frac{1}{n+d}$.

In our analysis we assume that we can solve systems of linear equations exactly, and we search for a solution that is feasible up to the feasibility tolerance.
As such, for our analysis we choose $\texttt{lowerReject} = 0$ and $\texttt{upperReject} = \feastol$ as per \eqref{eq:perturbationswithintols}.
Solving for the other parameters gives $\texttt{offset} = \feastol/2$ and $\texttt{logLipschitz} = \frac{\feastol}{4\ln(n+d)}$.

\subsection{Initial Basis}
Traditionally, Phase I LPs have a trivial basic feasible solution from which to start.  However, our tools only allow us to move from the optimum of a random objective to the optimum of a fixed objective, or from the optimum of a fixed objective to the optimum of a random objective.  In order to conclude Phase I, we need to ensure that the introduced artificial variable becomes zero, which is to say, our Phase I procedure concludes at the optimum of a fixed objective.  As such, in order for us to bound the time necessary to solve the Phase I LP, our Phase I procedure must start from the optimum of a random objective.  In this section, we show how to do this by solving the LP (\ref{eq:phase0}) below.

Let $\vetheta \in \sfe$ be uniformly distributed. Consider the linear program
\begin{align}
    \operatorname{maximize} \quad &\vetheta^\T \vex \nonumber \\
    \operatorname{subject~to} \quad & A \vex \leq \feastol\sqrt{\ln n}\ve1 + (\hat \veb - \veb) \nonumber \\
    & \vex \geq \hat \veo + \frac{\feastol}{2}\ve1. \tag{Initial Basis} \label{eq:phase0}
\end{align}
This linear program will be solved in order to find the initial basis from which Phase I will be started.
\begin{lemma}\label{lem:phase0feas}
    The point $\vex = \hat \veo + (\feastol/2)\ve1$ is a basic solution to \eqref{eq:phase0}.
    Conditional on \eqref{eq:perturbationswithintols}, with probability at least $1-1/n$ this basic solution is feasible.
\end{lemma}
\begin{proof}
    We can observe that $\vex \geq \hat \veo + \frac{\feastol}{2}\ve1$ is part of the system defining \eqref{eq:phaseI}.
    The point satisfies these at equality, and they are clearly linearly independent.
    Thus, the point is a basic solution.
    
    The point is feasible if and only if
    $A(\hat \veo + \frac{\feastol}{2}\ve1) \leq (\hat \veb - \veb) + (\feastol\sqrt{\ln n})\ve1$.
    Since we know $\hat \veb$ satisfies $\hat \veb \geq \veb$, it suffices to show that
    $A(\hat \veo + \frac{\feastol}{2}\ve1) \leq (\feastol\sqrt{\ln n})\ve1$ holds with probability at least $1- 1/n$.

    Let $f$ denote the probability density function of $\hat \veo + \frac{\feastol}{2}\ve1$. Recall that we chose parameters for our distribution function $\gamma = \frac{\feastol}{4 \ln(d+n)}$ and $\eta = 2 \ln(d+n)$. By definition, the distribution function we chose for $\hat \veo$ is the Laplace distribution centered at $\veo - \eta \gamma$ and so
    \[\E[\hat\veo + \frac{\feastol}{2}\ve1] = \veo - \eta \gamma \ve1 + \frac{\feastol}{2}\ve1 =-\frac{\feastol}{4 \ln(d+n)}(2 \ln(d+n))\ve1 +   \frac{\feastol}{2}\ve1  = \ve0. \]
    Then, since linear transformations commute with expectations, $\E[A(\hat\veo + \frac{\feastol}{2}\ve1)] = \veo$.

    Our chosen distribution of the right-hand side perturbations is symmetric around the mean.
    In particular, the vector $\hat\veo + \frac{\feastol}{2}\ve1$
    follows a $(\veo, \eta,\gamma)$ distribution.
    Even after conditioning on \eqref{eq:perturbationswithintols}, its distribution is invariant under arbitrary sign flips.
    Hence the probability density function $f$ is invariant under flipping any signs:
    for every $\vey \in \R^d$ we have \[f(\vey_1,\dots,\vey_d) = f(\abs{\vey_1},\dots,\abs{\vey_d}) = f(\pm\vey_1,\dots,\pm\vey_d).\]
    Thus we get strong concentration according to Hoeffding's inequality.
    
    Let $\ve\eps \in \{+1,-1\}^d$ be sampled uniformly at random.
    For any $i \in [n]$ we find
    \begin{align*}
        \Pr_{\hat\veo}[A_i^\T (\hat \veo + \frac{\feastol}{2}\ve1) \geq (\feastol\sqrt{\ln n})\ve1]
        &= \Pr_{\hat\veo}\left[\sum_{j=1}^d A_{ij} (\hat \veo_j + \frac{\feastol}{2}) \geq \feastol\sqrt{\ln n}\right] \\
        &= \E_{\hat\veo}\left[ \Pr_{\eps}\left[\sum_{j=1}^d \eps_j A_{ij} (\hat \veo_j + \frac{\feastol}{2}) \geq \feastol\sqrt{\ln n}\right] \right]\\
        &\leq \E_{\hat\veo}\left[ \exp\left(-\frac{2(\feastol\sqrt{\ln n})^2}{\sum_{j=1}^d (2 A_{ij} \abs{\hat\veo_j + \frac{\feastol}{2}} )^2 }\right) \right] \\
        &\leq \E_{\hat\veo}\left[ \exp\left(-\frac{2(\feastol\sqrt{\ln n})^2}{\sum_{j=1}^d (A_{ij} \feastol )^2 }\right) \right] \\
        &\leq \E_{\hat\veo}\left[ \exp\left(-2(\sqrt{\ln n})^2\right) \right] \\
        &=  1/n^2.
    \end{align*}
    The first inequality is Hoeffding's, the second uses that $\veo \geq \hat\veo \geq -\feastol\ve1$ holds by \eqref{eq:perturbationswithintols}, and the third inequality used that the rows of $A$ each have Euclidean norm at most $1$.
    Hence, any particular inequality constraint has probability at most $1/n^2$ of being violated by $\vex$.
    Taking the union bound over all choices of $i \in [n]$,
    we find that $
    \Pr_{\hat\veo}[A(\hat \veo + \frac{\feastol}{2}\ve1) \leq (\feastol\sqrt{\ln n})\ve1]
     \geq 1-1/n$
    as required.
\end{proof}

Once again, since the failure probability is small enough we may use rejection sampling in order to assume that feasibility holds when necessary.
The increase in expected number of pivot steps is no greater than a factor $(1-\frac{1}{n})^{-1}$ by another application of Wald's equation.

Before Phase I can start, we must pivot from $\vex = \hat\veo+(\feastol/2)\ve1$ to the basis optimizing \eqref{eq:phase0}.
Notice that $\vex=\hat\veo+(\feastol/2)\ve1$ maximizes the objective $-\sum_{i=1}^d \vex_i$ over the same constraints.

\begin{lemma}\label{lem:phase0pivots}
    The optimal basic feasible solution to \eqref{eq:phase0} can be found in
    \[
    O\left(\sqrt{\frac{M d^3 \ln(n)}{\eta} \ln\big(d \ln(n) \cdot \norm{\ve1_d T^{-1}}_1 \big)}\right)
    \]
    pivot steps, where $M$ is the expected mean with of \eqref{eq:phase0} and $T$ is any invertible submatrix of $\begin{pmatrix}
        A \\
        - I_{d\times d}
    \end{pmatrix}$
    for which $\ve1 T^{-1} \geq 0$.
\end{lemma}
\begin{proof}
    We will apply \Cref{cor:easytouse}.
    When we consider \eqref{eq:phase0} in standard inequality form, we
    observe that all rows of the extended constraint matrix $\begin{pmatrix}A\\ -I_{d \times d}\end{pmatrix}$ have Euclidean norm at most $1$.
    The right-hand side vector consists of
    $\feastol\sqrt{\ln n}\ve1 + (\hat \veb - \veb)$
    and $- \hat \veo - \frac{\feastol}{2}\ve1$, which are respectively
    $(\feastol\sqrt{\ln n}\ve1,\eta,\gamma)$-exponentially distributed and $(-\frac{\feastol}{2}\ve1,\eta,\gamma)$-exponentially distributed with $\eta = \frac{\feastol}{4\ln(d+n)}$ and $\gamma=2\ln(d+n)$.

    Choose $\vecc = -\ve1 = -\sum_{i=1}^d \vece_i$ and observe that since $\vecc+\vetheta\leq\veo$, $\vecc + \vetheta$ is still maximized by $\vex=\hat\veo+(\feastol/2)\ve1$.
    Thus we can choose $\eps = 1$ for our initial objective $\vecc + \eps\vetheta$, observing that $\eps^{-1} = 1 \geq \frac{1}{1000d^3\gamma\norm{\vecc}_2}$ as required.
    Thus we apply \Cref{cor:easytouse} to find an upper bound on the expected number of pivot steps from $\vecc + \eps\vetheta$ to $\vetheta$ of
    \[102 + 316\sqrt{\frac{M d^3 \ln(n+d)}{\eta} \ln\left(\frac{6742 d^{3} N \ln^2(n+d)}{\eta} \right)},\]
    where $M$ is the expected mean width of the feasible set \eqref{eq:phase0}, defined in \Cref{def:meanwidth},
    and where $N$ is defined in \Cref{lem:N}.
    To upper bound $N$, since $N$ is defined as a minimizer across all pairs of bases, we must in particular have $N \leq N_{B^0,B^1}$ where for $B^1$ we take the constraints $\vex \geq \hat \veo + \frac{\feastol}{2}\ve1$, i.e. the basis which maximizes the objective $\vecc$, and for $B^0$ we take any basis whose basic submatrix $T$ of $\begin{pmatrix}
        A \\
        - I_{d\times d}
    \end{pmatrix}$
    satisfies $\ve1 T^{-1} \geq \ve0$. This basis will be dual feasible for minimizing objective $\vecc$.
    Let $\vef$ abbreviate the (unperturbed) right-hand side vector of \eqref{eq:phase0}, such that the two sets of basic inequalities are equivalently expressed as
    $A_{B^0}\vex \leq \vef_{B^0}$ and $A_{B^1}\vex \leq \vef_{B^1}$.
    
    By definition,
    \[N_{B^0,B^1} = \vecc^\T(A_{B^1}^{-1}\vef_{B^1} - A_{B^0}^{-1} \vef_{B^0}) + \gamma\eta(\norm{\vecc^\T A_{B^{1}}^{-1}}_{1} + \norm{\vecc^\T A_{B^{0}}^{-1}}_1).\]
    Once again, by our choice of $\gamma$ and $\eta$, we have $\gamma \eta = \frac{\feastol}{2}$.
    The basis $B^1$ indicates the constraints $-\vex \leq -\hat\veo - \frac{\feastol}{2}\ve1$.
    When computing $N$ we disregard the perturbation and get
    $A_{B^{1}}^{-1}\vef_{B^{1}} = \frac{\feastol}{2}\ve1$. Furthermore, $A_{B^{0}} = T$ by definition, and $A_{B^{1}} = -I_{d \times d}$ also by definition. 
    \begin{align*}
        N_{B^0,B^1} &= \vecc^\T(A_{B^1}^{-1}\vef_{B^1} - A_{B^0}^{-1} \vef_{B^0}) + \gamma\eta(\norm{\vecc^\T A_{B^{1}}^{-1}}_{1} + \norm{\vecc^\T A_{B^{0}}^{-1}}_1) \\
        &= \vecc^{\T}(\frac{\feastol}{2}\ve1 -T^{-1}\vef_{B^{0}}) + \frac{\feastol}{2}(\|\vecc\|_{1} + \|\vecc^{\T} T^{-1}\|_{1}). \\ 
        &= \vecc^{\T}(\frac{\feastol}{2}\ve1 - T^{-1}\vef_{B^{0}}) + \frac{\feastol}{2}(\|\vecc^{\T} T^{-1}\|_{1}+d). \\ 
        &\leq \vecc^\T(\frac{\feastol}{2}\ve1) + \|\vecc^{\T}T^{-1}\|_1\cdot\|\vef_{B^{0}}\|_\infty  + \frac{\feastol}{2}(\|\vecc^{\T} T^{-1}\|_{1}+d) \\ 
        &= \frac{\feastol}{2}(\vecc^{\T} \ve1 +d) +\|\vecc^{\T}T^{-1}\|_1\cdot\|\vef_{B^{0}}\|_\infty + \frac{\feastol}{2}\|\vecc^{\T} T^{-1}\|_{1} \\
        &= \|\vecc^{\T}T^{-1}\|_1\cdot\|\vef_{B^{0}}\|_\infty + \frac{\feastol}{2}\|\vecc^{\T} T^{-1}\|_{1},
    \end{align*}
    where to go from the second to last line we use that $\mathbf{c} = -\ve1$, so $\vecc^{\T} \ve1 = -d$. Finally, note that $\vef_{B^0}$ is a subvector of $\begin{pmatrix}\feastol\sqrt{\ln n}\ve1 \\ - \frac{\feastol}{2}\ve1 \end{pmatrix}$.
    Hence $\norm{\vef_{B^{0}}}_\infty \leq \feastol\sqrt{\ln n}$.
    
    Putting this together, we find $N \leq N_{B^0,B^1} \leq \feastol (\sqrt{\ln n}+\frac 1 2) d \norm{\ve1_d T^{-1}}$,
    which we fill in to the bound.
    Finally, we use $n \geq d$ to obtain $\ln(n+d) = O(\ln(n))$ and our choices of $\gamma$ and $\eta$ to finish the proof.
\end{proof}
\begin{remark}
    With $\gamma = 2\ln(d+n)$, $\eta = \frac{\feastol}{4\ln(d+n)}$, and $M$ the expected mean width of \eqref{eq:phase0}, the ratio $\frac{M}{\eta}$ is a measure of $A$, independent of $\feastol$ or $\veb$ or $\vecc$.
\end{remark}

\subsection{Phase I}

The Phase I auxiliary LP we use is stated below.  We extend the objective vector $\vetheta\in\sfe$ of the LP (\ref{eq:phase0}) to uniformly random objective $\bar\vetheta \in \mathbb{S}^d$ by sampling $\bar\vetheta_d$, appending it to $\vetheta$, and re-normalizing to ensure $\bar\vetheta$ has unit length.

\begin{align}
    \operatorname{maximize} \quad &\bar \vetheta^\T (\vex,s) \nonumber \\
    \operatorname{subject~to} \quad & A \vex + \frac{\veb - \feastol\sqrt{\ln n}\ve1}{\norm{\veb}_\infty} s  \leq \hat \veb \nonumber \\
    & \vex \geq \hat \veo + \frac{\feastol}{2\norm{\veb}_\infty}\ve1 s \nonumber \\
    &0 \leq s \leq \norm{\veb}_\infty. \tag{Phase I} \label{eq:phaseI}
\end{align}

Note that the bounds/right-hand sides of the first two sets of constraints are perturbed, but not those of the constraints $0 \leq s \leq \norm{\veb}_\infty$.
By construction, the face with $s=\norm{\veb}_\infty$ has the same solutions as \eqref{eq:phase0}.
The face with $s=0$ is the set of points which are exactly feasible for \eqref{eq:perturbed-LP}.
Hence we will follow two shadow paths: the first from $s + \eps\bar\vetheta^
\T(\vex,s)$ to $\bar\vetheta^
\T(\vex,s)$ and the second from $\bar\vetheta^
\T(\vex,s)$ to $-s + \eps\bar\vetheta^
\T(\vex,s)$.
We now determine our choice of $\eps > 0$ to make sure this path starts and ends where we need it to.

For the next lemma, we add an assumption that $\norm{\veb}_\infty \geq \feastol\sqrt{\ln n}$.
In all reasonable circumstances this should be satisfied, since $\feastol$ is a very small number and since $\norm{\veb}_\infty \geq 1$ should hold for all reasonable scalings of the LP data whenever $\veb \neq \veo$.

\begin{lemma}\label{lem:phase0opt}
    For a generic $\bar\vetheta \in \mathbb{S}^{d}$,
    let $(\vex^*, s^*)$ denote an optimal solution to
\begin{align}
    \operatorname{maximize} \quad &\eps \bar\vetheta^\T (\vex,s) + s \nonumber \\
    \operatorname{subject~to} \quad & A \vex - \frac{\veb - \feastol\sqrt{\ln n}\ve1}{\norm{\veb}_\infty} s  \leq \hat \veb \nonumber \\
    & \vex \geq \hat \veo + \frac{\feastol}{2\norm{\veb}_\infty}\ve1 s \nonumber \\
    & 0 \leq s \leq \norm{\veb}_\infty. \nonumber
\end{align}
    Assume $\norm{\veb}_\infty \geq \feastol\sqrt{\ln n}$.
    Let $s_\mathrm{min}$ denote the minimum singular value among all feasible bases of \eqref{eq:phase0}.
    If $0 \leq \eps \leq \frac{1}{1 + 2 \sqrt{d} s_\mathrm{min}^{-1}}$ then $s^* = \norm{\veb}_\infty$.
\end{lemma}
\begin{proof}
    Let $(\vex',s')$ be the basic solution obtained at the end of the path, where $\eps \to 0$.
    This solution must satisfy $s' \geq 0$ by feasibility.
    We assume that \eqref{eq:perturbed-LP} is feasible, and hence by \Cref{lem:feasibility} this value is achieved.
    Since $(\vex',s')$ maximizes the objective $s$,
    this means that $s' = \norm{\veb}_\infty$ and $\vex^*=\vex'$.
    If $(\vex',s')$ is maximal for every objective $\eps \bar\vetheta^\T (\vex,s) + s$ with every possible value $\eps > 0$ then the conclusion holds directly. Hence, we may assume for the remainder of this proof that at least one basic constraint has negative reduced cost for the objective $\bar\vetheta^\T (\vex,s)$.
    
    When $\eps = 0$, the corresponding dual solution for this basis has weight $1$ on the constraint $s \leq \norm{\veb}_\infty$ and weight $0$ on all other constraints.
    Thus if $\eps$ increases infinitesimally, by genericity of $\vetheta$ and optimality of $(\vex',s')$, the zero weights must go up and the weight on $s \leq \norm{\veb}_\infty$ must go down.
    In order to prove the lemma's conclusion, we require an upper bound on $\eps$ which guarantees that the weight on $s \leq \norm{\veb}_\infty$ remains non-negative.
    
    Let $M$ be the $d \times d$ submatrix of the $n \times d$ matrix $\begin{pmatrix} A \\ I_{d\times d} \end{pmatrix}$,
    $M_{\cdot,d+1}$ be the corresponding subvector of $\begin{pmatrix}-\frac{\veb - \feastol\sqrt{\ln n}\ve1_n}{\norm{\veb}_\infty} \\ \frac{\feastol}{2\norm{\veb}_\infty}\ve1_d\end{pmatrix}$,
    and $\vey$ be the subvector of $\begin{pmatrix} \hat b \\ \hat \veo\end{pmatrix}$
    such that the basic constraints for $(\vex',s')$ are equivalently stated as
    $M \vex + M_{\cdot,d+1} s\leq \vey$ and $s
    \leq \norm{\veb}_\infty$.
    The corresponding dual solution for an objective $\eps\bar\vetheta^\T(\vex,s)+s$ is obtained as
    $\velambda^\T = \eps(\bar\vetheta_1,\dots,\bar\vetheta_d) M^{-1}$ on the constraints $M \vex + M_{\cdot,d+1}s \leq \vey$ and
    a multiplier of
    \begin{equation}
        1 + \eps \bar\vetheta_{d+1} - M_{\cdot,d+1}^\T \velambda \label{eq:multiplierinphase0opt}
    \end{equation}
    on the constraint $s\leq \norm{\veb}_\infty$.
    When $\eps=0$, these multipliers are respectively $\veo_d$ and $1$.
    Since there exist values $\eps > 0$ for which the multipliers are all non-negative, the genericity of $\bar\vetheta$ mandates that $(\bar\vetheta_1,\dots,\bar\vetheta_d) M^{-1}$ is strictly positive.
    As such, the vector of multipliers is non-negative so long as \eqref{eq:multiplierinphase0opt} is non-negative.
    Hence the basic solution $(\vex',s')$ is optimal for any value $\eps \geq 0$ for which
    $\eps \leq \frac{1}{\abs{\bar\vetheta_{d+1} + (\bar\vetheta_1,\dots,\bar\vetheta_d) M^{-1} M_{\cdot,d+1}}}$. 
    Using Cauchy-Schwarz and $\norm{\bar\vetheta} = 1$, we find a sufficient condition that
    \[
        \eps \leq \frac{1}{1 + \norm{M_{\cdot,d+1}}\cdot\norm{M^{-1}}}.
    \]
    By definition we have $s_\mathrm{min} \leq \norm{M^{-1}}$.
    Since $\norm{\veb}_\infty \geq 2\feastol\sqrt{\ln n}$ by assumption, we find that $\norm{M_{\cdot,d+1}} \leq 2\sqrt{d}$. 

    Thus, by hypothesis,  
    \[\varepsilon \leq \frac{1}{1+2\sqrt{d} s_{\min}^{-1}} \leq \frac{1}{1 + \norm{M_{\cdot,d+1}}\cdot\norm{M^{-1}}},\]
     so $\varepsilon \overline{\theta}(\mathbf{x},s) + s$ is optimized at a vertex in the facet maximizing $\mathbf{s}$. It follows that $\mathbf{s}^{\ast}$ is maximal and therefore must be equal to $\|\mathbf{b}\|_{\infty}$ as desired.
\end{proof}

Assuming that \eqref{eq:perturbed-LP} is feasible, we will find a feasible solution when maximizing $-s$.

\begin{lemma}\label{lem:feasibility}
    The Phase I linear program \eqref{eq:phaseI} has a feasible solution $(\vex,s)$ with $s=0$ if and only if \eqref{eq:perturbed-LP} has a feasible solution.
\end{lemma}
\begin{proof}
    Directly by construction.
\end{proof}

Moreover, the solution that we find will maximize a uniformly randomly sampled objective direction.

\begin{lemma}\label{lem:phasetransition}
    Let $\bar\vetheta \in \mathbb{S}^{d}$ be arbitrary.
    Assume that $(\vex^*,s^*)$ is a feasible solution to \eqref{eq:phaseI}
    and that for some $\eps > 0$, $(\vex^*,s^*)$ maximizes the objective $(\vex,s) \mapsto -s + \eps \bar\vetheta^\T(\vex,s)$.
    If $s^*=0$ then $\vex^*$ is a feasible solution to \eqref{eq:perturbed-LP}.
    Moreover, if $(\vex^*,s^*)$ is basic for \eqref{eq:phaseI} then $\vex^*$ is basic for \eqref{eq:perturbed-LP} and maximizes $\vex \mapsto \bar\vetheta^\T(\vex,0)$.
\end{lemma}
\begin{proof}
    The feasibility of $\vex^*$ for \eqref{eq:perturbed-LP} when $s^*=0$ is direct from the construction.
    When $(\vex^*,s^*)$ is basic for \eqref{eq:phaseI} then, since $s^*=0$ and all basic solutions are non-degenerate with probability $1$, the basis describing this solution includes the constraint $s \geq 0$ and exactly $d$ constraints out of $A \vex + \frac{\veb - (\feastol\sqrt{\ln n})\ve1}{\norm{\veb}_\infty} s  \leq \hat \veb$ and $\vex \geq \hat\veo + \frac{\feastol}{2\norm{\veb}_\infty}\ve1 s$.
    Filling in $s=0$ in these constraints yields a set of $d$ constraints which all appear in the description of \eqref{eq:perturbed-LP} and which form a basis describing $\vex^*$.
    
    The same dual solution which certifies optimality of $(\vex^*,s^*)$ for the objective $(\vex,s) \mapsto -s + \eps \bar\vetheta^\T(\vex,s)$ over \eqref{eq:phaseI} also certifies,
    using the same multipliers but leaving out the constraint $0 \leq s$,
    the optimality of $\vex^*$ for the objective $\vex \mapsto \eps\bar\vetheta^\T(\vex,0)$ over \eqref{eq:perturbed-LP}.
\end{proof}

Similar to \Cref{lem:phase0opt}, we can estimate what value of $\eps > 0$ we need to reach.

\begin{lemma}\label{lem:phase1opt}
    For a generic $\bar\vetheta \in \mathbb{S}^{d}$,
    let $(\vex^*, s^*)$ denote an optimal solution to
\begin{align}
    \operatorname{maximize} \quad &\eps \bar\vetheta^\T (\vex,s) - s \nonumber \\
    \operatorname{subject~to} \quad & A \vex - \frac{\veb - \feastol\sqrt{\ln n}\ve1}{\norm{\veb}_\infty} s  \leq \hat \veb \nonumber \\
    & \vex \geq \hat \veo + \frac{\feastol}{2\norm{\veb}_\infty}\ve1 s \nonumber \\
    & 0 \leq s \leq \norm{\veb}_\infty. \nonumber
\end{align}
    Assume $\norm{\veb}_\infty \geq \feastol\sqrt{\ln n}$.
    Let $s_\mathrm{min}$ denote the minimum singular value among all feasible bases of \eqref{eq:perturbed-LP}.
    Suppose that \eqref{eq:perturbed-LP} is feasible.
    If $0 \leq \eps \leq \frac{1}{1 + 2 \sqrt{d} s_\mathrm{min}^{-1}}$ then $s^* = 0$.
\end{lemma}
\begin{proof}
    Let $(\vex',s')$ be the basic solution obtained at the end of the path, where $\eps \to 0$.
    This solution must satisfy $s' \geq 0$ by feasibility.
    We assume that \eqref{eq:perturbed-LP} is feasible, and hence by \Cref{lem:feasibility} this value is achieved.
    Since $(\vex',s')$ maximizes the objective $-s$,
    this means that $s' = 0$ and $\vex^*=\vex'$.
    
    When $\eps = 0$, the corresponding dual solution for this basis has weight $1$ on the constraint $-s \leq 0$ and weight $0$ on all other constraints.
    Thus if $\eps$ increases infinitesimally, by genericity of $\bar\vetheta$ and optimality of $(\vex',s')$, the weight on $-s \leq 0$ must go down while all other weights go up.
    In order to prove the lemma's conclusion, we require an upper bound on $\eps$ which guarantees that the weight on $-s \leq 0$ remains non-negative.
    
    Let $M$ be the $d \times d$ submatrix of the $n \times (d+1)$ matrix $\begin{pmatrix} A \\ -I_{d\times d} \end{pmatrix}$,
    $M_{\cdot,d+1}$ be the corresponding subvector of $\begin{pmatrix}-\frac{\veb - \feastol\sqrt{\ln n}\ve1_n}{\norm{\veb}_\infty} \\ \frac{\feastol}{2\norm{\veb}_\infty}\ve1_d\end{pmatrix}$,
    and $\vey$ be the subvector of $\begin{pmatrix} \hat b \\ \hat \veo\end{pmatrix}$
    such that the basic constraints for $(\vex',s')$ are equivalently stated as
    $M \vex + M_{\cdot,d+1} s\leq \vey$ and $-s
    \leq 0$.
    The corresponding dual solution for an objective $\eps\bar\vetheta^\T(\vex,s)-s$ is obtained as
    $\velambda^\T = \eps(\bar\vetheta_1,\dots,\bar\vetheta_d) M^{-1}$ on the constraints $M \vex \leq \vey$ and
    a multiplier of
    \begin{equation}
        1 - \eps \bar\vetheta_{d+1} - M_{\cdot,d+1}^\T \velambda \label{eq:multiplierinphase1opt}
    \end{equation}
    on the constraint $-s\leq 0$.
    When $\eps=0$, these multipliers are respectively $\veo$ and $1$.
    Since there exist values $\eps > 0$ for which the multipliers are all non-negative, the genericity of $\bar\vetheta$ mandates that $(\bar\vetheta_1,\dots,\bar\vetheta_d) M^{-1}$ is strictly positive.
    As such, the vector of multipliers is non-negative so long as \eqref{eq:multiplierinphase1opt} is non-negative.
    Hence the basic solution $(\vex',s')$ is optimal for any value $\eps \geq 0$ for which
    $\eps \leq \frac{1}{\abs{\bar\vetheta_{d+1} + (\bar\vetheta_1,\dots,\bar\vetheta_d) M^{-1} M_{\cdot,d+1}}}$.
    Using Cauchy-Schwarz and $\norm{\bar\vetheta} = 1$, we find a sufficient condition that
    \[
        \eps \leq \frac{1}{1 + \norm{M_{\cdot,d+1}}\cdot\norm{M^{-1}}}.
    \]
    By definition we have $s_\mathrm{min} \leq \norm{M^{-1}}$.
    Since $\norm{\veb}_\infty \geq \feastol\sqrt{\ln n}$ by assumption, we find that $\norm{M_{\cdot,d+1}} \leq 2\sqrt{d}$.
    This proves the lemma.
\end{proof}

We will bound our total Phase I running time in terms of the expected mean width of a slightly larger set than our Phase I LP.
The difference in size is minor.

\begin{definition}
Let $(-\hat L, \hat U)$ be $((0,U),\eta,\gamma)$-exponentially distributed, independent of $\hat \veb$.
    Let $\mathcal{M}$ denote the expected mean width of the following set
\begin{align}
    \{(\vex, s) \in \R^{d+1} : & A \vex + \frac{\veb - \feastol\sqrt{\ln n}\ve1}{\norm{\veb}_\infty} s  \leq \hat \veb, \nonumber \\
    & \vex \geq \hat \veo + \frac{\feastol}{2\norm{\veb}_\infty}\ve1 s \nonumber, \\
    &\hat L \leq s \leq \hat U \} \label{eq:phaseIextraperturb}
\end{align}
\end{definition}

\begin{theorem}\label{thm:phaseI}
    Let $\kappa$ denote the maximum condition number among all $d \times d$ submatrices of $\begin{pmatrix} A \\ -I_{d\times d} \end{pmatrix}$.
    Assume $\norm{\veb}_\infty \geq \feastol \sqrt{\ln n}$.
    The proposed simplex method satisfies the following properties:
    \begin{enumerate}
        \item $(\vex^0,s^0)=(\hat\veo+(\feastol/2)\ve1,\norm{\veb}_\infty)$ is a basic feasible solution to \eqref{eq:phase0}.
        \item The initial basis from which to start Phase I is found after an expected number of pivot steps no greater than 
        \[
            O\left(d^{1.5} \ln(n) \sqrt{\frac{\mathcal{M}}{\feastol} \ln\left(\frac{d \ln(n) \cdot \kappa}{\feastol} \right)}\right)
        \]
        \item Phase I uses an expected number of pivot steps no greater than
        \[O\left(d^{1.5} \ln(n) \sqrt{\frac{\mathcal{M}}{\feastol}\ln\left(\frac{d \norm{\veb}_\infty \ln(n) \kappa}{\feastol}\right)}\right)\]
        to find a feasible basis for \eqref{eq:perturbed-LP}.
        \item The basis of \eqref{eq:perturbed-LP} returned by Phase I maximizes a uniformly random objective direction $\vetheta \in \sfe$.
    \end{enumerate}
\end{theorem}
\begin{proof}
    The first point holds by construction, because we used rejection sampling to ensure that it was satisfied.
    The second point is established using \Cref{lem:phase0pivots},
    which gives an upper bound of
    \[
    O\left(\sqrt{\frac{M^0 d^3 \ln(n)}{\eta} \ln\left(\frac{d \ln(n) \cdot \gamma \cdot \norm{\ve1_d T^{-1}}_1}{\eta} \right)}\right),
    \]
    where $M^0$ denotes the expected mean width of \eqref{eq:phase0}.
    We now fill in all the quantities used in this bound.
    We have $\eta = \frac{\feastol}{4\ln(d+n)}$ and $\gamma=2\ln(d+n)$.
    The rows of $A$ each have norm at most $1$, which means that any basis matrix from $\begin{pmatrix} A \\ -I_{d\times d} \end{pmatrix}$
    has operator norm at most $d$.
    Since the condition number of a matrix $T$ is $\norm{T}\cdot\norm{T^{-1}}$, we know that the vector
    $\ve1_d T^{-1}$ satisfies 
    \[
    \norm{\ve1_d T^{-1}}_1 \leq d \norm{T^{-1}} \leq d^2 \kappa.
    \]
    Finally, to obtain the desired statement, we must bound the expected mean width $M^0$ of \eqref{eq:phase0}.
    To achieve that, notice that the feasible set of \eqref{eq:phase0} is a lower-dimensional slice of \eqref{eq:phaseIextraperturb}, namely the slice where $s=0$.
    As such, the expected mean width $M^0$ is at most the mean width of \eqref{eq:phaseIextraperturb} divided by a factor
    $1 - O(\frac{1}{\sqrt{d}}) \leq \E[\norm{(\bar\vetheta_1,\dots,\bar\vetheta_d)}] \leq 1$. This scaling factor is to account for the fact that the mean width of the slice is being measured in a lower dimension.

    For the third point, we traverse shadow paths on the feasible set \eqref{eq:phaseI}.
    First we follow the shadow path from fixed objective $s$ to random objective $\bar\vetheta^\T(\vex,s)$.
    As stated in \Cref{lem:phase0opt}, the solution we found by solving \eqref{eq:phase0} gives a starting basis for this path which is optimal for $s+\eps\bar\vetheta^\T(\vex,s)$ with $\eps=\frac{1}{1+2\sqrt{d}s_\mathrm{min}^{-1}}$.
    Observe that, when we write the constraints of \eqref{eq:phaseI} as a standard inequality form LP, then the rows of the constraint matrix each have norm at most $3$.
    We can still apply \Cref{thm:TODO} by scaling down all inequalities by a factor $3$, which reduces the effective value of $\eta$ by the same factor $3$.
    This part of the shadow path has expected length at most
\[113 + 344d \sqrt{\frac{d \ln(n+d+2) \mathcal{M}}{\eta/3} \ln\left(\frac{6742d^{3}(\norm{\veb}_\infty + 2\eta\gamma/3)\ln^2 (n+d+2)}{\eta \varepsilon/3} \right)}.\]
We further fill in the quantities in the bound.
We bound $s_\mathrm{min}^{-1} \leq d^2\kappa$ as before and fill in $\eps=\frac{1}{1+2\sqrt{d}s_\mathrm{min}^{-1}}$ and $\eta = \frac{\feastol}{4\ln(d+n)}$ and $\gamma=2\ln(d+n)$.
This gets an asymptotic running time of
\[O\left(d^{1.5} \ln(n) \sqrt{\frac{\mathcal{M}}{\feastol} \ln\left(\frac{d\norm{\veb}_\infty\ln (n)\kappa}{\feastol} \right)}\right).\]

The next shadow path traverses the feasible set \eqref{eq:phaseI}, now from the random objective $\bar\vetheta^\T(\vex,s)$ to the fixed objective $-s$.
We stop at the intermediate objective $-s+\eps\bar\vetheta^\T(\vex,s)$ with $\eps=\frac{1}{1+2\sqrt{d}s_\mathrm{min}^{-1}}$.
As shown in \Cref{lem:phase1opt}, this yields a solution with $s=0$. We see in \Cref{lem:phasetransition} that this solution satisfies the properties claimed in the final point of this theorem.
The running time bound for this final path of Phase I is identically obtained as the previous path.
\end{proof}

\subsection{Phase II}
After the end of Phase I, the current basis of the algorithm is optimal for
\begin{align*}
    \operatorname{maximize} \quad & \vetheta^\T \vex \\
    \operatorname{subject~to} \quad & A \vex \leq \hat \veb \\
    & \vex \geq \hat \veo.
\end{align*}
As such, we can follow the shadow path starting at the auxiliary objective $\vetheta$ in order to optimize the linear program
\begin{align*}
    \operatorname{maximize} \quad & \vecc^\T \vex \\
    \operatorname{subject~to} \quad & A \vex \leq \hat \veb \\
    & \vex \geq \hat \veo.
\end{align*} up to optimality tolerance $\opttol > 0$ as follows.
We terminate upon finding the optimal basic feasible solution $\vex^* \in \R^d$ to the intermediate linear program
\begin{align}
    \operatorname{maximize} \quad & (\vecc + \opttol\cdot \vetheta)^\T \vex \label{eq:intermediatelp}\\
    \operatorname{subject~to} \quad &  A \vex \leq \hat \veb \nonumber\\
    & \vex \geq \hat \veo. \nonumber
\end{align}
Here, the conditions of \Cref{cor:easytouse} are satisfied: the objective and constraint matrix are fixed, all rows of the constraint matrix have norm no greater than $1$, the right-hand side and bound vector and the objective are sampled appropriately.
Our Phase II algorithm returns its solution after an expected number of pivot steps no more than
\[
(1-\frac{1}{n})^{-1}(1-\frac{1}{d+n})^{-1}\cdot\left(102 + 316 d^{1.5}\ln(n+d)\sqrt{\frac{M}{\feastol}\ln\left(\frac{6742d^3N\ln^3(n+d)}{\opttol\cdot\feastol}\right)}\right),
\]
where $N$ is defined in \Cref{lem:N} and indicates the expected maximum absolute objective value of any feasible solution.

Based on our observations, which can be found in \Cref{sec:bythebooksimplexfindings}, reasonable values for the remaining parameters are a mean width of $M \leq 1000$ and tolerances of $\feastol = \opttol = 10^{-6}$. The expected maximum absolute objective value, which appears in the logarithm, should easily be bounded as $N \leq 10^9$. 

Besides the optimal basic feasible solution to the primal \eqref{eq:perturbed-LP}, the algorithm will yield the optimal basic feasible solution $(\vey^*, \vet^*) \in \R^{n + n + d}$ to the intermediate dual linear program
\begin{align*}
    \operatorname{minimize} \quad & \hat \veb^\T \vey - \hat \veo^\T t \\
    \operatorname{subject~to} \quad & A^\T \vey - \vet = \vecc + \opttol\cdot \vetheta \\
    & \vey,\vet \geq \veo.
\end{align*}
We can now verify that the primal dual solution pair satisfy the approximate feasibility and complementary slackness conditions described in \Cref{sub:introbythebookproperties}.
The primal feasibility conditions
\begin{align*}
    A\vex^* &\leq \veb + \feastol\cdot \ve1_n \\
    -\feastol\cdot\ve1_d &\leq \vex^*
\end{align*}
follow immediately from \eqref{eq:perturbationswithintols}.
The dual feasibility conditions
\begin{align*}
    A^\T \vey^* &\geq \vecc - \opttol\ve1_d \\
    \vey^* &\geq \veo
\end{align*}
follow because $\vetheta\geq-\ve1_d$.
Note that in the optimal solution we have $\vet^* = A^\T \vey^* - \vecc - \opttol \cdot \vetheta$.
To certify the approximate optimality of the pair $\vex^*, \vey^*$ we derive approximate complementary slackness conditions.
Exact complementary slackness for \eqref{eq:intermediatelp}, along with \eqref{eq:perturbationswithintols}, gives us that
\begin{align*}
    \text{~for all~} i \in [n] &\text{,~if~} \vey^*_i > 0 \text{~then~} (A\vex^*)_i = \hat \veb_i \geq \veb_i.  \\
    \text{~for all~} j \in [d] &\text{,~if~} (A^\T \vey^*)_j > \vecc_j + \opttol \cdot \vetheta_j \text{~then~} \vex_j^* = \hat \veo_j \leq 0.
\end{align*}
Note that $\norm{\vetheta}=1$, which means that $\vecc_j < (A^\T \vey^*)_j - \opttol$ implies $\vecc_j < (A^\T \vey^*)_j - \opttol\vetheta_j$, which in turn implies $\vex_j^* \leq 0$.
It follows that the returned solution pair $\vex^*,\vey^*$ satisfies the following approximate complementary slackness conditions:
\begin{align*}
    \text{~for all~} i \in [n] &\text{,~if~} \vey^*_i > 0 \text{~then~} (A\vex^*)_i \geq \veb_i.  \\
    \text{~for all~} j \in [d] &\text{,~if~} (A^\T \vey^*)_j > c_j + \opttol \text{~then~} \vex_j^* \leq 0.
\end{align*}
This is what we promised the algorithm output would satisfy: feasibility and optimality for \eqref{eq:input-LP} up to tolerances as in  \eqref{eq:apx-feas} and \eqref{eq:apx-opt}.






Note that we do not require to relax the dual nonnegativity constraint $\vey \geq \veo$.
In real solvers this is likely relaxed to $\vey \geq -\feastol\ve1$ because it is obtained through the (inexact) solving of a system of linear equations.
Since we assume that our algorithm may solve such systems exactly, we are able to return solutions with $\vey \geq \veo$.

\section{Running Time Bound}\label{sec:mathysection}
The proof will proceed as follows. In \Cref{sec:goodslacks} we use bound perturbations to prove that bases are either unlikely to be feasible or have large slacks on all non-tight constraints with constant probability when feasible.
In \Cref{sub:largeredcosts} we will use the semi-random shadow vertex method to prove that if bases are dual feasible for some objective, then with constant probability all of their multipliers will be large and positive for some (potentially distinct) intermediate objective.
Finally in \Cref{sec:shadowpath} we use these two properties to prove an upper bound on the length of the semi-random shadow path with bound perturbations, which is the property stated in \Cref{thm:bound_p_slack-epoch}.

\subsection{Bases with Large Slack}\label{sec:goodslacks}
In this section we study the effect of the perturbations defined in \Cref{def:ourperturbations}.
The bound perturbations will have the effect that vertices which are feasible will likely have large slacks on the non-tight constraints.
We first prove this fact for fixed (not necessarily basic) solutions.

\begin{lemma}\label{lem:exists-slack-helper}
    Let $A \in \R^{k \times d}$ be arbitrary and $\hat\veb \in \R^k$ be $(\veb,\eta,\gamma)$-exponentially distributed.
    Let $\omega > 0$ and $\vex \in \mathbb{R}^{d}$ be fixed.
    Let $\ves = \hat\veb - A\vex \in \mathbb{R}^{k}$ denote the slack vector. Then we have
    \begin{align*}
        \Pr\big[ \exists j \in [k] \; : \; \ves_j \leq \omega \big| \ves \geq \veo \big]  \leq 62 \frac{\omega}{\eta}  \ln(\frac{1}{\Pr[\ves \geq \veo]}).
    \end{align*}
\end{lemma}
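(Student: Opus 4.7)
The plan is to use independence of the perturbation coordinates to reduce to a one-dimensional analysis, and then to exploit the log-Lipschitz property of the Laplace density to compare $\Pr[\ves_j \in [0, \omega]]$ to $1 - p_j$, where $p_j := \Pr[\ves_j \geq 0]$.

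\textbf{Step 1: Reduction via independence.} Since the entries of $\hat\veb$ are independent (by \Cref{def:ourperturbations}) and $\vex$ is fixed, the slacks $\ves_1, \ldots, \ves_k$ are independent. In particular, the event $\{\ves_i \geq 0\}_{i \neq j}$ has no effect on the marginal distribution of $\ves_j$, so
\[
\Pr[\ves_j \leq \omega \mid \ves \geq \veo] = \Pr[\ves_j \leq \omega \mid \ves_j \geq 0].
\]
A union bound then reduces the lemma to the per-coordinate inequality
\[
\Pr[\ves_j \in [0, \omega] \mid \ves_j \geq 0] \leq 62 \frac{\omega}{\eta} \ln(1/p_j),
\]
since by independence $\sum_j \ln(1/p_j) = \ln(1/\prod_j p_j) = \ln(1/\Pr[\ves \geq \veo])$.

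\textbf{Step 2: One-dimensional analysis.} The slack $\ves_j = \hat\veb_j - (A\vex)_j$ is Laplace with scale $\eta$ and location $\mu_j := \veb_j + \gamma\eta - (A\vex)_j$. I would case-split on $\mu_j$. If $\mu_j \leq 0$, the conditional density on $[0, \infty)$ is exponential with rate $1/\eta$, so $\Pr[\ves_j \leq \omega \mid \ves_j \geq 0] = 1 - e^{-\omega/\eta} \leq \omega/\eta$, which suffices since $\ln(1/p_j) \geq \ln 2$. If $\mu_j \in [0, \omega]$, direct integration of the two-piece density yields $\Pr[\ves_j \in [0, \omega]] \leq \omega/(2\eta)$ and $p_j \geq 1/2$, again comfortably giving the required inequality.

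\textbf{Step 3: The tail case, and the main obstacle.} The delicate case is $\mu_j > \omega$, where both $\Pr[\ves_j \in [0, \omega]]$ and $\ln(1/p_j)$ are exponentially small in $\mu_j/\eta$ and only their ratio is meaningful. Here I would exploit the log-Lipschitz identity $f(t+\omega) = e^{\omega/\eta} f(t)$ for $t \leq \mu_j - \omega$, which applied to the interval $(-\infty, 0]$ yields the clean relation
\[
\Pr[\ves_j \in [0, \omega]] \leq (e^{\omega/\eta} - 1)\,\Pr[\ves_j < 0] = (e^{\omega/\eta} - 1)(1 - p_j).
\]
Combined with the elementary inequality $\ln(1/p_j) \geq 1 - p_j$ (which is $-\ln x \geq 1 - x$ applied at $x = p_j$) and $p_j \geq 1/2$, this gives $\Pr[\ves_j \in [0, \omega]]/p_j \leq 2(e^{\omega/\eta} - 1)\ln(1/p_j)$, and the advertised factor $62\,\omega/\eta$ follows from a bound of the form $e^x - 1 \leq 31x$ on the relevant range of $x = \omega/\eta$. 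The main obstacle is this tail case: the two exponentially small quantities must be compared carefully enough to extract a clean linear-in-$\omega/\eta$ factor, and the constant $62$ absorbs the worst cross-case ratio in the case analysis.
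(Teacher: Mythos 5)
Your proof is correct (in the regime where the lemma is actually used) but follows a genuinely different route from the paper's. The paper keeps the argument distribution-agnostic: after the same union-bound/independence reduction, it invokes a general ``condition-reversing'' lemma for log-Lipschitz densities (\Cref{lem:general-condition-reversing}, imported from the smoothed-analysis literature) to get $\Pr[\ves_j \leq \omega \mid \ves_j \geq 0] \leq 31\frac{\omega}{\eta}\Pr[\ves_j < 0]$, and then converts $\sum_j \Pr[\ves_j < 0] = \E[\abs{V}]$ into $2\ln(1/\Pr[\ves \geq \veo])$ via a Chernoff bound on the number of violated constraints. You instead exploit the explicit Laplace form, do a case split on the location parameter $\mu_j$, and replace the Chernoff step by the exact identity $\sum_j \ln(1/p_j) = \ln(1/\Pr[\ves \geq \veo])$ together with the elementary per-coordinate inequality $1 - p_j \leq \ln(1/p_j)$. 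Amusingly both routes land on the same constant $62 = 2 \cdot 31$, but for different reasons: the paper loses the factor $2$ in the Chernoff bound $\Pr[\abs{V}=0] \leq e^{-\E[\abs{V}]/2}$, while you lose it dividing by $p_j \geq 1/2$. Your version is more self-contained and elementary; the paper's is modular and would survive replacing the Laplace perturbation by any $1/\eta$-log-Lipschitz one.

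One caveat, which you correctly flag and which is shared by the paper: the inequality $e^{x}-1 \leq 31x$ in your tail case (and likewise the hypothesis $\eps \leq 1/L$ of \Cref{lem:general-condition-reversing} in the paper's proof, and your middle case $\mu_j \in [0,\omega]$ where you need $\ln(1/p_j) \geq 1/62$) requires $\omega/\eta$ to be bounded by a modest constant; the lemma as literally stated for all $\omega > 0$ fails for $\omega \gg \eta$ (take $k=1$ and $\mu_1 \gg \omega \gg \eta$). Since the lemma is only ever applied with $\omega = \frac{\eta}{1240\, d\ln n} \ll \eta$, this is a shared imprecision in the statement rather than a defect of your argument.
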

\begin{proof}
First, by independence of the random variables $\ves_j$ we find that
 \begin{align}\label{align:exists}
        \Pr\big[ \exists j \in [k] \; : \; \ves_j \leq \omega \big| \ves \geq \veo \big]
        &\leq \sum_{j \in [k]} \Pr\big[ \ves_j \leq \omega  \big | \ves_j \geq 0 \big].
    \end{align}

Second, 
let $V \coloneqq \{j \colon \ves_j <\veo \} $ be the set of violated constraints.
Since the numbers $\ves_j$ are independent,
the Chernoff bound (see, e.g., Theorem 4.5 in \cite{probabilityandcomputing})gives us that $\Pr[\ves \geq \veo] = \Pr\big[\abs{V} = 0\big] \leq \exp(-\E[\abs{V}]/2)$. Now we need to relate \eqref{align:exists} with the set of violated constraints. 
Therefore we use \Cref{lem:general-condition-reversing} (see \cite[Lemma 31 in ArXiv v1/v2]{optimal_smoothed_analysis})
which we can apply to the entries of the $(\overline{\veb},\eta,\gamma)$-exponential distribution directly and obtain that $\Pr[\ves_{j} \leq \omega | \ves_{j} \geq 0] \leq 31 \frac{\omega}{\eta} \Pr[\ves_{j} < 0]$. 
Putting everything together gives us 
 \begin{align*}
        \Pr\big[ \exists j \in [k] \; : \; \ves_j \leq \omega \big| \ves \geq \veo \big]
        &\leq \sum_{j \in [k]} \Pr\big[ \ves_j \leq \omega  \big | \ves_{j} \geq 0 \big] \\
        & \leq \sum_{j \in [k]} 31 \frac{\omega}{\eta} \Pr\big[  \ves_{j} < 0 \big] \\
        & = 31 \frac{\omega}{\eta}  \E[\abs{V}] \\
        & \leq 62 \frac{\omega}{\eta} \ln(\frac{1}{\Pr[\ves \geq \veo]}).  
    \end{align*}
\end{proof}

We now extend the above result to the slacks on non-tight constraints for basic feasible solutions.
\begin{lemma}\label{lem:exists-slack}
    Let $A \in \R^{n \times d}$ be arbitrary and $\hat\veb \in \R^n$ be $(\veb,\eta,\gamma)$-exponentially distributed, and let $\omega > 0$.
    Assume $B \subset[n]$ is a basis and $\vex^B = A_B^{-1}\hat\veb_{B}$.
    If $\vex^B \in \R^d$ satisfies $\Pr[A\vex^B \leq \hat\veb] \geq n^{-d}$, then \begin{align*}
        \Pr\big[ \exists j \in [n]\setminus B \; : \; A_j \vex^B \geq \hat\veb_j - \omega \big| A \vex^B \leq \hat\veb \big]  \leq 124 d \frac{\omega}{\eta}  \ln(n) + n^{-d}.
    \end{align*}
\end{lemma}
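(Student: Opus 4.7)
The plan is to reduce to Lemma~\ref{lem:exists-slack-helper} by conditioning on the randomness that determines $\vex^B$, and then to integrate out that conditioning using a case split dictated by the hypothesis $\Pr[A\vex^B \leq \hat\veb] \geq n^{-d}$. Since the entries of $\hat\veb$ are independent, once we fix $\hat\veb_B$ the point $\vex^B = A_B^{-1}\hat\veb_B$ is a deterministic vector in $\R^d$, and the remaining slacks $\ves_j = \hat\veb_j - A_j \vex^B$ for $j \in [n]\setminus B$ depend only on $\hat\veb_{[n]\setminus B}$, whose coordinates are still $(\veb_j,\eta,\gamma)$-exponentially distributed and independent of $\hat\veb_B$. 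Thus, conditionally on $\hat\veb_B$, we are in exactly the setting of Lemma~\ref{lem:exists-slack-helper} applied to $\vex = \vex^B$ with the $k = n - d$ coordinates indexed by $[n]\setminus B$.

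First I would write, denoting $E = \{A\vex^B \leq \hat\veb\}$, $p(\hat\veb_B) = \Pr[E \mid \hat\veb_B]$, and $F$ for the event $\{\exists j \in [n]\setminus B : A_j\vex^B \geq \hat\veb_j - \omega\}$, the identity $\Pr[F, E] = \int \Pr[F, E \mid \hat\veb_B]\, d\mu(\hat\veb_B)$. Note that the basic constraints $A_B \vex^B = \hat\veb_B$ hold exactly, so feasibility is equivalent to the $n - d$ non-basic slacks being nonnegative, and the hypothesis $\Pr[E] \geq n^{-d}$ is exactly the hypothesis needed by Lemma~\ref{lem:exists-slack-helper} in the conditional picture whenever $p(\hat\veb_B)$ is not too small. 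Applying that lemma pointwise yields
\[
\Pr[F \mid E, \hat\veb_B] \leq 62\,\frac{\omega}{\eta}\,\ln\!\left(\frac{1}{p(\hat\veb_B)}\right).
\]

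The main obstacle is that $\ln(1/p(\hat\veb_B))$ can be arbitrarily large on sets where $p$ is tiny, and by Jensen's inequality we cannot pull the expectation inside. To circumvent this I split the integration region into $G = \{p(\hat\veb_B) \geq n^{-2d}\}$ and its complement $G^c$. On $G$, $\ln(1/p) \leq 2d\ln n$, and multiplying by $p(\hat\veb_B)$ and integrating gives
\[
\int_G p(\hat\veb_B)\,\ln\!\left(\frac{1}{p(\hat\veb_B)}\right) d\mu \;\leq\; 2d\ln(n)\,\Pr[E].
\]
On $G^c$, since $p(\hat\veb_B) < n^{-2d}$ on this set and the total measure is at most $1$, the contribution to $\Pr[F, E]$ is crudely bounded by $\int_{G^c} p(\hat\veb_B)\, d\mu \leq n^{-2d}$.

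Combining the two pieces yields $\Pr[F, E] \leq 62\,\frac{\omega}{\eta}\cdot 2d\ln(n)\cdot \Pr[E] + n^{-2d}$, and dividing through by $\Pr[E] \geq n^{-d}$ produces
\[
\Pr[F \mid E] \;\leq\; 124\,d\,\frac{\omega}{\eta}\,\ln(n) + n^{-d},
\]
which is exactly the claimed bound. The only delicate step is the case split, which trades off the blow-up of $\ln(1/p)$ on the low-probability tail against the guaranteed bulk probability $n^{-d}$; everything else is bookkeeping from the helper lemma and independence of the coordinates of $\hat\veb$.
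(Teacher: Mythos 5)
Your proposal is correct and follows essentially the same route as the paper: condition on $\hat\veb_B$ so that $\vex^B$ is deterministic and the non-basic slacks remain independent, apply Lemma~\ref{lem:exists-slack-helper} pointwise, split on whether the conditional feasibility probability exceeds $n^{-2d}$, and divide by $\Pr[A\vex^B \leq \hat\veb] \geq n^{-d}$ at the end. The only minor quibble is your parenthetical that the hypothesis $\Pr[E] \geq n^{-d}$ is ``needed by'' the helper lemma --- that lemma carries no such hypothesis, and the lower bound on $\Pr[E]$ is only used in the final division step --- but this does not affect the argument.
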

\begin{proof}
    Assume without loss of generality that $B = \{n-d+1,\dots,n\}$.
    Write $k = n-d$ and $\ves \in \R^k$ with $\ves = \hat\veb_{[n]\setminus B} - A_{[n]\setminus B}\vex^B$.
    This allows us to write
    \[
    \Pr\big[ \exists j \in [n]\setminus B \; : \; A_j \vex^B \geq \veb_j - \omega \big| A \vex^B \leq \hat\veb \big]
    =
    \Pr\big[ \exists j \in [k] \; : \; \ves_j \leq \omega \big| \ves \geq \veo \big].
    \]
    We separate the randomness in $\hat\veb$ into the part on $\hat\veb_B$ and the part on $\hat\veb_{[n]\setminus B}$ and get
    \begin{equation}\label{eq:splitrandomness}
        \Pr\big[ \exists j \in [k] \; : \; \ves_j \leq \omega \text{~and~} \ves \geq \veo \big] =
        \E_{\hat\veb_B}\Big[\Pr_{\hat\veb_{[n]\setminus B}}\big[ \exists j \in [k] \; : \; \ves_j \leq \omega  \text{~and~} \ves \geq \veo \big]\Big].
    \end{equation}
    In other words, by the principle of deferred decisions, we may condition on the value of $\hat\veb_{B}$ and therefore assume it to be fixed. In particular, for any fixed value of $\hat\veb_B$, we will prove an upper bound on
    $\Pr_{\hat\veb_{[n]\setminus B}}\big[ \exists j \in [k] \; : \; \ves_j \leq \omega \big| \ves \geq \veo \big]$.
    If $\hat\veb_B$ satisfies $\Pr_{\hat\veb_{[n]\setminus B}}[\ves\geq \veo] \geq n^{-2d}$ then we apply \Cref{lem:exists-slack-helper} to find

    \begin{align*}
    \Pr_{\hat\veb_{[n]\setminus B}}\big[ \exists j \in [k] \; : \; \ves_j \leq \omega \text{~and~} \ves \geq \veo \big]&\leq 62 \frac{\omega}{\eta} \ln\left(\frac{1}{\Pr[\ves \geq \veo]}\right)\Pr[\ves \geq \veo]\\
    &\leq 124 d \frac{\omega}{\eta} \ln(n)\Pr[s \geq 0].
    \end{align*}
    If $\hat\veb_B$ satisfies $\Pr_{\hat\veb_{[n]\setminus B}}[\ves\geq \veo] \leq n^{-2d}$ then we upper bound
    \[
    \Pr_{\hat\veb_{[n]\setminus B}}\big[ \exists j \in [k] \; : \; \ves_j \leq \omega \text{~and~} \ves \geq \veo \big]\leq \Pr_{\hat\veb_{[n]\setminus B}}\big[ \ves \geq \veo \big] \leq n^{-2d}.
    \]
    Combining the two cases, we find that for every value of $\hat\veb_B$ it holds that
    \[
    \Pr_{\hat\veb_{[n]\setminus B}}\big[ \exists j \in [k] \; : \; \ves_j \leq \omega \text{~and~} \ves \geq \veo \big]\leq124 d \frac{\omega}{\eta} \ln(n)\Pr_{\hat\veb_{[n]\setminus B}}[\ves \geq \veo] + n^{-2d}.
    \]
    In particular, by taking expectations over $\hat\veb_{B}$ of both sides of this inequality, this pointwise upper bound can now be applied to \eqref{eq:splitrandomness} to find
    \[
    \Pr\big[ \exists j \in [k] \; : \; \ves_j \leq \omega \text{~and~} \ves \geq \veo \big]
    \leq
    124 d \frac{\omega}{\eta} \ln(n)\Pr[\ves \geq \veo] + n^{-2d}
    \]

    Now, by dividing both sides by $\Pr[\ves \geq \veo]$ and using $\Pr[\ves \geq \veo] \geq n^{-d}$, we conclude that
    \[
    \Pr\big[ \exists j \in [k] \; : \; \ves_j \leq \omega \big| \ves \geq \veo \big]
    \leq
    124 d \frac{\omega}{\eta} \ln(n) + n^{-d},
    \]
    which finishes the proof.
\end{proof}

The lemmas above allow us to prove our main technical result about bound perturbations:
for any fixed set of bases, the expected number of feasible bases among them is comparable to the expected number of feasible bases with large slacks.

\begin{definition}\label{def:feasible-gap-bases}
    Given arbitrary $A \in \R^{n \times d}$ and $\veb \in \R^n$, as well as a value $\eta > 0$,
    define the set of feasible bases $B$ as
    \[
        \mathcal{F}(A) = \left\{B \in \binom{[n]}{d} : \text{$A_B$ invertible and~} A \vex^B \leq \veb  \right\}. 
    \]
    When $A$ is clear from context or arbitrary, we may refer to $\mathcal{F}(A)$ as just $\mathcal F$.

    We define $\omega(\eta,n,d) = \frac{\eta}{1240 d \ln(n)}$, and we define the set of \emph{feasible bases with $\eta$-good slack} as
    \[
        \mathcal{F}^+(A,\eta) = \left\{B \in \mathcal{F} \colon A_{[n]\setminus B} \vex_B \leq \veb_{[n]\setminus B} - \omega(\eta,n,d) \right\}.
    \]
      When $\eta,n$ and $d$ are clear from context, we may simply refer to $\omega(\eta,n,d)$ as $\omega$.
      When $A$ and $\eta$ are clear from context or arbitrary, we may refer to $\mathcal{F}^+(A,\eta$) as just $\mathcal{F}^+$, and we may refer to it simply as the \emph{feasible bases with good slack}. 
\end{definition}

\begin{theorem}(Slacks are large)\label{thm:slacks}
    Let $A \in \R^{n \times d}$ be arbitrary and $\hat\veb \in \R^n$ be $(\veb,\eta,\gamma)$-exponentially distributed.
    Let $\mathcal S \subseteq \binom{[n]}{d}$ be arbitrary.
    Then
    \[
        \E[\abs{\mathcal S \cap \mathcal{F}^+}] \geq
        0.9 \E[\abs{\mathcal S \cap \mathcal F}] - 2.
    \]
\end{theorem}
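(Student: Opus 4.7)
The plan is to apply \Cref{lem:exists-slack} basis-by-basis and then sum, using linearity of expectation, while carefully handling bases whose feasibility probability is too small for the lemma's hypothesis to apply.

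First I would observe that the value $\omega = \eta/(1240\, d \ln n)$ chosen in \Cref{def:feasible-gap-bases} is calibrated precisely so that $124\, d \cdot (\omega/\eta) \ln(n) = 1/10$. Consequently, for any single basis $B \in \binom{[n]}{d}$ with $A_B$ invertible and $\Pr[B \in \mathcal F] \geq n^{-d}$, \Cref{lem:exists-slack} immediately gives
\[
\Pr\big[B \in \mathcal F \setminus \mathcal F^+ \,\big|\, B \in \mathcal F\big] \leq 0.1 + n^{-d},
\]
and multiplying through by $\Pr[B \in \mathcal F]$ yields the pointwise bound $\Pr[B \in \mathcal F^+] \geq (0.9 - n^{-d})\Pr[B \in \mathcal F]$.

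Next I would handle the bases in $\mathcal S$ for which the hypothesis of \Cref{lem:exists-slack} fails. Split $\mathcal S = \mathcal S_{\mathrm{big}} \cup \mathcal S_{\mathrm{small}}$ according to whether $\Pr[B \in \mathcal F] \geq n^{-d}$ or not. Since $|\mathcal S| \leq \binom{n}{d} \leq n^d$, the total contribution of $\mathcal S_{\mathrm{small}}$ to $\E[|\mathcal S \cap \mathcal F|]$ is at most $n^d \cdot n^{-d} = 1$. I simply discard these bases (lower-bounding their contribution to $\E[|\mathcal S \cap \mathcal F^+|]$ by zero), which costs at most $1$ on the ``feasible'' side.

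Finally I would combine by linearity of expectation:
\begin{align*}
\E\big[|\mathcal S \cap \mathcal F^+|\big]
&\geq \sum_{B \in \mathcal S_{\mathrm{big}}} \Pr[B \in \mathcal F^+]
\geq (0.9 - n^{-d}) \sum_{B \in \mathcal S_{\mathrm{big}}} \Pr[B \in \mathcal F] \\
&\geq (0.9 - n^{-d})\big(\E[|\mathcal S \cap \mathcal F|] - 1\big).
\end{align*}
Expanding and using $\E[|\mathcal S \cap \mathcal F|] \leq |\mathcal S| \leq n^d$ to absorb the cross term $n^{-d}\E[|\mathcal S \cap \mathcal F|] \leq 1$ yields the desired inequality $\E[|\mathcal S \cap \mathcal F^+|] \geq 0.9\,\E[|\mathcal S \cap \mathcal F|] - 2$. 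The main obstacle is really just bookkeeping around the $n^{-d}$ hypothesis of \Cref{lem:exists-slack}: the conditional Chernoff argument underlying that lemma breaks down when the unconditional feasibility probability is tiny, and the additive slack of $2$ (rather than a clean multiplicative factor) in the theorem statement is precisely what buys the flexibility needed to absorb both the $\mathcal S_{\mathrm{small}}$ contribution and the $n^{-d}$ correction term uniformly in $|\mathcal S|$.
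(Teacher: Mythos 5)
Your proposal is correct and follows essentially the same route as the paper's proof: the same split of $\mathcal S$ by whether $\Pr[B \in \mathcal F] \geq n^{-d}$, the same per-basis application of \Cref{lem:exists-slack} with the calibration $124\,d\,(\omega/\eta)\ln(n) = 0.1$, and the same bookkeeping absorbing the $\mathcal S_{\mathrm{small}}$ contribution and the $n^{-d}$ correction (each at most $1$) into the additive constant $2$.
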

\begin{proof}
Let $\mathcal S' := \{B \in \mathcal S : \Pr[B \in \mathcal F] \geq n^{-d}\}$
denote those bases in $\mathcal S$ which have non-negligible probability of being feasible.
Observe that
\[
    \E[\abs{\mathcal S \cap \mathcal{F}^+}] \geq \E[\abs{\mathcal S' \cap \mathcal{F}^+}]
\]
follows directly from the fact that $\mathcal S' \subseteq \mathcal S$.

Let $B \in \mathcal S'$ index a basis and let $\hat\vex^B = A_B^{-1}\hat\veb_B$.
Recall that $\ves \coloneqq \hat\veb_{[n] \setminus B} - A_{[n] \setminus B} \vex^B \in \R^{n-d}$. 
Because $B \in \mathcal S'$ we have $\Pr[B \in \mathcal F] \geq n^{-d}$,
so we may apply \Cref{lem:exists-slack} to
find for $\omega = \omega(\eta,n,d)$ that
      \begin{align*}
	    \Pr[B \notin \mathcal{F}^+ \mid B \in \mathcal{F}]
            &= \Pr[\exists j \in [n]\setminus B \; : \; \ves_j \leq \omega \big| \ves \geq \veo \big] \\
            &\leq 124 \frac{\omega}{\eta} d \ln(n) + n^{-d} \leq 0.1 + n^{-d}.
        \end{align*}      
Hence, $\Pr[B \in \mathcal{F}^+ \mid B \in \mathcal{F}] \geq 0.9 - n^{-d}$.
The result follows by summing up over all $B \in \mathcal S$:
\begin{align*}
    \E[\abs{\mathcal S' \cap \mathcal{F}^+}] &= \sum_{B \in \mathcal S'} \Pr[B \in \mathcal{F}^+] \\
    &= \sum_{B \in \mathcal S'}  \Pr[B \in \mathcal{F}^+| B \in \mathcal F]\Pr[B \in \mathcal F] \\
    &\geq \sum_{B \in \mathcal S'}  (0.9 - n^{-d})\Pr[B \in \mathcal F] \\
    &= (0.9-n^{-d}) \E[\abs{\mathcal S' \cap \mathcal F}] \\
    &\geq 0.9 \E[\abs{\mathcal S' \cap \mathcal F}] - 1.
\end{align*}
Putting the above together, we find
$\E[\abs{\mathcal S \cap \mathcal{F}^+}] \geq 0.9 \E[\abs{\mathcal S' \cap \mathcal F}] - 1$.
Finally, we note that
\[
\E[\abs{(\mathcal S\setminus \mathcal S') \cap \mathcal F}] = \sum_{B \in \mathcal S\setminus \mathcal S'} \Pr[B \in \mathcal F] \leq 1
\]
holds by construction, which implies that
$\E[\abs{\mathcal S \cap \mathcal{F}^+}] \geq 0.9 \E[\abs{\mathcal S \cap \mathcal F}] - 2$
    as required.
\end{proof}

\subsection{Bases with Large Multipliers}\label{sub:largeredcosts}

In this subsection we prove one main ingredient for our proof of \Cref{cor:easytouse}. We show that for any (normal) cone we have large multipliers with high probability.  

\begin{definition}
    A function $f : \R^d \to \R_+$ is $L$-log-Lipschitz if for any $\vex,\vex' \in \R^d$ it holds that
    $f(\vex)/f(\vex') \leq \exp(L\norm{\vex-\vex'})$.
    A probability distribution is $L$-log-Lipschitz if it admits a probability density function with respect to the Lebesgue measure
    and if this probability density function is $L$-log-Lipschitz.
\end{definition}

\begin{definition} \label{def:lexponential}
    Let $L > 0$.
    A random variable $X \in \R^d$ is $L$-exponentially distributed on $\R^d$ if there is a constant $C_L$ such that $$\Pr[X \in S] = \int_{S} C_L e^{-L\norm{x}} \dd \vex $$
    for every measurable set $S \subset \R^d$
\end{definition}

The following fact is standard, we adapt the proof from \cite{dadushhahnle}.
\begin{lemma}\label{fact:norm_exp_distr}
    The normalizing constant $C_L$ of the $L$-exponential distribution is $C_L = \frac{L^d}{d! \vol_d(\ball^d)}$.
    For $X$ exponentially distributed on $\R^d$, the $k$'th moment of $\norm{X}$ is $\E[\norm{X}^k] = \frac{L^{-k} (k+d-1)!}{(d-1)!}$.
\end{lemma}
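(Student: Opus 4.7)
The statement is a direct calculation that reduces a $d$-dimensional integral over $\R^d$ to a one-dimensional integral in the radial direction. The plan is to switch to polar coordinates and then apply the standard gamma integral. Both claims of the lemma --- the normalizing constant and the moments of $\|X\|$ --- will fall out of the same calculation, so I would prove them together.

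For the normalizing constant, I would write $\int_{\R^d} e^{-L\|\vex\|} \dd \vex$ in polar coordinates. Since the integrand depends only on $\|\vex\|$, integrating over the angular variables contributes a factor equal to the surface area of the unit sphere $\sfe$, which equals $d \cdot \vol_d(\ball^d)$. This reduces the integral to
\[
\int_{\R^d} e^{-L\|\vex\|} \dd \vex = d \cdot \vol_d(\ball^d) \int_0^\infty r^{d-1} e^{-Lr} \dd r.
\]
The remaining one-dimensional integral is a gamma integral: substituting $u = Lr$ gives $\int_0^\infty r^{d-1} e^{-Lr} \dd r = (d-1)!\,/\,L^d$. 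Combining, the full integral equals $d!\cdot \vol_d(\ball^d) / L^d$, and the normalizing constant must be its reciprocal, yielding $C_L = L^d / (d!\,\vol_d(\ball^d))$.

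For the moment computation I would use the exact same reduction. Writing
\[
\E[\|X\|^k] = C_L \int_{\R^d} \|\vex\|^k e^{-L\|\vex\|} \dd \vex = C_L \cdot d \cdot \vol_d(\ball^d) \int_0^\infty r^{k+d-1} e^{-Lr} \dd r,
\]
the radial integral evaluates to $(k+d-1)!\,/\,L^{k+d}$ by the same gamma-function identity. Plugging in the value of $C_L$ and cancelling the common factors $d \cdot \vol_d(\ball^d)$ produces a closed form which, after simplification, gives the claimed expression in terms of $L$, $k$, and $d$.

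There is no real obstacle: the only nontrivial step is recalling that the surface area of $\sfe$ equals $d\cdot \vol_d(\ball^d)$, which follows from differentiating the volume of the ball of radius $r$ with respect to $r$. The rest is the standard gamma-function integral for integer exponents. I would present the two identities in parallel so that the radial integral is only written down once and both conclusions are read off at the end.
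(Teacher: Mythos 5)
Your proof is correct, but it takes a different route from the paper's. You integrate in polar coordinates, pulling out the surface area $d\cdot\vol_d(\ball^d)$ of the unit sphere and reducing everything to the gamma integral $\int_0^\infty r^{m}e^{-Lr}\,\dd r = m!/L^{m+1}$. The paper instead uses a layer-cake decomposition, writing $e^{-\norm{\vex}}=\int_{\norm{\vex}}^\infty e^{-t}\,\dd t$ and swapping the order of integration so that the inner integral becomes the volume $t^d\vol_d(\mathbb{B}^d)$ of a ball of radius $t$ (and, for the moments, an additional layer-cake step in $s=\norm{\vex}/t$); this avoids invoking the surface-area formula and any polar change of variables. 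The two arguments are of comparable length; yours is the more standard textbook computation, while the paper's needs only the volume of the ball. One point worth making explicit: your calculation, carried through with general $L$, yields $\E[\norm{X}^k]=\frac{(k+d-1)!}{L^k(d-1)!}$, i.e.\ the factor is $L^{-k}$ rather than the $L^k$ printed in the lemma statement. The $L^{-k}$ scaling is the correct one (it is consistent with $\E[\norm{X}]=d/L$ as used in \Cref{lem:tailbound} and in the proof of \Cref{thm:bound_p_slack-epoch}); the paper's own proof sets $L=1$ ``without loss of generality'' and never restores the $L$-dependence, so the stated exponent appears to be a typo. Do not paper over this by saying the computation ``gives the claimed expression''---state the formula you actually obtain.
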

\begin{proof}
    Without loss of generality, take $L=1$.
    For the normalizing constant,
\begin{align*}
C^{-1} &= \int_{\R^d} e^{-\norm{\vex}} \dd\vex
    = \int_{\R^d} \int_{\norm{\vex}}^\infty e^{-t} {\rm dt} \dd\vex \\
    &= \int_0^\infty e^{-t} \int_{\R^d} 1[\norm{\vex} \leq t] \dd \vex \dd t
    = \int_0^\infty e^{-t} t^d \vol_d(\mathbb{B}^d) \dd t
    = d! \vol_d(\mathbb{B}^d) \text{.}
\end{align*}
To compute the $k$'th moment of the norm,
\begin{align*}
  \E[\|X\|^k ]
    &= C \int_{\R^d} \|\vex\|^k e^{-\|\vex\|} \dd\vex
     = C \int_{\R^d} \|\vex\|^k \int_{\|\vex\|}^\infty e^{-t} {\rm dt} \dd\vex \\
    &= C \int_0^\infty  e^{-t} \int_{t\mathbb{B}^d} \|\vex\|^k
        \dd\vex \dd t
     = C \int_0^\infty  e^{-t} t^{d+k} \dd t
          \int_{\mathbb{B}^d} \|\vex\|^k \dd \vex \\
    &= C \int_0^\infty  e^{-t} t^{d+k} \dd t
          \int_{\mathbb{B}^d} \int_0^1 1[\norm{\vex} \geq s] k s^{k-1} \dd s \dd \vex \\
    &= C \int_0^\infty  e^{-t} t^{d+k} \dd t
         \vol_d(\mathbb{B}^d) \int_0^1 (1-s^d) k s^{k-1} \dd s \\
    &= C (d+k)! \int_0^1 k (1-s^{d+k-1}) \vol_d(\mathbb{B}^d) \dd s \\
 &= C (d+k)! \vol_d(\mathbb{B}^d) \frac{d}{d+k} = \frac{(d+k-1)!}{(d-1)!} \text{.} \qedhere
\end{align*}
\end{proof}
For the exponential distribution we have the following tail bound.

\begin{lemma}\label{lem:tailbound}
    Let $X$ be $L$-exponentially distributed on $\R^d$ with some $L > 0$. Then we have
    $$\Pr[\norm{X} \geq 2ed L^{-1} \ln n] \leq n^{-d}.$$
\end{lemma}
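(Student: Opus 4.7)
The plan is to reduce the multivariate bound to a one-dimensional Gamma tail estimate. Writing the probability as an integral against the density $C_L e^{-L\norm{\vex}}$ in spherical coordinates, with surface measure $d\vol_d(\ball^d) r^{d-1} \dd r$ and $C_L$ as given in \Cref{fact:norm_exp_distr}, the radial variable $R := \norm{X}$ has density
\[
f_R(r) = \frac{L^d}{(d-1)!}\, r^{d-1} e^{-Lr}, \qquad r \geq 0.
\]
Equivalently, $Y := LR$ is $\mathrm{Gamma}(d,1)$-distributed, so the statement reduces to showing $\Pr[Y \geq 2ed \ln n] \leq n^{-d}$.

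For this I would apply a standard Chernoff argument. Since $Y$ is a sum of $d$ independent unit-rate exponentials, its moment generating function is $\E[e^{\lambda Y}] = (1-\lambda)^{-d}$ for $\lambda \in (0,1)$, and Markov's inequality yields $\Pr[Y \geq s] \leq (1-\lambda)^{-d} e^{-\lambda s}$. Optimizing at $\lambda^* = 1 - d/s$ (valid whenever $s > d$, which holds for $s = 2ed\ln n$ and $n \geq 2$) gives the textbook Gamma tail inequality
\[
\Pr[Y \geq s] \leq (s/d)^d e^{d - s}.
\]

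Substituting $s = 2ed \ln n$ turns this into $(2e\ln n)^d \cdot e^d \cdot n^{-2ed} = (2e^2 \ln n)^d\, n^{-2ed}$, so the desired inequality $\Pr[Y \geq s] \leq n^{-d}$ reduces to the purely numerical claim
\[
(2e^2 \ln n)^d \leq n^{(2e-1)d}, \qquad \text{i.e.,} \qquad 2e^2 \ln n \leq n^{2e-1}.
\]
Since $2e - 1 > 4$ and $\ln n$ grows only logarithmically in $n$, one can check this directly at $n = 2$ (where the left side is $\approx 10.2$ and the right side is $\approx 21.7$) and verify that the right side grows faster thereafter. The case $n = 1$ is vacuous because the threshold becomes $0$ and $n^{-d} = 1$.

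The only genuine subtlety is bookkeeping of constants: the factor $2e$ in the threshold $2edL^{-1}\ln n$ is exactly what is needed to make the Chernoff decay $e^{-s} = n^{-2ed}$ beat the polynomial prefactor $(s/d)^d e^d = (2e^2 \ln n)^d$ with an extra $n^d$ margin to spare. A smaller constant in place of $2e$ would fail for small $n$. Beyond that, every step is routine, and no additional machinery beyond the density in \Cref{fact:norm_exp_distr} and the MGF of the Gamma distribution is required.
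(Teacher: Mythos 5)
Your proof is correct, but it takes a genuinely different route from the paper's. The paper applies Markov's inequality to the polynomial moment $\norm{X}^k$ with $k = d\ln n$, invoking the moment formula from \Cref{fact:norm_exp_distr} together with the estimates $\E[\norm{X}^k] \le (k+d)^k L^{-k}$ and $(k+d)^k \le (2d\ln n)^k$; you instead integrate out the angular variable to identify $L\norm{X}$ as a $\mathrm{Gamma}(d,1)$ random variable and run the standard MGF-based Chernoff bound, reducing everything to the scalar inequality $2e^2\ln n \le n^{2e-1}$, which you verify at $n=2$ and by monotonicity thereafter. Both are Markov-type tail bounds, but yours is arguably the cleaner and slightly sharper of the two: the Gamma tail estimate $(s/d)^d e^{d-s}$ is the fully optimized Chernoff bound of which the paper's choice $k = d\ln n$ is a discrete surrogate, and the paper's step $(k+d)^k \le (2d\ln n)^k$ silently requires $\ln n \ge 1$, i.e.\ $n \ge 3$, whereas your computation closes at $n=2$. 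Two minor quibbles that do not affect validity: your side remark that a smaller constant in place of $2e$ ``would fail for small $n$'' is overstated, since for instance the threshold $4dL^{-1}\ln n$ still satisfies $4e\ln n \le n^{3}$ for all $n \ge 2$ and so the same Chernoff computation goes through with the constant $4$; and the case $n=1$ is not vacuous but holds with equality, since $\Pr[\norm{X} \ge 0] = 1 = 1^{-d}$.
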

\begin{proof}
Without loss of generality we assume $L=1$.
Using Markov's inequality we know for $k = d \ln n$ and $t=2ed\ln n$ that
    \begin{align*}
        \Pr[\norm{X} > t] &= \Pr[\norm{X}^k > t^k] \\
                          &\leq \frac{\E[\norm{X}^k]}{t^k} \\
                          &\leq \frac{(k+d)^k}{t^k} \\
                          &\leq \frac{(2d\ln n)^{d \ln n}}{(2ed\ln n)^{d \ln n}} = n^{-d}. \qedhere
    \end{align*} 
\end{proof}

For any $L$-log-Lipschitz distribution, and in particular for the $L$-exponential distribution, we will see that a line segment shifted by a random offset will tend to intersect any cone through its center.
\begin{lemma}
\label{lem:goodmultipliers}
Let $M \in \mathbb{R}^{d\times d}$ be an invertible matrix, and let $\vep$ be the sum of the row vectors of $M$. Then we define for any $m \geq 0$, the cone $C_{m} = \{\vey \in \mathbb{R}^{d}: \vey^\T M^{-1} \geq m \ve1\}$. Fix any $\vecc, \vecc' \in \R^{d}$ and $m > 0$. Then for any random vector $\vez \in \R^{d}$ with $L$-log-Lipschitz probability distribution $\mu$,
\begin{align*}\Pr\Big[[\vecc + \vez, \vecc' + \vez] \cap C_{m} \neq \emptyset\Big] \geq e^{-Lm\|\vep\|_{2}}\Pr\Big[[\vecc + \vez, \vecc' + \vez] \cap C_{0} \neq \emptyset\Big] .\end{align*}
\end{lemma}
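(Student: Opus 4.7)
The plan is to reduce the event involving $C_m$ to the event involving $C_0$ via a translation, after which the $L$-log-Lipschitz property of $\mu$ yields the desired factor.

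First, I would observe that $C_m$ is simply a translate of $C_0$. Since $\vep$ is the sum of the row vectors of $M$, we have $\vep = M^\T \mathbf{1}$, and therefore $\vep^\T M^{-1} = \mathbf{1}^\T M M^{-1} = \mathbf{1}^\T$. Consequently, for any $\vey_0 \in \R^d$, one has $(\vey_0 + m\vep)^\T M^{-1} = \vey_0^\T M^{-1} + m\mathbf{1}^\T$, so $\vey_0 + m\vep \in C_m$ if and only if $\vey_0 \in C_0$. In other words, $C_m = C_0 + m\vep$.

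Next, I would exploit this to rewrite the event. The segment $[\vecc + \vez, \vecc' + \vez]$ intersects $C_m = C_0 + m\vep$ if and only if the translated segment $[\vecc + (\vez - m\vep),\, \vecc' + (\vez - m\vep)]$ intersects $C_0$. Define the indicator
\[
P(\veu) = \mathbf{1}\!\left[[\vecc + \veu,\, \vecc' + \veu] \cap C_0 \neq \emptyset \right].
\]
Then $\Pr[[\vecc+\vez,\vecc'+\vez]\cap C_m \neq \emptyset] = \E[P(\vez - m\vep)]$, while $\Pr[[\vecc+\vez,\vecc'+\vez]\cap C_0 \neq \emptyset] = \E[P(\vez)]$.

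Finally, I would compare these two expectations through a change of variables. Letting $\mu$ be the density of $\vez$, a substitution gives
\[
\E[P(\vez - m\vep)] = \int P(\vey)\,\mu(\vey + m\vep)\,\dd \vey,
\qquad
\E[P(\vez)] = \int P(\vey)\,\mu(\vey)\,\dd \vey.
\]
The $L$-log-Lipschitz property of $\mu$ applied to the pair of points $\vey + m\vep$ and $\vey$ yields the pointwise bound $\mu(\vey + m\vep) \geq e^{-L\|m\vep\|}\mu(\vey) = e^{-Lm\|\vep\|}\mu(\vey)$. Integrating this inequality against the nonnegative function $P$ immediately gives the claim. No step appears to be a real obstacle; the only moderately nontrivial observation is the identity $\vep^\T M^{-1} = \mathbf{1}^\T$, which cleanly identifies $C_m$ as a translate of $C_0$ and lets the log-Lipschitz bound do the remaining work.
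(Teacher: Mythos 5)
Your proof is correct and follows essentially the same route as the paper's: both identify $C_m$ as the translate $C_0 + m\vep$ via the identity $\vep^\T M^{-1} = \mathbf{1}^\T$, rewrite the two probabilities as integrals over translated sets (you via an indicator and change of variables, the paper via the set $-[\vecc,\vecc'] + C_0$), and conclude with the pointwise $L$-log-Lipschitz bound $\mu(\vey + m\vep) \geq e^{-Lm\norm{\vep}}\mu(\vey)$. The differences are purely notational.
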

\begin{proof} 
 Since $\vep$ is the sum of the rows of $M$ and $M$ is invertible, $\vep^\T M^{-1} = \sum_{i=1}^{d} \vece^{i} =  \ve1$. Then $\vey \in C_{0}$ if and only if $\vey^\T M^{-1} \geq \veo$, which is true if and only if $(\vey + m \vep)^\T 
 M^{-1} \geq m \ve1$. This in turn is true if and only if $\vey + m \vep \in C_{m}$.

Note that $\Pr[[\vecc + \vez, \vecc'+\vez] \cap C_{0} \neq \emptyset] = \int_{-[\vecc, \vecc'] + C_{0}} \mu(\vey) d \vex$. Then, since $\vey \in C_{m}$ if and only if $\vey - m\vep \in C_{0}$, 
\begin{align*}\int_{-[\vecc, \vecc'] + C_{0}} \mu(\vey) d \vey = \int_{-[\vecc, \vecc'] + C_{m}} \mu(\vey - m\vep) d\vey.\end{align*}
Then, by applying our assumption that $\mu$ is $L$-log-Lipschitz, we get that
\begin{align*}\int_{-[\vecc, \vecc'] + C_{m}} \mu(\vey -m\vep) \dd \vey \leq \int_{-[\vecc, \vecc'] + C_{m}} \mu(\vey)e^{L m\|\vep\|_{2}} \dd \vey = e^{Lm\|\vep\|_{2}}\Pr[[\vecc + \vez, \vecc' + \vez] \cap C_{m} \neq \emptyset].\end{align*}
\end{proof}

\begin{definition}
    Let $A \in \R^{n \times d}$, let $\vecc, \vez \in \R^d$
    and let $t_2 > t_1 \geq 0$ and $L > 0$ be arbitrary.
    We define
    \[
        \mathcal{M}_{[t_1,t_2]}(A,\vecc,\vez) = \left\{B \in \binom{[n]}{d} : \text{$A_B$ invertible and~} \exists \  t \in [t_1,t_2] \text{ such that } (t \vecc + \vez)^{\T}A_{B^{i}}^{-1} \geq \veo \right\},
    \]
    and we call $\mathcal{M}_{[t_1,t_2]}(A,\vecc,\vez)$ the \emph{set of bases with positive $(\vecc,\vez)$-multipliers on $[t_1,t_2]$}.  When $A$, $\vecc$, $\vez$, $t_1$ and $t_2$ are clear from context or arbitrary, we may refer to $\mathcal{M}_{[t_1,t_2]}(A,\vecc,\vez)$ by just $\mathcal{M}$, and may refer to is simply as the \emph{set of bases with positive multipliers}.
    \\\\
    We define 
    \[
        \mathcal{M}_{[t_1,t_2]}^+(A,\vecc,\vez,L) = \left\{B \in \binom{[n]}{d} \colon \exists \  t \in [t_1,t_2] \text{ such that } (t \vecc + \vez)^{\T}A_{B^{i}}^{-1} \geq \frac{\ln(1/0.9)}{dL}\ve1 \right\},
    \]
     and we call $\mathcal{M}_{[t_1,t_2]}^+(A,\vecc,\vez,L)$ the \emph{set of bases with $L$-good $(\vecc,\vez)$-multipliers on $[t_1,t_2]$}.  When $A$, $\vecc$, $\vez$, $t_1$, $t_2$ and $L$ are clear from context or arbitrary, we may refer to $\mathcal{M}_{[t_1,t_2]}^+(A,\vecc,\vez,L)$ by just $\mathcal{M}^+$, and may refer to it simply as \emph{the set of bases with good multipliers}.
\end{definition}

\begin{theorem}(Multipliers are large)\label{thm:reducedcosts}
    Let $\mathcal S \subseteq \binom{[n]}{d}$ and $t_2 > t_1 \geq 0$ and $L>0$ be arbitrary.
    Let $A \in \R^{n \times d}$ be a matrix with rows of Euclidean norm at most $1$, $\vecc \in \R^d$ be arbitrary and $\vez \in \R^d$ be $L$-log-Lipschitz distributed.
    Then
    \[
        \E[\abs{\mathcal S \cap \mathcal{M}^+}] \geq
        0.9 \E[\abs{\mathcal S \cap \mathcal M}].
    \]
\end{theorem}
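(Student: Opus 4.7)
The plan is to reduce the theorem to a pointwise statement per basis $B \in \mathcal{S}$ via linearity of expectation, and then apply \Cref{lem:goodmultipliers} basis-by-basis. Specifically, I will show $\Pr[B \in \mathcal{M}^+] \geq 0.9 \Pr[B \in \mathcal{M}]$ for every $B \in \mathcal{S}$, from which the conclusion follows by summing.

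First, I would handle the trivial case: if $A_B$ is singular then $B \notin \mathcal{M}$ and $B \notin \mathcal{M}^+$ by definition (both require $A_B^{-1}$ to make sense), so such bases contribute zero to both sides. Fix $B \in \mathcal{S}$ with $A_B$ invertible, and let $C_m := \{\vey \in \R^d : \vey^\T A_B^{-1} \geq m \mathbf{1}\}$ as in \Cref{lem:goodmultipliers}. By unfolding the definitions, $B \in \mathcal{M}$ exactly when the line segment $[t_1 \vecc + \vez, t_2 \vecc + \vez]$ meets $C_0$, and $B \in \mathcal{M}^+$ exactly when it meets $C_{m}$ for $m = \ln(1/0.9)/(dL)$.

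Next, I would apply \Cref{lem:goodmultipliers} with $M = A_B$ and the pair $(t_1\vecc, t_2\vecc)$ substituted for the lemma's $(\vecc,\vecc')$. The lemma yields
\[
\Pr[B \in \mathcal{M}^+] \geq e^{-Lm\|\vep\|_2} \Pr[B \in \mathcal{M}],
\]
where $\vep$ is the sum of the rows of $A_B$. Since we are in \Cref{sec:mathysection}, each row of $A$ has Euclidean norm $1$, and the triangle inequality gives $\|\vep\|_2 \leq d$. Substituting $m = \ln(1/0.9)/(dL)$ produces $Lm\|\vep\|_2 \leq \ln(1/0.9)$, hence $e^{-Lm\|\vep\|_2} \geq 0.9$ and $\Pr[B \in \mathcal{M}^+] \geq 0.9 \Pr[B \in \mathcal{M}]$. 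Summing by linearity of expectation over $B \in \mathcal S$ delivers the theorem.

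I do not expect a serious obstacle, since the heavy lifting is absorbed into \Cref{lem:goodmultipliers}; the main things to verify are purely bookkeeping. The only minor subtlety is choosing the threshold $m$ so that it exactly matches the $\ln(1/0.9)/(dL)$ appearing in the definition of $\mathcal{M}^+$, and confirming that the norm-$1$ row assumption propagates into the bound $\|\vep\|_2 \leq d$. Both checks are immediate given the assumptions carried by \Cref{sec:mathysection}.
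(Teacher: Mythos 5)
Your proposal is correct and follows essentially the same route as the paper: reduce to a per-basis inequality via linearity of expectation, identify membership in $\mathcal M$ and $\mathcal M^+$ with the segment $[t_1\vecc+\vez,\,t_2\vecc+\vez]$ hitting the cones $C_0$ and $C_m$, and invoke \Cref{lem:goodmultipliers} with $\vep = A_B^\T\mathbf 1$. You are in fact slightly more explicit than the paper in spelling out why $e^{-Lm\norm{\vep}_2}\geq 0.9$ (namely $\norm{\vep}_2\leq d$ from the unit-norm rows, which the theorem's ``arbitrary $A$'' phrasing leaves implicit), which is a point worth keeping.
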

\begin{proof}
    Let $C_0^B = \{\vey \in \R^{d}: \vey^\T A_B^{-1} \geq \veo\}$ and $C_{\frac{\ln(1/0.9)}{dL}}^B = \{\vey \in \R^{d}: \vey^\T A_B^{-1} \geq \frac{\ln(1/0.9)}{dL} \ve1\}$. Let further $B \in \mathcal S$ be arbitrary.
    By the construction in the proof of \Cref{lem:goodmultipliers} we have that
    \[
    \Pr[B \in \mathcal{M}^+] = \Pr\Big[[t_1 \vecc + \vez, t_2 \vecc + \vez] \cap C^B_{\frac{\ln(1/0.9)}{dL}} \neq \emptyset\Big]
    \]
    and similarly $\Pr[B \in \mathcal{M}] = \Pr\Big[[t_1 \vecc + \vez, t_2 \vecc + \vez] \cap C^B_0 \neq \emptyset\Big]$.
    By assumption we know that $\norm{p} \leq d$.
    Plugging the values of $p$ and $m$ into \Cref{lem:goodmultipliers}, it follows that
    $\Pr[B \in \mathcal{M}^+] \geq 0.9 \Pr[B \in \mathcal{M}]$.
    Putting everything together, we find
    \begin{align*}
        \E[\abs{\mathcal S \cap \mathcal{M}^+}]
        = \sum_{B \in \mathcal S} \Pr[B \in \mathcal M^+]
        \geq \sum_{B \in \mathcal S} 0.9 \Pr[B \in \mathcal M]
        &= 0.9 \E[\abs{\mathcal S \cap \mathcal M}]
    \end{align*}
    as required.
\end{proof}

\subsection{Bases with Both}\label{sec:shadowpath}

In this subsection we first combine the good slack and good multiplier properties which we have investigated during the last two subsections. Afterwards we use this result for deducing our final \Cref{cor:easytouse}.

\begin{lemma}\label{lem:mostbasesaregood}
    Let $t_2 > t_1 \geq 0$ and $L > 0$,
    let $A \in \R^{n \times d}$ be a matrix with rows of Euclidean norm at most $1$, let $\vecc \in \R^d$, and let $\hat\veb \in \R^n$ be $(\veb,\eta,\gamma)$-exponentially distributed.
    Assume that $\vez \in \R^d$ follows an $L$-log-Lipschitz probability distribution and that is independent from $\hat\veb$.
    We have
    \[
    \E[\abs{\mathcal M^+ \cap \mathcal{F}^+}] \geq 0.8\E[\abs{\mathcal M \cap \mathcal{F}}] - 2
    \]
\end{lemma}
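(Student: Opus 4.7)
The plan is a clean two-step composition of the two preceding theorems, exploiting the fact that $\hat\veb$ and $\vez$ are independent and that $\mathcal{F},\mathcal{F}^+$ are functions of $\hat\veb$ alone, while $\mathcal{M},\mathcal{M}^+$ are functions of $\vez$ alone.

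First, I would condition on the random vector $\hat\veb$, so that the set $\mathcal{F}^+ \subseteq \binom{[n]}{d}$ is a fixed (deterministic) subset with respect to the remaining randomness in $\vez$. Since $\vez$ is $L$-log-Lipschitz and independent of $\hat\veb$, I can apply Theorem~\ref{thm:reducedcosts} with $\mathcal{S} = \mathcal{F}^+$ to obtain
\[
\E_{\vez}\!\left[\abs{\mathcal{F}^+ \cap \mathcal{M}^+}\right] \;\geq\; 0.9\,\E_{\vez}\!\left[\abs{\mathcal{F}^+ \cap \mathcal{M}}\right].
\]
Taking expectation over $\hat\veb$ and using the tower property yields
\[
\E\!\left[\abs{\mathcal{F}^+ \cap \mathcal{M}^+}\right] \;\geq\; 0.9\,\E\!\left[\abs{\mathcal{F}^+ \cap \mathcal{M}}\right].
\]

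Second, I would swap the roles: condition now on $\vez$, so that $\mathcal{M} \subseteq \binom{[n]}{d}$ is deterministic with respect to the remaining randomness in $\hat\veb$. Applying Theorem~\ref{thm:slacks} with $\mathcal{S} = \mathcal{M}$ gives
\[
\E_{\hat\veb}\!\left[\abs{\mathcal{M} \cap \mathcal{F}^+}\right] \;\geq\; 0.9\,\E_{\hat\veb}\!\left[\abs{\mathcal{M} \cap \mathcal{F}}\right] - 2,
\]
and again taking the outer expectation over $\vez$ produces the unconditional inequality $\E[\abs{\mathcal{M} \cap \mathcal{F}^+}] \geq 0.9\,\E[\abs{\mathcal{M} \cap \mathcal{F}}] - 2$.

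Finally, chaining the two inequalities gives
\[
\E\!\left[\abs{\mathcal{M}^+ \cap \mathcal{F}^+}\right] \;\geq\; 0.9\bigl(0.9\,\E[\abs{\mathcal{M} \cap \mathcal{F}}] - 2\bigr) \;=\; 0.81\,\E[\abs{\mathcal{M} \cap \mathcal{F}}] - 1.8,
\]
which is stronger than the claimed bound $0.8\,\E[\abs{\mathcal{M} \cap \mathcal{F}}] - 2$ (since $0.01\,\E[\abs{\mathcal{M} \cap \mathcal{F}}] \geq 0 \geq -0.2$).

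There is no serious obstacle here: the only real point that needs care is the bookkeeping of independence, namely verifying that in each application the chosen family $\mathcal{S}$ is a deterministic function of the variable being conditioned on, so that the hypothesis ``$\mathcal{S}$ arbitrary'' of each theorem is legitimately satisfied. The small slack between $0.9 \cdot 0.9 = 0.81$ and the claimed $0.8$, and between $1.8$ and $2$, is exactly the room needed to absorb rounding, so no sharpening of the intermediate estimates is required.
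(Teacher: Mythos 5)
Your proposal is correct and follows essentially the same argument as the paper: both use the independence of $\hat\veb$ and $\vez$ to condition on one source of randomness at a time and compose Theorem~\ref{thm:slacks} and Theorem~\ref{thm:reducedcosts} via the tower property. You merely apply the two theorems in the opposite order (multipliers first with $\mathcal{S}=\mathcal{F}^+$, then slacks with $\mathcal{S}=\mathcal{M}$, versus the paper's slacks first with $\mathcal{S}=\mathcal{M}^+$, then multipliers with $\mathcal{S}=\mathcal{F}$), which incidentally gives the marginally better additive constant $-1.8$ instead of $-2$.
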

\begin{proof}
    We use the principle of deferred decisions to separate the randomness over $\hat\veb$ and over $\vez$ as follows
    \[
    \E[\abs{\mathcal M^+ \cap \mathcal{F}^+}] =
    \E_{\vez}\big[\E_{\hat\veb}[\abs{\mathcal M^+ \cap \mathcal{F}^+}] \big].
    \]
    Observe that $\mathcal M^+$ does not depend on the value of $\hat\veb$.
    Using \Cref{thm:slacks} with $\mathcal S = \mathcal M^+$ we find for every value of $\vez$ that
    \[
    \E_{\hat\veb}[\abs{\mathcal M^+ \cap \mathcal{F}^+}] \geq 0.9\E_{\hat\veb}[\abs{\mathcal M^+ \cap \mathcal F}] - 2.
    \]
    We can once again reorder expectations and find
    \begin{align*}
        \E[\abs{\mathcal M^+ \cap \mathcal{F}^+}] &\geq \E_{\vez}\big[0.9\E_{\hat\veb}[\abs{\mathcal M^+ \cap \mathcal F}] - 2\big] \\
        &= 0.9\E_{\hat\veb}\big[\E_{\vez}[\abs{\mathcal M \cap \mathcal F}]\big] - 2
    \end{align*}
    For any fixed value of $\hat\veb$, we can observe that $\mathcal F$ is fixed.
    As such, with $\mathcal S =\mathcal F$ we can apply \Cref{thm:reducedcosts} to show
    \[
        \E_{\hat\veb}\big[\E_{\vez}[\abs{\mathcal M^+ \cap \mathcal F}]\big]
        \geq \E_{\hat\veb}\big[0.9 \E_{\vez}[\abs{\mathcal M^+ \cap \mathcal F}]\big]
    \]
    In total this gets us $\E[\abs{\mathcal M^+ \cap \mathcal{F}^+}] \geq 0.81 \E[\abs{\mathcal M \cap \mathcal{F}}] - 2$, which implies the lemma.
\end{proof}

To upper bound the total size of $\mathcal M \cap \mathcal F$, we will need to make reference to those bases which are both in $\mathcal M^+$ and in $\mathcal F^+$.
The argument upper bounding the length of the shadow path examines what happens when two such bases are adjacent.
Since the inclusion criteria for neighboring bases are not independent, we require a small graph theoretical lemma that we simplify and adapt from \cite{optimal_smoothed_analysis}.

\begin{definition}
    Given a graph $(V,E)$, for $v \in V$ we let $\operatorname{deg}(v)$ denote the degree of $v$.
    For subsets $S_1, S_2 \subseteq V$ we define $E[S_1,S_2] = \big\{\{u,v\} \in E : u \in S_1, v \in S_2\big\}$.
\end{definition}

\begin{lemma}\label{lem:redvsblue}
    Let $P = (V,E)$ denote a path
    and $V = R \cup B$ denote a partition of $P$.
    Then $\abs{V} \leq \abs{E[R,R]} + 2\abs{B} + 1$.
\end{lemma}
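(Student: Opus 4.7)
The plan is to count edges by color and use the simple degree bound that every vertex in a path has degree at most $2$. Writing $a = |E[R,R]|$, $b = |E[R,B]|$ and $c = |E[B,B]|$, I have $a + b + c = |E| = |V|-1$ since $P$ is a path. The goal is to prove $|V| \leq a + 2|B| + 1$, i.e., to show
\[
b + c \;\leq\; 2|B|.
\]

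The key estimate is obtained from summing degrees over $B$. Each vertex in a path has degree at most $2$, so $\sum_{v \in B}\deg(v) \leq 2|B|$. On the other hand, every edge in $E[B,B]$ contributes $2$ to this sum and every edge in $E[R,B]$ contributes exactly $1$, giving the identity $\sum_{v \in B}\deg(v) = 2c + b$. Combining, $b + 2c \leq 2|B|$, and since $c \geq 0$ trivially $b + c \leq b + 2c \leq 2|B|$.

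Plugging this into $|V| = a + b + c + 1$ yields $|V| \leq a + 2|B| + 1$, which is the desired inequality. The argument is entirely self-contained and only uses the facts that a path on $|V|$ vertices has $|V|-1$ edges and that every vertex of a path has degree at most two; there is no real obstacle here, only the bookkeeping of splitting edges by coloring class. A one-line concluding sentence should note that the proof did not use that $P$ is connected, only that it has no cycles and maximum degree $2$, which is why the lemma generalizes to disjoint unions of paths with a corresponding $+k$ correction (though only the path case is needed here).
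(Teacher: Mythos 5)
Your proof is correct and is essentially the same degree-counting argument as the paper's, just run on the other color class: the paper sums degrees over $R$ and uses that a path has at most two vertices of degree one (giving $2\abs{R}\leq 2+\abs{E[R,B]}+2\abs{E[R,R]}$), whereas you sum degrees over $B$ and combine the bound $b+2c\leq 2\abs{B}$ with the edge count $\abs{E}=\abs{V}-1$. One small caveat on your closing remark: your argument does use connectedness through $\abs{E}=\abs{V}-1$ (as you note, a union of $k$ paths would need a $+k$ correction), so it is not strictly more general than the paper's version, which degrades in the same way.
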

\begin{proof}
    Since $P$ is a path, a simple extension of the handshake lemma gives us
    \[2 \abs{R} \leq 2 + \sum_{v \in R}\operatorname{deg}(v) = 2 + \abs{E[R,B]} + 2\abs{E[R,R]}
    \]
    Then, since the maximum degree of a vertex in $B$ is at most $2$, we get that $\abs{E[R,B]} \leq 2\abs{B}$.
    Combining the two inequalities, we find that
    \[\abs{R} \leq \frac{2\abs{E[R,R]} + \abs{E[R,B]}+ 2}{2} \leq \abs{E[R,R]} + \abs{B} + 1.\]
    Finally, the lemma follows from the above combined with $\abs{V}=\abs{R}+\abs{B}$.
\end{proof}

\begin{lemma}\label{lem:tuples}
    Let $U$ be a fixed set and let $R \subseteq V \subseteq U$ denote random sets
    satisfying $\E[\abs{R}] \geq p \E[\abs{V}]$ for some $p > 1/2$.
    Let $P = (V,E)$ be an arbitrary path graph.
    Then we have
    \[
        \E[\abs{V}] \leq \frac{\E[\abs{E[R,R]}] + 1}{2p-1}.
    \]
\end{lemma}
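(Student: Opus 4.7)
The plan is to apply the preceding graph-theoretic \Cref{lem:redvsblue} pointwise and then take expectations. The coloring I would use is the natural one: color a vertex $v \in V$ red if $v \in R$ and blue otherwise, so that $B \coloneqq V \setminus R$ consists of the blue vertices. This is a valid (not necessarily proper) $2$-coloring of the induced path $P = (V,E)$, so \Cref{lem:redvsblue} yields the deterministic inequality
\[
\abs{V} \leq \abs{E[R,R]} + 2\abs{V \setminus R} + 1.
\]

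Next I would take expectations of both sides and use linearity together with the hypothesis $\E[\abs{R}] \geq p\E[\abs{V}]$, which gives $\E[\abs{V \setminus R}] = \E[\abs{V}] - \E[\abs{R}] \leq (1-p)\E[\abs{V}]$. Substituting,
\[
\E[\abs{V}] \leq \E[\abs{E[R,R]}] + 2(1-p)\E[\abs{V}] + 1,
\]
and collecting the $\E[\abs{V}]$ terms on the left gives $(2p-1)\E[\abs{V}] \leq \E[\abs{E[R,R]}] + 1$. Since $p > 1/2$, we have $2p - 1 > 0$, so dividing yields the stated bound (I read the denominator $2(p-1)$ in the statement as a typo for $2p-1$, since otherwise the right-hand side would be negative).

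There is not really a serious obstacle here: the work has already been done in \Cref{lem:redvsblue}, and the present lemma is essentially a probabilistic restatement of it. The only small care needed is to check that $R \subseteq V$ allows the direct identification of $V \setminus R$ with the blue class of the $2$-coloring, and that the graph $P$ is fixed once $V$ is realized so that $E[R,R]$ is well-defined as a random variable to which linearity of expectation applies cleanly.
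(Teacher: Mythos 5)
Your proof is correct and follows essentially the same route as the paper's: apply \Cref{lem:redvsblue} pointwise with $B = V\setminus R$, take expectations, and rearrange using $\E[\abs{V\setminus R}] \leq (1-p)\E[\abs{V}]$. You are also right that the denominator in the statement should read $2p-1$ rather than $2(p-1)$; the paper's own rearrangement produces $2p-1$, so this is a typo in the lemma as stated.
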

\begin{proof}
Let $B = V \setminus R$.
From our assumption we know that $\E[\abs{B}] \leq (1-p)\E[\abs{V}]$.
By applying \Cref{lem:redvsblue} we learn
\[
\E[\abs{V}] \leq \E[\abs{E[R,R]}] + 2\E[\abs{B}] + 1 \leq \E[\abs{E[R,R]}] + 2(1-p)\E[\abs{V}]+1.
\]
Rearranging gives us our conclusion that
$\E[\abs{V}] \leq \frac{\E[\abs{E[R,R]}] + 1}{2p-1}$.
\end{proof}

We now have the correct concepts in place to upper bound the length of the shadow path.
We can do this in two ways: either by using the objective value gap or by using the expected mean width of the feasible region.
Both bounds will be used when proving \Cref{thm:bound_p_slack-epoch}.
\begin{lemma}\label{lem:objective-gap-bound}
    Let $\rho > 0$, $t_2 > t_1\geq 0$ and $L>0$.
    Let $A \in \R^{n \times d}$ be a matrix with rows of Euclidean norm at most $1$, let $\vecc \in \R^d$, and let $\hat\veb \in \R^n$ be $(\veb,\eta,\gamma)$-exponentially distributed.
    Let $\vez \in \R^d$ be $L$-log-Lipschitz distributed. Note that throughout $\vez$ is sampled independently of $\hat \veb$.  
    Let $\mathcal P = (B^1,\dots,B^k)$ denote the shadow path from $t_1 \vecc + \vez$ to $t_2 \vecc + \vez$ of
    \[
    \max\{\vecc^\T\vex:A\vex\leq\hat\veb\}.
    \]
    Then we have
    \begin{align*}
        \E[\abs{\mathcal{P}}] \leq 40 + \frac{2\E[\vecc^\T (\vex^{B^k} - \vex^{B^1})]}{\rho\omega(\eta,n,d)} + \frac{2dL\rho(t_2-t_1)}{\ln(1/0.9)}
    \end{align*}
\end{lemma}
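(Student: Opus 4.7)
Let $\ell = \ln(1/0.9)/(dL)$. By construction every basis visited on the shadow path $\mathcal{P}$ is automatically in $\mathcal F \cap \mathcal M$: it is the basis optimal for the intermediate objective $t\vecc + \vez$ at some $t \in [t_1, t_2]$, which certifies both feasibility and positivity of the multipliers. Applying \Cref{lem:mostbasesaregood} to $\mathcal S = \mathcal P$ therefore yields $\E[\abs{\mathcal P \cap \mathcal F^+ \cap \mathcal M^+}] \geq 0.8\,\E[\abs{\mathcal P}] - 2$. Let $T$ denote the set of pivots $(B^i, B^{i+1})$ on $\mathcal P$ both of whose endpoints lie in $\mathcal F^+ \cap \mathcal M^+$. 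Viewing $\mathcal P$ as a path graph and two-coloring its vertices into \emph{good} (in $\mathcal F^+ \cap \mathcal M^+$) and \emph{bad}, an application of \Cref{lem:redvsblue} in expectation and rearrangement yields the first inequality $\E[\abs{\mathcal P}] \leq 2\,\E[\abs{T}] + 2$.

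The main work is bounding $\E[\abs{T}]$ via a two-potential argument. For a pivot $(B^i, B^{i+1}) \in T$, let $p \in B^i$ be the leaving index, let $[t_i^-, t_i^+]$ be the $t$-interval of optimality of $B^i$ along $\mathcal P$, and let $t^\diamond_i \in [t_i^-, t_i^+]$ be a witness of the good-multiplier property at $B^i$. Writing $m_p(t) = (t\vecc + \vez)^\T A_{B^i}^{-1} \vece_p$, linearity in $t$ combined with the exit condition $m_p(t_i^+) = 0$ and the good-multiplier lower bound $m_p(t^\diamond_i) \geq \ell$ yields the key joint inequality
\[
(t_i^+ - t^\diamond_i)\,\bigl|\vecc^\T A_{B^i}^{-1} \vece_p\bigr| \geq \ell.
\]
At the same time the good-slack property at $B^i$ together with the unit-row normalization of $A$ forces the edge length $\|\vex^{B^{i+1}} - \vex^{B^i}\|$ to be at least $\omega$. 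Fixing the trade-off parameter $\rho > 0$, each pivot in $T$ is classified as Type A, meaning $\vecc^\T(\vex^{B^{i+1}} - \vex^{B^i}) \geq \rho\omega$, or Type B, in which case the displayed inequality forces $t_i^+ - t^\diamond_i \geq \ell/\rho$.

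To conclude, I sum over $T$. Convexity of $t \mapsto \max\{(t\vecc+\vez)^\T \vex : A\vex \leq \hat\veb\}$ implies $\vecc^\T \vex^{B^i}$ is monotone non-decreasing in $i$ along $\mathcal P$, so telescoping bounds the number of Type A pivots by $\vecc^\T(\vex^{B^k} - \vex^{B^1})/(\rho\omega)$. The intervals $[t^\diamond_i, t_i^+]$ for distinct Type B pivots are pairwise disjoint subsets of $[t_1, t_2]$, so the number of Type B pivots is at most $\rho(t_2 - t_1)/\ell = dL\rho(t_2-t_1)/\ln(1/0.9)$. Taking expectations, summing the two bounds and substituting into $\E[\abs{\mathcal P}] \leq 2\,\E[\abs{T}] + 2$, with small additive losses absorbed into the constant $40$, yields the claimed inequality.

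\textbf{Main obstacle.} The delicate step is the dichotomy: the joint inequality $(t_i^+ - t^\diamond_i)|\vecc^\T A_{B^i}^{-1} \vece_p| \geq \ell$ must be turned into a clean either/or classification with matching constants. Converting a bound on $|\vecc^\T A_{B^i}^{-1} \vece_p|$ into a bound on objective progress along the pivot edge requires controlling the norm $\|A_{B^i}^{-1} \vece_p\|$ via the unit-row normalization of $A$, and picking $\rho$ so that the per-step thresholds $\rho\omega$ and $\ell/\rho$ balance cleanly against the two global potentials $\vecc^\T(\vex^{B^k}-\vex^{B^1})$ and $t_2-t_1$.
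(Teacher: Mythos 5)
Your proposal is correct and follows essentially the same route as the paper's proof: characterize $\mathcal P$ as $\mathcal M \cap \mathcal F$, invoke \Cref{lem:mostbasesaregood} and the path-coloring lemma to reduce to bounding $\E[\abs{T}]$, then classify each pivot in $T$ by whether $\abs{(\vecc^\T A_{B^i}^{-1})_p} \geq \rho$ (objective gain $\geq \rho\omega$ via the good-slack coordinate of $A_{B^i}(\vex^{i+1}-\vex^i)$) or not (parameter advance $\geq \ln(1/0.9)/(dL\rho)$ via the good-multiplier witness), and telescope both potentials. The only cosmetic difference is that the relevant slack is that of the leaving constraint $p$ at $\vex^{B^{i+1}}$ (i.e., $B^{i+1}\in\mathcal F^+$) rather than a property of $B^i$, but since $T$ requires both endpoints to be good this does not affect the argument.
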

\begin{proof}
In \Cref{sec:prelims} we characterized the shadow path as $\mathcal P = \mathcal M \cap \mathcal F$.
If $\E[\abs{\mathcal P}] \leq 40$ then the result is trivial, so we may assume for the remainder of this proof that $\E[\abs{\mathcal P}] > 40$.
Then $.05\E[|\mathcal{P}|] \geq 2$, so
by \Cref{lem:mostbasesaregood}, we have
\[
\E[\abs{\mathcal M^+ \cap \mathcal{F}^+}] \geq 0.8\E[\abs{\mathcal P}] - 2 \geq 0.75\E[\abs{\mathcal P}].
\]
Now define $T = \{i \in [k-1] : \{B^i, B^{i+1}\} \subseteq \mathcal M^+ \cap \mathcal{F}^+\}.$

We take the path graph $V = \mathcal{P}$ and $E = \{(B^i, B^{i+1}) : i \in [k-1]\}$.
Taking $R = \mathcal{F}^+ \cap \mathcal M^+$ makes $\abs{E[R,R]} = \abs{T}$ and we can apply \Cref{lem:tuples} with $p=0.75$ and find
\[
\E[\abs{\mathcal P}] \leq 2\E[\abs{T}] + 2.
\]

The remainder of this proof is dedicated to showing that the inequality
\begin{align}
\abs{T} &\leq \frac{\vecc^\T (\vex^{B^k} - \vex^{B^1})}{\rho\omega} + \frac{dL\rho(t_2-t_1)}{\ln(0.9)} \label{eq:earlypath}
\end{align}
always holds.
Consider an arbitrary $i \in T$.
Let $p \in B^i$ denote the unique index for which $p \notin B^{i+1}$.
Write $\vex^i = A_{B^i}^{-1} \hat\veb_{B^i}$ and $\vex^{i+1} = A_{B^{i+1}}^{-1} \hat\veb_{B^{i+1}}$.
Finally, define
\begin{align*}
t_{enter}^i = \min \{t \in [t_1,t_2] : (t\vecc + \vez)^\T A_{B^i}^{-1} \geq \veo\} \\
t_{exit}^i = \max \{t \in [t_1,t_2] : (t\vecc + \vez)^\T A_{B^i}^{-1} \geq \veo\}.
\end{align*}

Observe that $t \mapsto \big((t\vecc + \vez)^\T A_{B^i}^{-1}\big)_p$ is linear,
non-negative on $[t_{enter}^i, t_{exit}^i]$ and zero at $t=t_{exit}^i$.
It follows that $(\vecc^\T A_{B^i}^{-1})_p \leq 0$.
We make a case distinction on its value.
If $(\vecc^\T A_{B^i}^{-1})_p \leq -\rho$, then we consider the improvement in objective value along the edge and find that
\[
\vecc^\T(\vex^{i+1} - \vex^i) = (\vecc^\T A_{B^i}^{-1}) (A_{B^i}(\vex^{i+1} - \vex^i)) \geq \rho\omega,
\]
because $A_{B^i}(\vex^{i+1} - \vex^i)$ is $0$ on every coordinate except $p$, where it is at most $-\omega$.
Thus in this case we have
\begin{equation*}
    1 \leq \frac{\vecc^\T(\vex^{i+1} - \vex^i)}{\rho\omega}.
\end{equation*}

If $(\vecc^\T A_{B^i}^{-1})_p > -\rho$, we consider the increase in the parameter $t$ at the vertex.
Because $i \in \mathcal M^+$, there exists $t_{middle}^i \in [t_{enter}^i,t_{exit}^i]$ for which we have $(t_{middle}^i \vecc + \vez)^\T A_{B^i}^{-1} \geq \frac{\ln(1/0.9)}{dL}$.
Using the fact that $(\vecc^\T A_{B^i}^{-1})_p > -\rho$, we find that
for any $\gamma > 0$ we have
\[
\Big(\big((t^i_{middle}+\gamma)\vecc + \vez\big)^\T A_{B^i}^{-1}\Big)_p
> \frac{ \ln(1/0.9)}{dL} - \gamma \rho.
\]
 For any $t \in [t_{enter}^i, t_{middle}^i+\frac{\ln(1/0.9)}{dL\rho}]$
we have $\big((t\vecc + \vez)^\T A_{B^i}^{-1}\big)_p > 0$.
By construction we pivot out of $B^i$ by relaxing constraint $p$, meaning that the multiplier on $p$ is the first to reach $0$ as $t$ increases. It follows that the entire vector of multipliers satisfies
$(t\vecc + \vez)A_{B^i}^{-1} \geq \veo$.
It follows that $t_{exit}^i > t_{middle}^i + \frac{\ln(1/0.9)}{dL\rho}$ and so
$t_{exit}^i - t_{enter}^i >  t_{exit}^i - t_{middle}^i > \frac{ \ln(1/0.9)}{dL\rho}$.
We conclude for this case that
\[
1 < \frac{dL\rho(t_{exit}^i - t_{enter}^i)}{\ln(1/0.9)}.
\]
Thus, in either case we have
\[
1 \leq \frac{\vecc^\T(\vex^{i+1} - \vex^i)}{\rho\omega} + \frac{dL\rho(t_{exit}^i - t_{enter}^i)}{\ln(1/0.9)}
\]
We finish the proof of \eqref{eq:earlypath} by summing over all $i\in T$ and telescoping the two terms:
\begin{align*}
    \abs{T} = \sum_{i \in T} 1 &\leq \sum_{i \in T}\frac{\vecc^\T(\vex^{i+1} - \vex^i)}{\rho\omega} + \frac{dL\rho(t_{exit}^i - t_{enter}^i)}{\ln(1/0.9)}\\
    &\leq \frac{\vecc^\T(\vex^{k} - \vex^1)}{\rho\omega} + \frac{dL\rho(t_2 - t_1)}{\ln(1/0.9)}
\end{align*}
The first summand is a random variable, the second is not.
The lemma's statement for the case $\E[\abs{\mathcal P}] \geq 40$ now follows as
$\E[\abs{\mathcal{P}}] \leq 2\E[\abs{T}]+2 \leq 2\frac{\E[\vecc^\T (\vex^{B^k} - \vex^{B^1})]}{\rho\omega(\eta,n,d)} + 2\frac{dL\rho(t_2-t_1)}{\ln(1/0.9)} + 2$
\end{proof}

\begin{lemma}\label{lem:mean-width-bound}
    Let $\rho > 0$, $t_2 > t_1\geq 0$ and $L>0$.
    Let $A \in \R^{n \times d}$ be a matrix with rows of Euclidean norm at most $1$, let $\vecc \in \R^d$, and let $\hat\veb \in \R^n$ be $(\veb,\eta,\gamma)$-exponentially distributed.
    Let $\vez \in \R^d$ be $L$-log-Lipschitz distributed.
    Let $\mathcal P = (B^1,\dots,B^k)$ denote the shadow path from $t_1 \vecc + \vez$ to $t_2 \vecc + \vez$ of
    \[
    \max\{\vecc^\T\vex:A\vex\leq\hat\veb\}.
    \]
    Then we have that
    \[
        \E[\abs{\mathcal P}] \leq 40 + \frac{2\E[\vez^\T (\vex^{B^1} - \vex^{B^k})]}{\rho\omega(\eta,n,d)} + \frac{2dL\rho\ln(t_2/t_1)}{\ln(1/0.9)}.
    \]
\end{lemma}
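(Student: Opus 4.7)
The plan is to mirror the structure of \Cref{lem:objective-gap-bound}, replacing the two potential functions used there (namely $\vecc^\T\vex$ and the parameter $t$) by the auxiliary objective value $\vez^\T\vex$ and by $\ln t$. Exactly as there, if $\E[\abs{\mathcal P}] \leq 40$ the bound is immediate, so we may assume $\E[\abs{\mathcal P}] > 40$. Then \Cref{lem:mostbasesaregood} applied with $\mathcal S = \mathcal P = \mathcal M \cap \mathcal F$ and \Cref{lem:tuples} with threshold $0.75$ reduce the task to bounding $\E[\abs{T}]$, where $T = \{i : \{B^i, B^{i+1}\} \subseteq \mathcal M^+ \cap \mathcal F^+\}$, and yield $\E[\abs{\mathcal P}] \leq 2\E[\abs{T}] + 2$.

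For each $i \in T$, let $q$ be the unique index in $B^i \setminus B^{i+1}$ (the leaving constraint) and set $\alpha = (\vecc^\T A_{B^i}^{-1})_q$ and $\beta = (\vez^\T A_{B^i}^{-1})_q$. Since $q$ leaves at $t = t_{exit}^i$, we have $\alpha \leq 0$, $\beta \geq 0$, and $\beta = -t_{exit}^i \alpha$. We case-split on $\beta$. In the case $\beta \geq \rho$, exactly as in \Cref{lem:objective-gap-bound} but with $\vez$ in place of $\vecc$,
\[
\vez^\T\bigl(\vex^{B^i} - \vex^{B^{i+1}}\bigr) = \beta \cdot \bigl(\hat\veb_q - A_q \vex^{B^{i+1}}\bigr) \geq \rho \cdot \omega(\eta,n,d),
\]
because $B^{i+1} \in \mathcal F^+$ forces the slack of constraint $q$ at $\vex^{B^{i+1}}$ to be at least $\omega(\eta,n,d)$. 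In the case $\beta < \rho$, we use $B^i \in \mathcal M^+$ to obtain $t_{middle}^i \in [t_{enter}^i, t_{exit}^i]$ with $(t_{middle}^i\vecc + \vez)^\T A_{B^i}^{-1} \geq \frac{\ln(1/0.9)}{dL}\mathbf{1}$. Reading this at coordinate $q$ and subtracting $t_{exit}^i \alpha + \beta = 0$ gives $(t_{exit}^i - t_{middle}^i)(-\alpha) \geq \frac{\ln(1/0.9)}{dL}$, and using $-\alpha = \beta/t_{exit}^i < \rho/t_{exit}^i$ we rearrange to $t_{middle}^i/t_{exit}^i \leq 1 - \frac{\ln(1/0.9)}{dL\rho}$, so $\ln(t_{exit}^i/t_{enter}^i) \geq \ln(t_{exit}^i/t_{middle}^i) \geq \frac{\ln(1/0.9)}{dL\rho}$ by $-\ln(1-x) \geq x$.

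To finish, we sum over $i \in T$ and telescope. The function $i \mapsto \vez^\T\vex^{B^i}$ is non-increasing along the shadow path (a standard shadow-plane fact: projecting to the $(\vecc,\vez)$-plane traces a polygon boundary on which $\vecc^\T\vex$ is non-decreasing and $\vez^\T\vex$ is non-increasing), so the number of case-1 indices is at most $\vez^\T(\vex^{B^1} - \vex^{B^k})/(\rho\,\omega(\eta,n,d))$. The intervals $[t_{enter}^i, t_{exit}^i]$ satisfy $t_{exit}^i = t_{enter}^{i+1}$ and together tile $[t_{enter}^1, t_{exit}^k] \subseteq [t_1,t_2]$, so $\sum_i \ln(t_{exit}^i/t_{enter}^i) \leq \ln(t_2/t_1)$, bounding the number of case-2 indices by $\frac{dL\rho \ln(t_2/t_1)}{\ln(1/0.9)}$. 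Taking expectations and combining with $\E[\abs{\mathcal P}] \leq 2\E[\abs{T}] + 2$ yields the claim; any residual factors of $2$ are absorbed into the $40$ additive buffer or by rescaling $\rho$.

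The main obstacle I anticipate is the logarithmic case-2 analysis: unlike in \Cref{lem:objective-gap-bound}, where $t$ advances additively and each good pivot consumes $\Omega(1/(dL\rho))$ of the length $t_2-t_1$, here each good pivot must consume a constant fraction $\ln(1/0.9)/(dL\rho)$ of the \emph{logarithmic} length $\ln(t_2/t_1)$. This crucially exploits the identity $\beta = -t_{exit}^i \alpha$ to turn a small $\beta$ into a large relative jump in $t$. A smaller subtlety is the monotonicity of $\vez^\T\vex$ along the shadow path, which is immediate from the two-dimensional shadow polygon interpretation recalled in \Cref{sec:shadow_vertex_magic} but was not needed in \Cref{lem:objective-gap-bound}.
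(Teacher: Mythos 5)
Your proposal is correct and follows essentially the same route as the paper's proof: the same reduction via \Cref{lem:mostbasesaregood} and \Cref{lem:tuples} to bounding $\E[\abs{T}]$, the same two-case potential argument charging each good pivot either to a decrease of at least $\rho\omega$ in $\vez^\T\vex$ or to an advance of at least $\ln(1/0.9)/(dL\rho)$ in $\ln t$, and the same telescoping. The only differences are cosmetic: your case-2 step reads the $\mathcal M^+$ condition at the single leaving coordinate and uses $\beta = -t_{exit}^i\alpha$ directly, where the paper instead shifts the whole multiplier vector by $-\tfrac{\ln(1/0.9)}{dL\rho}\vez$ and rescales (the two computations are equivalent), and your explicit appeal to the monotonicity of $\vez^\T\vex$ along the path makes precise a telescoping step the paper leaves implicit; the leftover factor of $2$ you flag is likewise present in the paper's own final display and is absorbed into downstream constants.
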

\begin{proof}
In \Cref{sec:prelims} we characterized the shadow path as $\mathcal P = \mathcal M \cap \mathcal F$.
If $\E[\abs{\mathcal P}] \leq 40$ then the result is trivial, so we may assume for the remainder of this proof that $\E[\abs{\mathcal P}] > 40$.
By \Cref{lem:mostbasesaregood}, we have
\[
\E[\abs{\mathcal M^+ \cap \mathcal{F}^+}] \geq 0.8\E[\abs{\mathcal M \cap \mathcal F}] - 1 \geq 0.75\E[\abs{\mathcal P}].
\]
Now define $T = \{i \in [k-1] : \{B^i, B^{i+1}\} \subseteq \mathcal M^+ \cap \mathcal{F}^+\}.$

We take the path graph $V = \mathcal{P}$ and $E = \{(B^i, B^{i+1}) : i \in [k-1]\}$.
Taking $R = \mathcal{F}^+ \cap \mathcal M^+$ makes $\abs{E[R,R]} = \abs{T}$ and we can apply \Cref{lem:tuples} with $p=0.75$ and find
\[
\E[\abs{\mathcal P}] \leq 2\E[\abs{T}] + 2.
\]

The biggest part of the remainder of this proof is dedicated to showing the deterministic statement
\begin{align}
\abs{T} &\leq \frac{\vez^\T (\vex^{B^k} - \vex^{B^1})}{\rho\omega} + \frac{dL\rho\ln(t_2/t_1)}{\ln(0.9)}\label{eq:latepath}
\end{align}
holds.
Consider an arbitrary $i \in T$.
Let $p \in B^i$ denote the unique index for which $p \notin B^{i+1}$.
Write $\vex^i = A_{B^i}^{-1} \hat\veb_{B^i}$ and $\vex^{i+1} = A_{B^{i+1}}^{-1} \hat\veb_{B^{i+1}}$.
Finally, define
\begin{align*}
t_{enter}^i = \min \{t \in [t_1,t_2] : (t\vecc + \vez)^\T A_{B^i}^{-1} \geq \veo\} \\
t_{exit}^i = \max \{t \in [t_1,t_2] : (t\vecc + \vez)^\T A_{B^i}^{-1} \geq \veo\}.
\end{align*}

Recall that, as before, $t \mapsto \big((t\vecc + \vez)^\T A_{B^i}^{-1}\big)_p$ is linear,
non-negative on $[t_{enter}^i, t_{exit}^i]$ and zero at $t=t_{exit}^i$.
It follows that $(\vez^\T A_{B^i}^{-1})_p \geq \veo$.
We make a case distinction on its value.
In case of $(\vez^\T A_{B^i}^{-1})_p \geq \rho$, we consider the decrease in auxiliary objective value along the edge and find that
\[
\vez^\T(\vex^{i+1} - \vex^i) = (\vez^\T A_{B^i}^{-1}) (A_{B^i}(\vex^{i+1} - \vex^i)) \leq -\rho\omega,
\]
because $A_{B^i}(\vex^{i+1} - \vex^i)$ is $0$ on every coordinate except $p$, where it is at most $-\omega$.
Thus in this case we have
\begin{equation*}
    1 \leq \frac{\vez^\T(\vex^i - \vex^{i+1})}{\rho\omega}.
\end{equation*}
In case of $(\vez^\T A_{B^i}^{-1})_p \leq \rho$, we once again look at the parameter increase from $t_{exit}^i$ to $t_{enter}^i$. We aim to lower bound $t_{exit}^i/ t_{enter}^i$. 
Since $i \in \mathcal M^+$, there exists $t \in [t_{enter}^i,t_{exit}^i]$ be such that $(t \vecc + \vez)^\T A_{B^i}^{-1} \geq \frac{\ln(1/0.9)}{dL}$. Then, since $\left(\vez^{\T} A_{B^{i}}^{-1}\right)_{p} \leq \rho $, by assumption, $-\left(\vez^{\intercal} A_{B^{i}}^{-1}\right)_p \geq -\rho$ meaning that
\begin{align*}
    \left(\left(t \vecc + \vez - \frac{\ln(1/0.9)}{d L \rho} \vez\right)^{\T} A_{B^{i}}^{-1} \right)_{p} &= \left(\left(t \vecc + \vez)\right)^{\T} A_{B^{i}}^{-1}\right)_{p} - \left(\frac{\ln(1/0.9)}{d L \rho} \vez^{\intercal} A_{B^{i}}^{-1}\right)_{p} \\
    &\geq \frac{\ln(1/.9)}{dL} -\frac{\ln(1/.9)}{d L \rho} \rho  \\
    &= 0.
\end{align*}
By construction, since $p$ is the variable leaving the basis when pivoting from $B^{i}$, $p$ is the first coordinate to decrease to $0$ when decrementing $\vez$. In particular, $t \vecc + \vez - \frac{\ln(1/.9)}{d L \rho} \vez$ and $t \vecc + \vez$ are both in the normal cone given by the non-negative row span of $A_{B^{i}}$. That is 
\begin{equation}\label{eq:movez}
(t\vecc + \vez - \frac{\ln(1/0.9)}{dL\rho}\vez)^\T A_{B^i}^{-1} \geq \veo.
\end{equation}
By convexity of the normal cone, if $1 - \frac{\ln(1/.9)}{d L \rho} \leq 0$ then $t \vecc$ would also be in that cone, but by assumption, $B^{i}$ is not the final basis of the path. Therefore, we must also have $1 - \frac{\ln(1/.9)}{d L \rho} > 0$.

As a consequence, we may rescale by a factor $(1-\frac{\ln(1/0.9)}{dL\rho})^{-1}$ and observe that \eqref{eq:movez} is equivalent to the assertion that
$(t'\vecc + \vez)^\T A_{B^i}^{-1} \geq \veo$
for $t' = \frac{t}{1-\frac{\ln(1/0.9)}{dL\rho}}$.
This implies
\[
t_{enter}^i \leq t = t'(1-\frac{\ln(1/0.9)}{dL\rho}) \leq t_{exit}^i(1-\frac{\ln(1/0.9)}{dL\rho}),
\]
which we can rewrite to obtain
$t_{exit}^i/t_{enter}^i \geq \frac{1}{1-\frac{\ln(1/0.9)}{dL\rho}}$
and then, since $\ln(1/(1-x)) \geq x$ for $x \in [0,1)$,
\[
    \ln(t_{exit}^i/t_{enter}^i) \geq \ln(\frac{1}{1-\frac{\ln(1/0.9)}{dL\rho}}) \geq \frac{\ln(1/0.9)}{dL\rho}.
\]

This once again brings us to an upper bound for the unit
\[
    1 \leq \frac{dL\rho\ln(t_{exit}^i/t_{enter}^i)}{\ln(1/0.9)}.
\]
As for the other inequality, we telescope the sums to find
\begin{align*}
    \abs{T} = \sum_{i \in T} 1 &\leq \sum_{i \in T} \frac{\vez^\T(\vex^i - \vex^{i+1})}{\rho\omega} + \frac{dL\rho\ln(t_{exit}^i/t_{enter}^i)}{\ln(1/0.9)}\\
    &\leq \frac{\vez^\T(\vex^1 - \vex^k)}{\rho\omega} + \frac{dL\rho\ln(t_2/t_1)}{\ln(1/0.9)}.
\end{align*}
This finishes the proof of \eqref{eq:latepath}.
The first summand is a random variable, the second is not.
The lemma's statement for the case $\E[\abs{\mathcal P}] \geq 40$ now follows as
$\E[\abs{\mathcal{P}}] \leq 2\E[\abs{T}]+2 \leq 2\frac{\E[\vez^\T (\vex^{B^1} - \vex^{B^k})]}{\rho\omega(\eta,n,d)} + 2\frac{dL\rho\ln(t_2/t_1)}{\ln(1/0.9)} + 2$.
\end{proof}

In order to state our main result, we need two measurements of the feasible set. The first is the mean width:
\begin{definition}\label{def:meanwidth}
    Let $A \in \R^{n \times d}$, $\vecc \in \R^d$ and let $\hat\veb \in \R^n$ be $(\veb,\eta,\gamma)$-exponentially distributed.
    Assume that $P=\{\vex \in \R^d : A\vex \leq \hat\veb\}$ is bounded.
    For random $\vez$ distributed spherically symmetrically,
    let $\vex^1$ denote the vertex of $P$ maximizing $\vez$, or $\vex^1=\veo$ when $P=\emptyset$.
    We define the expected \emph{mean width} to be the quantity
    \[
        M = 2\E_{\hat\veb,\vez}\bigg[\frac{\vez^\T\vex^1}{\norm{\vez}}\bigg] = \frac{\E_{\hat\veb,\vez}[\vez^\T\vex^1]}{\E_{\vez}\big[\norm{\vez}\big].}
    \]
    The number $M$ will only be used when $A$ and $\hat\veb$ are clear from context.
\end{definition}

Note that the mean width will require the random objective to be spherically symmetrically distributed. There are indeed distributions which are both spherically symmetric and $L$-log-Lipschitz:
\begin{remark}
The $L$-exponential distribution (\Cref{def:lexponential}) is both spherically symmetric and $L$-log-Lipschitz.  
\end{remark}

\noindent The second measurement we require is for the width in the objective direction:
\begin{lemma}\label{lem:N}
    Let $A \in \R^{n \times d}$, $\vecc \in \R^d$ and let $\hat\veb \in \R^n$ be $(\veb,\eta,\gamma)$-exponentially distributed.
    Let $B^0$ be a basis for which $\vecc^\T A_{B^0}^{-1} \leq \veo$
    and $B^1$ be a basis for which $\vecc^\T A_{B^1}^{-1} \geq \veo$.
    Define $N_{B^0,B^1} = \vecc^\T(A_{B^1}^{-1}\veb_{B^1} - A_{B^0}^{-1} \veb_{B^0}) + \gamma\eta(\norm{\vecc^\T A_{B^{1}}^{-1}}_{1} + \norm{\vecc^\T A_{B^{0}}^{-1}}_1)$.
    Define $N = N(A,\veb,\vecc,\gamma,\eta) = \min_{B^0,B^1} N_{B^0,B^1}$,
    where the minimum is taken over all valid choices of $B^0, B^1$.
    
    Let $\vex^{*,1}$ and $\vex^{*,0}$ denote optimal vertex solutions to respectively
\begin{align*}
    \operatorname{maximize} \quad & \vecc^\T \vex & \text{and}&& \operatorname{minimize} \quad & \vecc^\T \vex \\
    \operatorname{subject~to} \quad &  A\vex \leq \hat \veb &&& \operatorname{subject~to} \quad &  A\vex \leq \hat \veb 
\end{align*}
    Then $\E[\vecc^\T(\vex^{*,1}-\vex^{*,0})] \leq N$.
\end{lemma}
\begin{proof}
    We prove the inequality for any valid choice of $B^0, B^1$.
    For any realization of $\hat\veb$ we have $\vecc^\T A_{B^1}^{-1}\hat \veb_{B^1} \geq \vecc^\T \vex^{*,1}$ and $\vecc^\T A_{B^0}^{-1}\hat \veb_{B^0} \leq \vecc^\T \vex^{*,0}$ by weak duality.
    This yields
    \begin{align*}
        \E[\vecc^\T\vex^{*,1}] &\leq \E[\vecc^\T A_{B^1}^{-1}\hat \veb_{B^1}] \\
        &= \vecc^\T A_{B^1}^{-1}\E[\hat \veb_{B^1}] \\
        &= \vecc^\T A_{B^1}^{-1}(\veb_{B^1} + \gamma\eta\ve1) \\
        &\leq \vecc^\T A_{B^1}^{-1}\veb_{B^1} + \gamma\eta\norm{\vecc^\T A_{B^1}^{-1}}_1,
    \end{align*}
    where the last inequality follows because $B^1$ is optimal and hence $\vecc^\T A_{B^1}^{-1} \geq \veo$.
    Similarly we have $\E[\vecc^\T\vex^{*,0}] \geq \vecc^\T A_{B^1}^{-1}\veb_{B^1} - \gamma\eta\norm{\vecc^\T A_{B^1}^{-1}}_1$.
    Taking the difference we find $\E[\vecc^\T\vex^{*,1}] - \E[\vecc^\T\vex^{*,0}] \leq N$ as needed.
\end{proof}

\begin{theorem}
\label{thm:bound_p_slack-epoch}
Let $\eps > 0$ satisfy $\eps^{-1} \geq \frac{1}{1000 d^2 L \gamma \norm{\vecc}_2}$.
    Let $A \in \R^{n \times d}$ be a matrix with rows of Euclidean norm at most $1$, let $\vecc \in \R^d$, and let $\hat\veb \in \R^n$ be $(\veb,\eta,\gamma)$-exponentially distributed.
    Let $\vez \in \R^d$ be $L$-log-Lipschitz and spherically symmetrically distributed.
If $\hat\veb$ and $\vez$ are independent then the shadow path of 
\[
\max\{\vecc^\T\vex:A\vex\leq\hat\veb\}
\]
from a vertex maximizing $\vez$ to a vertex
maximizing $\vecc + \eps \vez$ has expected length at most
\[101+ 9\sqrt{\frac{\E[\| \vez\|] M dL}{\omega} \ln\left( \frac{dLN}{\omega \varepsilon} \right) },\]
where $\omega = \omega(\eta,n,d)$ was defined as $\omega(\eta,n,d) = \frac{\eta}{1240 d \ln(n)}$, $M$ is the expected mean width,
and $N$ is as defined in \Cref{lem:N}.
\end{theorem}
\begin{proof}
    We separate the path from $\vez$ to $\vecc + \eps \vez$ into two subpaths.
    For $0 < t' \leq \eps^{-1}$ to be determined, the first path goes from $\vez$ to $t'\vecc + \vez$ and
    the second goes from $t'\vecc + \vez$ to $\eps^{-1}\vecc + \vez$.
    Let $\vex^1$ be the vertex maximizing $\vez$, let $\vex^2$ be the vertex maximizing $t'\vecc + \vez$, let $\vex^3$ be the vertex maximizing $\eps^{-1}\vecc + \vez$, and finally let $\vex^{\ast}$ be the vertex on the shadow path maximizing $\vecc$. 
    
    By applying \Cref{lem:objective-gap-bound} with $t_0=0, t_1=t'$ and choosing $\rho = \frac{\ln(1/0.9)}{dLt'}$, we know that the first path has expected length at most
    \begin{align*}
    40 + 2\frac{\E[\vecc^\T(\vex^2 - \vex^1)]\cdot dLt'}{\ln(1/0.9)\omega}
     + 2 &\leq
    42 + 19\frac{\E[\vecc^\T(\vex^2 - \vex^1)] \cdot dLt'}{\omega} \\
    &\leq
    42 + 19\frac{\E[\vecc^\T(\vex^* - \vex^1)] \cdot dLt'}{\omega} \\
    & \leq 42 + 19 \frac{NdLt'}{\omega},
    \end{align*}
    where the last inequality followed from \Cref{lem:N}.
    We choose $t' = \frac{ \omega}{dLN}$ to make sure that the first sub-path has expected length at most $42 + 19 = 61$.
    
    Note that $t' = \frac{ \omega}{dLN} = \frac{\eta}{1240 d^2LN\ln(n)} \leq \frac{\eta}{1000 d^2L \cdot \gamma\eta\norm{\vecc^\T A_{B^1}^{-1}}_1}$,
    where $B^1$ is the optimal feasible basis for maximizing $\vecc^\T\vex$.
    The rows of $A$ reach have 2-norm at most $1$,
    which makes that $\norm{\vecc^\T A_{B^1}^{-1}}_1 \geq \norm{\vecc}_2$ by the triangle inequality.
    Hence $t' \leq \frac{1}{1000d^2 L \gamma \norm{\vecc}_2} \leq \eps^{-1}$ as promised.

   Next, note that the shadow path from $\vez$ to $\vecc$ is monotonically decreasing in $\vez$.
   In particular, since $\vex^{\ast}$ is at the end of the shadow path and $\vex^{1}$ is at the start, 
   \[\vez^{\T} \vex^{1} \geq \vez^{\T} \vex^{2} \geq \vez^{\T} \vex^{3} \geq \vez^{\T} \vex^{\ast}.\]
   In particular this gives $0 \leq \E[\vez^\T(\vex^2 - \vex^3)] \leq \E[\vez^\T(\vex^1-\vex^\ast)]$.
   For the second sub-path, which goes from $t'\vecc + \vez$ to $\eps^{-1}\vecc + \vez$,
    we use \Cref{lem:mean-width-bound} and find it to have expected length at most
    \[
    40 + \frac{2\E[\vez^\T (\vex^2 -\vex^3)]}{\rho\omega} + \frac{2dL\rho\ln(1/(t'\eps))}{\ln(1/0.9)} \leq
    40 + 2\frac{\E[\vez^\T \vex^1] - \E[\vez^\T \vex^*]}{\rho\omega} + \frac{2dL\rho\ln(1/(t'\eps))}{\ln(1/0.9)}.
    \]
    Note that $\vex^*$ is independent of $\vez$.
    Since $\vez$ is distributed symmetrically around the origin that implies
    $\E[\vez^\T \vex^*] = 0$.
    The vertex $\vex^1$ was defined as the maximizer of $\vez$, meaning that
    $2 \E[\vez^\T \vex^1] = \E[\norm{\vez}] \cdot M$
    This means that the second sub-path has expected length at most
    \[
    40 + \frac{\E\big[\norm{\vez}\big]\cdot M}{\rho\omega} + \frac{2dL\rho\ln\big(\frac{dLN}{\omega\eps}\big)}{\ln(1/0.9)}.
    \]
    Now we choose $\rho =\sqrt{\frac{2\E\big[\norm{\vez}\big] \cdot M \ln(1/.9)}{\omega dL \ln\left(\frac{dLN}{\omega \eps}\right)}}$
    to minimize this sum. Then the resulting minimizer is 
    \[40 + 2\sqrt{\frac{2\E[\| \vez\|] M dL \ln\left(\frac{dLN}{\omega \varepsilon} \right)}{ \omega \ln(1/.9)} } \leq 40 + 9\sqrt{\frac{\E[\| \vez\|] M dL \ln\left(\frac{dLN}{\omega \varepsilon} \right)}{ \omega } }.\]
    Adding this to the expected length $61$ of the first subpath yields the desired result.
\end{proof}

\Cref{thm:bound_p_slack-epoch} is the main technical version of our shadow bound.
We will derive two corollaries from it that will be easier to use.

\begin{theorem}\label{cor:easytouse}
Let $\eps > 0$ satisfy $\eps^{-1} \geq \frac{1}{1000 d^3 \gamma \norm{\vecc}_2}$.
Let $A \in \R^{n \times d}$ be a matrix with rows of Euclidean norm at most $1$, let $\vecc \in \R^d$, and let $\hat\veb \in \R^n$ be $(\veb,\eta,\gamma)$-exponentially distributed.
Let $\vetheta \in \mathbb{S}^d$ be distributed uniformly at random, independent from $\hat\veb$.
If the LP
\[
\max\{\vecc^\T\vex:A\vex\leq\hat\veb\}
\]
has expected mean width $M$, then the shadow path from a vertex maximizing $\vetheta$ to a vertex maximizing $\vecc + \varepsilon \vetheta$ has expected length at most
\[102 + 316\sqrt{\frac{M d^3 \ln(n)}{\eta} \ln\left(\frac{6742 d^{3} N \ln^2(n)}{\eta \varepsilon} \right)},\]
where $N$ is as defined in \Cref{lem:N}.
\end{theorem}
\begin{proof}
Let $\vez$ be an $L$-exponentially distributed random vector,
sampled in such a manner that $\vetheta = \frac{\vez}{\norm{\vez}}$.
Instead of measuring the path from $\vetheta$ to $\vecc + \eps\vetheta$, we instead consider the path from $\vez$ to $\vecc + \eps
\frac{1}{2edL^{-1}\ln(n)}\vez$.
By Theorem \ref{thm:bound_p_slack-epoch}, the expected number of steps on this latter path is at most
\[101 +  9 \sqrt{\frac{\E[\| \vez\|] M dL}{\omega} \ln\left( \frac{dLN (2 e d L^{-1} \ln(n))}{\omega \varepsilon} \right) } = 101 + 9d \sqrt{\frac{\E[\| \vez\|] M dL}{\omega} \ln\left(\frac{2ed^{2}N \ln(n)}{\omega \varepsilon} \right)},\]
where we fill in $\omega = \frac{\eta}{1240 d \ln(n)}$ as specified in \Cref{def:feasible-gap-bases}.
    By \Cref{fact:norm_exp_distr}, we have
    $\E\big[\norm{\vez}\big] \cdot L = d$ which yields the complete bound of    
\[101 + 9\sqrt{\frac{1240 M d^3 \ln(n)}{\eta} \ln\left(\frac{2480ed^{3}N \ln^2(n)}{\eta \varepsilon} \right)} \leq 101 + 316\sqrt{\frac{M d^3 \ln(n)}{\eta} \ln\left(\frac{6742 d^{3}N\ln^2(n)}{\eta \varepsilon} \right)}.\]

Now consider the expected number of bases which are on the former path ($\vetheta$ to $\vecc + \eps\vetheta$) but not on the latter path ($\vetheta$ to $\vecc + \eps\frac{1}{2edL^{-1}\ln(n)}\vez$).
If there are any such bases, then there are at most $\binom{n}{d} \leq n^d$ many.
However, such bases exist only when $\norm{\vez} \geq 2edL^{-1}\ln(n)$.
By Lemma \ref{lem:tailbound}, we have $\Pr[\norm{\vez} \geq 2 e d L^{-1} \ln(n)] \leq n^{-d}$.
Hence the expected number of bases on the former path but not on the latter is at most $1$.
This finishes the proof.
\end{proof}

Finally, we require a variant of \Cref{cor:easytouse} which allows for two non-perturbed variable bounds.
This will allow us to use the semi-random shadow vertex method on LP's of the form
\begin{align}
    \operatorname{maximize} \quad &\vecc^\T \vex \nonumber \\
    \operatorname{subject~to} \quad & A \vex \leq \hat \veb \nonumber \\
    &0 \leq \vex_d \leq U \tag{One Fixed Bound} \label{eq:oneboundfix}
\end{align} 
with $\vecc \in \{\vece_d, -\vece_d\}$.

\begin{theorem}\label{thm:TODO}
Assume $n > d \geq 2$.
Let $\eps > 0$ satisfy $\eps^{-1} \geq \frac{1}{1000 d^3 \gamma \norm{\vecc}_2}$.
Let $A \in \R^{n \times d}$ be a matrix with rows of Euclidean norm at most $1$, let $\vecc \in \{\vece_d, -\vece_d\}$, $U > 0$ and let $\hat\veb \in \R^n$ be $(\veb,\eta,\gamma)$-exponentially distributed.
Let $\bar\vetheta \in \sfe$ be distributed uniformly at random, independent from $\hat\veb$.
Let $(-\hat L, \hat U)$ be $((0,U),\eta,\gamma)$-exponentially distributed, again independent of what came before.
If the feasible set of
\begin{align}
    \operatorname{maximize} \quad &\vecc^\T \vex \nonumber \\
    \operatorname{subject~to} \quad & A \vex \leq \hat \veb \nonumber \\
    &\hat L \leq \vex_d \leq \hat U \label{eq:oneboundperturb}
\end{align} 
has expected mean width $M$, then the shadow path from a vertex maximizing $\bar\vetheta$ to a vertex maximizing $\vecc + \varepsilon \bar\vetheta$ on the feasible set \eqref{eq:oneboundfix} has expected length at most
\[113 + 344d \sqrt{\frac{d \ln(n+2) M}{\eta} \ln\left(\frac{6742d^{3}(U + 2\eta\gamma)\ln^2 (n+2)}{\eta \varepsilon} \right)}.\]
\end{theorem}
\begin{proof}
    Since $\vecc = \vece_d$ or $\vecc = -\vece_d$, the shadow path will leave or enter the faces of the feasible set with $\vex_d=0$ or $\vex_d=U$ at most once each.
    For all other pivot steps, the bases visited are also bases of the relaxed feasible set $\{\vex : A \vex \leq \hat\veb\}$.
    Whenever $\hat 0 \leq 0$ and $U \leq \hat U$, these same bases are also feasible bases on the same shadow path on \eqref{eq:oneboundperturb}.
    By \Cref{lem:perturbationsbounded} this condition happens with probability at least $1-\frac{2}{(n+d)^2}$.
    As such, the expected number of pivot steps on the shadow path from $\bar\vetheta$ to $\vecc + \varepsilon \bar\vetheta$ for this new LP gives an upper bound on the length of the path we seek to bound.
    Using \Cref{cor:easytouse} we find that we can expect at most
    \[
        102 + 316\sqrt{\frac{M d^3 \ln(n+2)}{\eta} \ln\left(\frac{6742 d^{3} N \ln^2(n+2)}{\eta \varepsilon} \right)}
    \]
    bases on the shadow path from $\bar\vetheta$ to $\vecc + \varepsilon \bar \vetheta$ on the feasible set of \eqref{eq:oneboundperturb}.
    Resampling $\hat 0$ and $\hat U$ until $\hat 0 \leq 0$ and $\hat U \leq U$ are satisfied requires an average of at most
    $\frac{1}{1-\frac{2}{(n+d)^2}}$ samples, and Wald's equation tells us that
    our final path has expected length no more than
    \[
        2+\frac{1}{1-\frac{2}{(n+d)^2}}\left(102 + 316\sqrt{\frac{M d^3 \ln(n+2)}{\eta} \ln\left(\frac{6742 d^{3} N \ln^2(n+2)}{\eta \varepsilon} \right)}\right).
    \]
    The extra $2$ is from the pivots to and from the faces $\vex_d=0$ and $\vex_d=U$.
    We compute new constants using $n > d \geq 2$.
    
    The value of $N$ can be computed exactly. Any feasible basis maximizing $\vece_d$ has objective value $U$ on the unperturbed problem and contains the inequality $s \leq \hat U$, which will be the sole constraint to have a non-zero Lagrange multiplier and that multiplier is $1$.
    Similarly for any feasible basis minimizing $\vece_d$ over the unperturbed problem having objective value $0$ and a multiplier of $1$ on the basic constraint $-s \leq 0$.
    Thus $N = U + 2\gamma\eta$.
\end{proof}

\section{Discussion}\label{sec:scorecard}

In this section we evaluate how well the present analysis comports with further observation and how well our overall methodological structure ultimately comported with the general by-the-book analysis framework.  Of course, in any mathematical analysis compromises are inevitable in order to accommodate the analyst's ability to produce proofs.  Since the impetus for performing a by-the-book analysis (and for the by-the-book analysis framework itself) is to produce theoretical results which have greater capacity to explain what is observed in practice, we propose that it is essential to a by-the-book analysis to evaluate how successful the analysis was in comporting with its observations.  This is in part to provide a realistic assessment of the explanatory power of our results, but for this work, in which we propose the by-the-book analysis framework, it is also to differentiate the strengths and weaknesses of the by-the-book framework as a general algorithm analysis framework from the strengths and weaknesses of the particular by-the-book analysis we performed.

\subsection{Input Assumptions}

Our bound is written as polynomial in:
\[d,n,\eta,M, \ln(N), \ln(\kappa), \feastol, \log(\frac{1}{\opttol}).\]
The standard input size measures for any linear program are $d$ and $n$, the number of variables and number of inequalities respectively. For weakly polynomial run-time bounds, $\ln(\kappa)$ and $\ln(N)$ are also standard to include as part of the measure of the input-size as they are bounded by the bit-complexity of the linear program.
The remaining parameters $\eta$, $\text{optTol}$, and $\text{feasTol}$ are all directly related to our modeling assumptions.  We contend that, as supported by our research and computational experiments, it is reasonable to assume the chosen bounds on these numbers in a real-world computational setting. 
Likewise, our assumption that the rows of the input constraint matrix have Euclidean norm at most 1 is reasonable and unobtrusive. This is due to the fact that common LPs are hyper sparse, often containing only a constant number of non-zero elements in every inequality constraint, and each of these entries being bounded in magnitude. The mean width $M$ was already discussed in detail in \Cref{par:MIPLIB}.

\subsection{Model Assumptions}

Our analysis has strong requirements on the pivot rule and ratio test, needing variants that are considered inefficient in practice and needing both to be performed exactly, without any loss of numerical precision.

We require that bounds are perturbed once and never shifted again.
In particular, this requires a numerically exact implementation of Dantzig's ratio test, ruling out the possibility of using Harris' ratio test.
Since any state of the art implementation of the simplex method benefits from Harris' ratio test \cite[p.~179]{marosbook}, this is a notable limitation of the current analysis.

We use a priori perturbations of the bounds and right-hand side. In our observations, we found that HiGHS does a priori perturbations not in the primal simplex method, but only to the cost vector in the dual simplex method.
A more realistic model of the simplex method would allow for pivot steps which are slightly infeasible, and would allow to add and subtract perturbations throughout the running time.

When perturbing, our analysis requires a particular choice of probability density function for the individual perturbations.
This is different from current leading implementations, which sample uniformly from an interval.
We do not currently know how our probability distribution compares to the uniform distribution in practice.
As such, we can not evaluate whether this is an unrealistic modeling choice or a possible improvement over the state-of-the-art.

For the pivot rule, we require that the semi-random shadow vertex path is followed. For our analysis, the randomness in the perturbations and pivot rule should furthermore be independent of each other and of the input data.
This rules out a wide range of sophisticated pricing strategies, including steepest edge pivoting, partial pricing and multiple pricing (see \cite[p.~112-114]{orchardhaysbook} and \cite[p.~186-188]{marosbook}), as well as long-step pivoting rules used in Phase I of the simplex method \cite[p.~116-117]{orchardhaysbook}.
In the dual, it rules out the bound-flipping ratio test.

The shadow vertex rule has been criticized for the difficulty of following it with appropriate pivot tolerances in a numerical setting \cite[p.~130]{orchardhaysbook}.
As such, it does not see much use. The pivot rule used most commonly in practice is the steepest-edge rule (enhanced with partial pricing in the primal or bound-flipping ratio test in the dual).
An analysis similar to the one presented here which uses steepest-edge would be highly desirable.  Unfortunately, theoretical analysis is currently hamstrung by a broad lack of theoretical tools for analyzing the progress of steepest-edge outside of bounding step-by-step optimality gap reduction, which is not sufficient for proving strong upper bounds on the running time since any one pivot may reduce the optimality gap by an arbitrarily small amount.  General improvements in our ability to theoretically analyze steepest-edge would open up the possibility of improved by-the-book analyses of simplex.

There are various other techniques in active use among high-quality simplex implementations that we did not bring into consideration for this paper.
Most of these pertain to sparsity, numerical error, and cheap update formulas for weights or factorizations.

Automatic row/column rescaling, while often explained as being performed for the purpose of numerical stability, is likely to be an important component to include in future models.
Rescaling affects the mean width of the feasible set, impacts the pivoting choices made by steepest-edge and semi-random shadow vertex rules, and influences the geometry of the feasible set when tolerances are accepted.


\subsection{Methodology} Although we broadly succeeded in following the proposed methodology of by-the-book analysis, we did lightly deviate from this structure during the third phase.  We originally planned to let an assumption of bounded geometric diameter act as a model and proxy for the input being well-scaled.  It was only after beginning mathematical analysis in the third phase that we discovered the mean width may be a better parameter: it would yield stronger bounds, and moreover is much easier to measure. In one sense this is in conflict with the principle that our assumptions should be informed solely by a priori observations, and not retroactively informed by the proofs.  

On the other hand, we believe that it is natural and even inevitable to discover over the course of the analysis that early choices of assumptions for modeling one's observations can be improved by better ones.  In this case, we did not so much introduce a new assumption as refine the parameter we were using to model an existing assumption (in this case, the assumption that our LP was well-scaled in some concrete way). Moreover, in the spirit of by-the-book analysis, we were only content to allow our bounds to depend on $M$ provided we could verify that it is bounded on real instances.
This is actually a strength of our shift from geometric diameter to mean width: the mean width is more easily measured (or rather, accurately estimated) than the geometric diameter.
However, it remains to be seen whether the simplex method's performance truly has a meaningful dependence on mean width; as we note in \Cref{sec:discussion}, this requires further investigation.

\subsection{Outcomes} Although our results are a significant step forward in the study of the simplex method, they do not constitute a complete quantitative explanation of its real-world performance.
We can evaluate the analysis and the theorem statement both quantitatively and qualitatively.

\paragraph{Scaling with problem size}

Our bound contains a large constant factor dependence, much bigger than the scaling with problem size observed in practice.
In the course of this work, only limited effort was expended to reduce this constant factor.

Our running time bound scales faster with the problem size $d$ than the nearly linear growth observed in \cite{randibmmanual, dantzigbook,sha87, andrei2004complexity,makhoringlpk, xpress}.
At the moment of writing it is unclear which losses happen due to inefficiencies in the proof, and which happen due to insufficient accounting of the geometry of the feasible set.

\paragraph{Size of perturbations.}
Larger and more aggressive perturbations are known to improve performance \cite{BixbySurvey}.
A common technique in solvers, used when the simplex method stalls with repeated zero-length pivot steps, is to add increasingly large perturbations until stalling resolves.
Although these perturbations do help the simplex method make progress, it comes at a cost: the perturbations must later be removed in order to return a solution that is feasible up to the desired tolerance.

The running time bounds in our theorems contain a one-over-square-root dependence on the feasibility tolerance.
Interpreted naively, this would suggest that changing the feasibility tolerance by a factor $100$ would change the running time by as much as a factor $10$.
To the best of our knowledge, no effect of such magnitude has been reported.
The likely source of this disparity is that, in our analysis, step length is governed only by the perturbations: we did not use the geometry of the feasible set.

In order to improve the running time bounds, and to have a bound which does not scale as strongly with the tolerance, taking into account this geometry is necessary.
Otherwise, a running time dependence of $\sqrt{M/\eta}$ is tight, matching a lower bound construction up to polynomial factors in $d$ and poly-logarithmic factors in $M/\eta$.

This lower bound is obtained by taking $n=(c/\eta)^{d/2}$ constraint vectors, for some absolute constant $c > 1$, roughly equally spaced on the unit sphere.
With a right-hand side equal to $\ve1$, the feasible set will approximate a Euclidean ball.
Any simplex method navigating from the minimizer to the maximizer of a given objective will require at least $\Omega(\eta^{-1/2})$ pivot steps, provided that every solution visited satisfies the relaxed constraints $A \vex \leq (1+\eta)\ve1$ and lies outside the interior of $A\vex \leq \ve1$.
This holds even for $d=2$.
For more on such constructions and analyses, see \cite{optimal_smoothed_analysis, bdghl21, DGGT16}.

\paragraph{Length and curvature}
In our mathematical analysis, we prove that vertices visited by the semi-random shadow vertex method tend to take up space in the primal, by having at least one long shadow edge incident, or take up space in the dual, by creating curvature of the shadow polygon.
Furthermore, the analysis suggests that this effect is strongest when close to the maximizer of the random objective, and gets weaker clo
Without randomizing the shadow plane with respect to the data, the theory would not predict this effect to happen.

This prediction aligns with findings from Fischer. In his Diplomarbeit \cite{fischer}, he used the shadow vertex method to draw projections of the Netlib LP problems.
On the horizontal axis he plotted the value of the objective $\vecc^T \vex$.
For the vertical axis he either used fixed objectives $\vex_0, \vex_1,$ or $\vea_i^\T \vex$, or random objectives $\vez^\T \vex$ with $\vez \in [-1,1]^d$ uniformly random.
Four consecutive shadow paths were followed $\vecc \to \vez \to -\vecc \to -\vez \to \vecc$ to find the complete shadow image.
In cases where the shadow is unbounded, only the vertices are drawn.
Upon request, we obtained the images from G. M. Ziegler.

\begin{figure}
    \centering
    \begin{subfigure}{0.45\textwidth}
    \includegraphics[width=\textwidth]{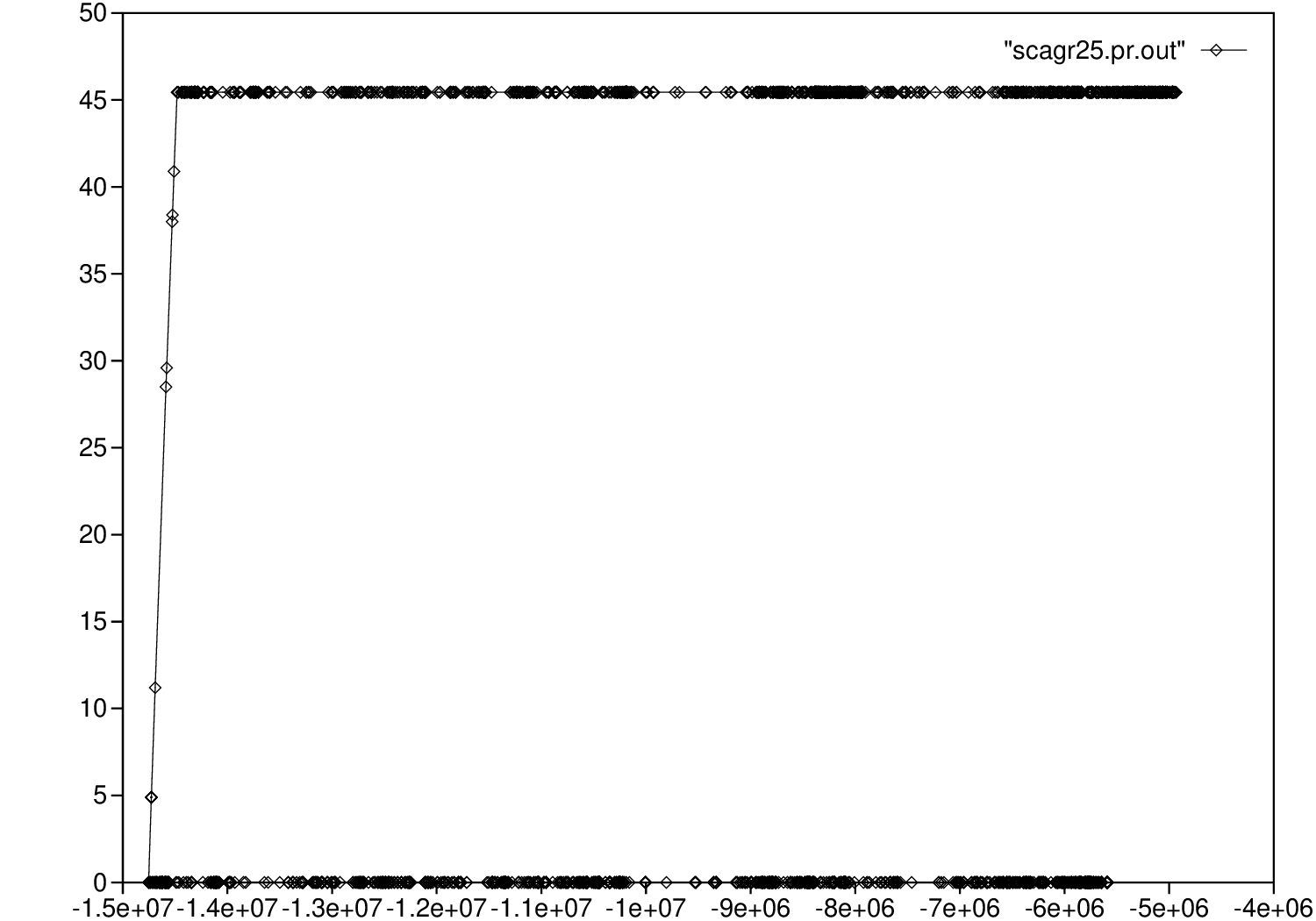}
    \caption{$\vex_0$ for vertical axis}
    \end{subfigure}
    \begin{subfigure}{0.45\textwidth}
    \includegraphics[width=\textwidth]{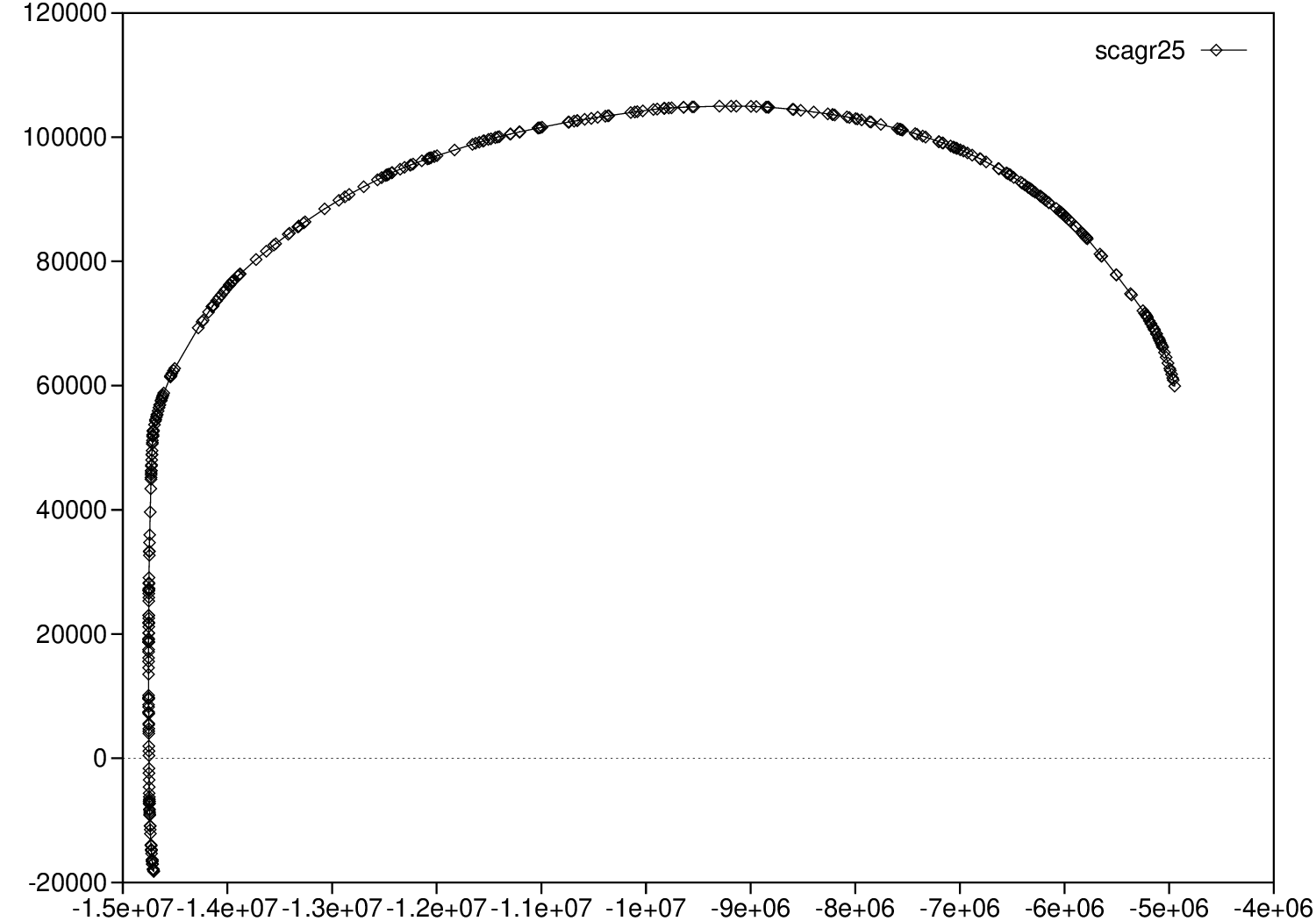}
    \caption{Random objective for vertical axis}
    \end{subfigure}
    \caption{Two-dimensional projections of \texttt{scagr25}, LP objective for horizontal axis \cite{fischer}.}
    \label{fig:fischer-scagr25}
\end{figure}

For most instances we received only one plot for a single auxiliary objective vector.
However for the instance \texttt{scagr25} Fischer drew two plots: one with a deterministic shadow $(\vex_0, \vecc^\T \vex)$ and one with a semi-random shadow $(\vez^\T \vex, \vecc^\T \vex)$.
These can be seen in \Cref{fig:fischer-scagr25} which depict the two projections.
For the semi-random shadow in \Cref{fig:fischer-scagr25} (b), we see a strong correspondence: when a given part of the boundary with constant length has more vertices (and thus, shorter edges), then it is more curved (taking up dual space).
For the deterministic shadow in \Cref{fig:fischer-scagr25} (a), we do not see the same correspondence.
At least on the qualitative level, Fischer's plots agree with the prediction from our analysis.

\section{Conclusion}\label{sec:discussion}
In this paper we introduced the framework of by-the-book analysis and used it to study the performance of the simplex method.
We have presented a running time analysis for a simplex method with bound perturbations, obtaining an upper bound on the expected running time of 
\[O\left(d^{1.5}\ln(n) \sqrt{\frac{ M}{\feastol} \ln\left(\frac{d\kappa N\ln(n)}{\feastol \cdot \opttol} \right)}\right)\]
pivot steps required to solve an LP up to given feasibility and optimality tolerance.
This result demonstrates how incorporating bound perturbations can provably ensure the simplex method requires only few pivot steps.

Although this work is predominantly dedicated to introducing and performing a by-the-book analysis of the simplex method, we note that this bound on the expected running time holds for arbitrary inputs and is mathematically sound even outside of the context of by-the-book analysis.
We highlight the two most significant improvements over earlier work.

First, from a purely mathematical lens, our analysis works for sparse constraint data, in contrast to smoothed and average-case analysis.
Since all practical LPs are sparse (see, e.g., \cite[p.~64]{orchardhaysbook}, \cite[p.~50]{marosbook} or \cite{gay1985electronic,miplib}), this is an important feature.
Second, due to the fact that we followed the by-the-book framework proposed herein, the assumptions we make are grounded in observations from simplex method implementations, LP modeling best practices, and measurements from practical benchmark instances. 
Our running time primarily depends on the LP being well scaled and the simplex method featuring bound perturbations.
Both of these properties are essential to any large-scale LP solvers \cite[p.~110,~241]{marosbook}.
In this manner, the results of this by-the-book analysis correspond well with established knowledge on the simplex method as it is used in practice. This is an advantage of our bound, even when it is interpreted purely as a highly parameterized theoretical result; the parameters themselves were chosen not out of mathematical convenience but due to the fact that our research and experiments indicate that these parameters are bounded in practice, and may indeed play a genuine role in the behavior of the simplex method.

The exact role of scaling for linear programs has previously not been well understood, but is known to reduce the number of pivot steps \cite[p.~110-111]{marosbook} \cite[p.~98]{panbook}.
There is no broad agreement on how to define when an LP is well scaled \cite{Tomlin1975}.
We take our theorems and experiments to indicate that the mean width of the feasible set is bounded for ``well-scaled'' instances.
A more in-depth investigation is in order to substantiate or refute that belief.
Are the instances with small mean width indeed the ones that would be subjectively judged as well-scaled?
Are instances with large mean width typically harder to solve for a simplex method?
These are only some of the questions that need to be answered in order to confidently speak about whether small mean width is indeed a desirable property of linear programs.

Despite the remaining limitations highlighted in \Cref{sec:scorecard}, we believe that the present work is a significant step forward towards a theoretical understanding of the simplex method's real-world performance.
Moreover, we expect our proposed framework of by-the-book analysis to be applicable to many more algorithms.
Any algorithm could be the subject of a by-the-book analysis, provided that there is enough computational experience and/or high-quality open-source code available. A by-the-book analysis may be especially desirable in situations where existing theoretical results struggle to match the known practical efficiency of an algorithm or require unrealistic assumptions.

\section{Acknowledgements}
This work is supported by ANR JCJC grant ANR-24-CE48-2762.
We thank François Lamothe for feedback on the initial manuscript and for helpful discussions about Phase 1 of the simplex algorithm.
Our deepest gratitude goes out to the many developers who made time to explain us the inner workings of their simplex codes, including from COPT, FICO, Google, Gurobi, Hexaly, HiGHS, MOSEK, OMP, and SCIP, as well as Laurent Poirrier.
Any flaws in our descriptions are ours alone.


\appendix

\section{Additional Proofs}\label{app:additionalproofs}

The following lemma from \cite{optimal_smoothed_analysis} is reproduced to make the present document self-contained.
\begin{lemma}[General condition-reversing interval lemma]\label{lem:general-condition-reversing}
    Let $L > 0$ be arbitrary.
    Suppose $s \in \R$ is a continuous random variable whose probability density function $f : \R \to \R$ satisfies the following limited log-Lipschitz property:
    For every $x_1, x_2 \in [t-2/L,t+2/L]$ we have $f(x_1)/f(x_2) \leq e^{L\abs{x_1-x_2}}$.
    Then for any $\eps \in [0,1/L]$ we have
    \[
        \Pr[s \geq t-\eps \mid s \leq t] \leq 31 \eps L \cdot \Pr[s \geq t].
    \]
\end{lemma}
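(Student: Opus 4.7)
The goal is to bound $\Pr[s \geq t - \eps \mid s \leq t] = \Pr[s \in [t-\eps, t]]/\Pr[s \leq t]$ by $31 \eps L \cdot \Pr[s \geq t]$, which is equivalent to showing $\Pr[s \in [t-\eps,t]] \leq 31 \eps L \cdot \Pr[s \geq t] \Pr[s \leq t]$. The plan is to derive \emph{two} density-transfer upper bounds on $\Pr[s \in [t-\eps,t]]$ using the log-Lipschitz hypothesis and then combine them via a case distinction on the size of $\Pr[s \leq t]$.

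First, I would transfer mass from $[t-\eps,t]$ to the right of $t$. For any $x_1 \in [t-\eps,t]$ and $x_2 \in [t, t+1/L]$, both points lie in $[t-2/L,t+2/L]$ and $|x_1-x_2| \leq \eps + 1/L \leq 2/L$, so the hypothesis gives $f(x_1) \leq e^{L|x_1-x_2|}f(x_2) \leq e^2 f(x_2)$. Averaging in $x_2$ over the length-$1/L$ interval $[t, t+1/L]$ yields $f(x_1) \leq e^2 L \cdot \Pr[s \in [t, t+1/L]] \leq e^2 L \cdot \Pr[s \geq t]$, and then integrating in $x_1$ over $[t-\eps,t]$ produces the right-side bound
\[ \Pr[s \in [t-\eps,t]] \leq e^2 \eps L \cdot \Pr[s \geq t]. \]
A symmetric argument, now comparing $x_1 \in [t-\eps,t]$ with $x_2 \in [t-1/L,t]$ (so that $|x_1-x_2| \leq 1/L$ and $f(x_1) \leq e \cdot f(x_2)$), produces the left-side bound
\[ \Pr[s \in [t-\eps,t]] \leq e \eps L \cdot \Pr[s \leq t]. \]

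To conclude, I would split into two cases on the size of $\Pr[s \leq t]$. If $\Pr[s \leq t] \geq e^2/31$, then the right-side bound together with $e^2 \leq 31\Pr[s \leq t]$ gives $\Pr[s \in [t-\eps,t]] \leq e^2 \eps L \Pr[s \geq t] \leq 31 \eps L \Pr[s \geq t]\Pr[s \leq t]$. Otherwise $\Pr[s \leq t] < e^2/31$, and by continuity of $s$ we get $\Pr[s \geq t] = 1 - \Pr[s \leq t] > 1 - e^2/31 > e/31$, so the left-side bound gives $\Pr[s \in [t-\eps,t]] \leq e \eps L \Pr[s \leq t] \leq 31 \eps L \Pr[s \leq t]\Pr[s \geq t]$. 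Dividing through by $\Pr[s \leq t]$ in either case yields the claimed conditional inequality.

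The main bookkeeping subtlety is ensuring that every pair of points I feed into the log-Lipschitz hypothesis stays inside the window $[t-2/L, t+2/L]$ and is separated by at most $2/L$; this drives the choice of comparison intervals of width $1/L$ on each side and pins down the factors $e^2$ and $e$ in the two bounds. The constant $31$ then falls out of the case split and could be slightly sharpened by optimizing the threshold, but both directions of the two-sided argument are essential: the right-side bound alone cannot deliver the extra factor $\Pr[s \leq t]$ when $\Pr[s \leq t]$ is tiny, and symmetrically the left-side bound is needed in that regime.
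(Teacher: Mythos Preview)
Your proof is correct and genuinely different from the paper's. Both arguments begin with the same right-side transfer bound $\Pr[s\in[t-\eps,t]]\leq e^2\eps L\,\Pr[s\geq t]$, but they diverge in how they manufacture the extra factor $\Pr[s\leq t]$. The paper avoids any case split: it proves a denominator estimate $\Pr[s\leq t]\geq \tfrac{2}{3e}\Pr[s\leq t+1/L]$ via an affine change of variables on $[t-2/L,t+1/L]$, then combines to get $\Pr[s\geq t-\eps\mid s\leq t]\leq \tfrac{3e^3}{2}\eps L\,\Pr[s\geq t\mid s\leq t+1/L]$ and finishes by observing that this last conditional probability is at most $\Pr[s\geq t]$. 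Your route is more elementary: you derive a symmetric left-side bound $\Pr[s\in[t-\eps,t]]\leq e\eps L\,\Pr[s\leq t]$ and then split on whether $\Pr[s\leq t]$ is large or small, using whichever of the two bounds carries the needed factor. Your argument is shorter and requires no transformation trick; the paper's argument yields the slightly sharper intermediate statement with $\Pr[s\geq t\mid s\leq t+1/L]$ in place of $\Pr[s\geq t]$, though that refinement is not used anywhere. Both land on the constant $31$ (the paper's raw constant is $3e^3/2\approx 30.13$; yours comes from the threshold $e^2/31$ and, as you note, could be tuned).
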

\begin{proof}
    We start by proving that
    $\Pr\big[s \in [t-\eps, t]\big] \leq e^2 \eps L \cdot \Pr\big[s \in [t, t+1/L]\big]$.
    In the final paragraphs of this proof we will extend this statement into the desired conclusion.

    Since $\eps \leq 1/L$ we are within the log-Lipschitzness range to
    bound our intended left-hand side as
    \[
        \Pr\big[s \in [t-\eps, t]\big]
        = \int_{t-\eps}^t f(x) \dd x
        \leq \int_{t-\eps}^t f(t) e^{L\abs{x-t}} \dd x
        \leq e \eps f(t).
    \]
    Similarly, we may use this log-Lipschitzness property to lower bound the probability in our intended right-hand side and find
    \begin{align*}
        \Pr\big[ s \in [t, t + 1/L]\big]
                      &= \int_t^{t + 1/L} f(x) \dd x \\
                      &\geq \int_t^{t + 1/L} f(t) e^{-L\cdot\abs{x-t}} \dd x \\
                      &\geq e^{-1} f(t) L^{-1}.
    \end{align*}
    Putting these two inequalities together, we find
    \begin{equation}\label{eq:numeratorbound}
        \Pr\big[s \in [t-\eps, t]\big] \leq e^2 \eps L \Pr\big[s \in [t,t+1/L] \big].
    \end{equation}
    This is the initial statement mentioned at the start of this proof.
    We now define the affine transformation $T : \R \to \R$ to satisfy
    $T(t-2/L) = t-2/L$ and $T(t+1/L) = t$.
    For this transformation we observe that the Jacobian is $\frac{2}{3}$,
    and for any point $y \in [t-2/L, t+1/L]$ we have $\abs{y-T(y)} \leq 1/L$.
    We use the transformation to give a change of variables and find
    \begin{align}
        \Pr[s \leq t] &= \int_{-\infty}^t f(x) \dd x \nonumber\\
                      &= \int_{-\infty}^{t-2/L} f(x) \dd x + \int_{t-2/L}^t f(x) \dd x \nonumber\\
                      &= \int_{-\infty}^{t-2/L} f(x) \dd x +
                      \frac{2}{3}\int_{t-2/L}^{t+1/L} f( T(y)   ) \dd y \nonumber\\
                      &\geq \int_{-\infty}^{t-2/L} f(x) \dd x + \frac{2}{3e}\int_{t-2/L}^{t+1/L} f(y) \dd y \nonumber\\
                      &\geq \frac{2}{3e} \int_{-\infty}^{t+1/L} f(y) \dd y \nonumber\\
                      &= \frac{2}{3e} \Pr[s \leq t+1/L]\nonumber\\
                      &\geq \frac{2}{3e}\Pr[s\leq t+1/L].\label{eq:denominatorbound}
    \end{align}
    Using the above two inequalities in order to bound the numerator and the denominator, we
    can now prove the lemma as follows
    \begin{align*}
        \Pr[s \geq t-\eps \mid s \leq t] &= \frac{\Pr\big[s \in [t-\eps, t]\big]}{\Pr[s \leq t]} \\
                                         &\leq \frac{3e\Pr\big[s \in [t-\eps, t]\big]}{2\Pr[s \leq t+1/L]} \tag{by \eqref{eq:denominatorbound}} \\
                                         &\leq \frac{3e^3 \eps L \cdot \Pr\big[s \in [t, t+1/L]\big]}{2\Pr[s \leq t+1/L]} \tag{by \eqref{eq:numeratorbound}} \\
                                         &\leq \frac{31 \eps L \cdot \Pr\big[s \in [t, t+1/L]\big]}{\Pr[s \leq t+1/L]} \\
                                         &= 31\eps L \cdot \Pr[s \geq t \mid s \leq t + 1/L].
    \end{align*}
    In order to establish our final inequality, we use $\Pr[s \geq t \mid s \leq t+1/L] \leq 1$ to find
	\begin{align*}
		\Pr[s \geq t] &= \Pr[s > t+1/L] + \Pr[s \geq t \mid s \leq t+1/L] \Pr[s \leq t+1/L]\\
		&\geq \Pr[s \geq t \mid s \leq t+1/L]\cdot\Big(\Pr[s > t+1/L] + \Pr[s \leq t+1/L]\Big)\\
		&= \Pr[s \geq t \mid s \leq t+1/L].
	\end{align*}
    These two inequalities, then, prove the lemma as
    \[
    \Pr[s \geq t-\eps \mid s\leq t] \leq 31\eps L\cdot \Pr[s \geq t \mid s \leq t + 1/L] \leq 31\eps\Pr[s \geq t].
    \]
\end{proof}

\addcontentsline{toc}{section}{References}
\printbibliography

@article{DGGT16,
  title={Smoothed complexity of convex hulls by witnesses and collectors},
  author={Devillers, Olivier and Glisse, Marc and Goaoc, Xavier and Thomasse, R{\'e}my},
  journal={Journal of Computational Geometry},
  volume={7},
  number={2},
  pages={101--144},
  year={2016},
  doi={10.20382/jocg.v7i2a6}
}

@inproceedings{ds05,
    title={Improved smoothed analysis of the shadow vertex simplex method},
    author={Deshpande, Amit and Spielman, Daniel A.},
    booktitle={Proceedings of the 46th Annual IEEE Symposium on Foundations of Computer Science},
    pages={349--356},
    year={2005},
    doi={10.1109/SFCS.2005.44}
}

@misc{klotzseminar,
    author={Ed Klotz},
    title={Pinpointing Numerical Issues: Tool for Diagnosis \& Explanation of Ill Conditioning in LPs \& MILPs},
    organization={Gurobi Webinar},
    year={2023},
    url={https://www.youtube.com/watch?v=UgaOvvl96Yw}
    }

@article{simplexzeroone,
  title={On the simplex method for 0/1-polytopes},
  author={Black, A. and De Loera, J. and Kafer, S. and Sanit{\`a}, L.},
  journal={Mathematics of Operations Research},
  volume={50},
  number={2},
  pages={1398--1420},
  year={2025},
  publisher={INFORMS},
  doi={10.1287/moor.2021.0345}
}

@article{ST04,
  title={Smoothed analysis of algorithms: Why the simplex algorithm usually takes polynomial time},
  author={Spielman, Daniel A and Teng, Shang-Hua},
  journal={Journal of the ACM (JACM)},
  volume={51},
  number={3},
  pages={385--463},
  year={2004},
  publisher={ACM New York, NY, USA},
  doi={10.1145/990308.990310},
}

@article{ver09,
  title={Beyond Hirsch conjecture: walks on random polytopes and smoothed complexity of the simplex method},
  author={Vershynin, Roman},
  journal={SIAM Journal on Computing},
  volume={39},
  number={2},
  pages={646--678},
  year={2009},
  publisher={SIAM},
  doi={10.1137/070683386},
}

@article{dan51,
  title={Maximization of a linear function of variables subject to linear inequalities},
  author={Dantzig, George B.},
  journal={Cowles Commission Monograph 13: Activity analysis of production and allocation},
  pages={339--347},
  year={1951}
}

@incollection{gol76,
  title={Using the steepest-edge simplex algorithm to solve sparse linear programs},
  author={Goldfarb, Donald},
  booktitle={Sparse matrix computations},
  pages={227--240},
  year={1976},
  publisher={Elsevier},
  doi={10.1016/B978-0-12-141050-6.50018-0}
}

@article{jour/cm/AZ98,
author={Nina Amenta and G{\"u}nter M. Ziegler},
title={Deformed Products and Maximal Shadows},
journal={Contemporary Math.},
volume=223,
pages={57-90},
year={1998}
}

@article{b99,
    title={A Sharp Upper Bound for the Expected Number of Shadow Vertices in Lp-Polyhedra Under Orthogonal Projection on Two-Dimensional Planes},
    author={Borgwardt, Karl Heinz},
    journal={Mathematics of Operations Research},
    volume={24},
    number={3},
    pages={544--603},
    year={1999},
    doi={10.1287/moor.24.4.925}
}

@book{b87,
  title={The Simplex Method: A Probabilistic Analysis},
  author={Borgwardt, Karl Heinz},
  volume={1},
  year={1987},
  publisher={Springer-Verlag, Berlin},
doi={10.1007/978-3-642-61578-8}
}

@article{bdghl21,
  title = {Asymptotic Bounds on the Combinatorial Diameter of Random Polytopes},
  ISSN = {1432-0444},
  DOI = {10.1007/s00454-025-00814-6},
  journal = {Discrete \& Computational Geometry},
  publisher = {Springer Science and Business Media LLC},
  author = {Bonnet,  Gilles and Dadush,  Daniel and Grupel,  Uri and Huiberts,  Sophie and Livshyts,  Galyna},
  year = {2026},
  archivePrefix={arXiv},
  eprint={2112.13027},
  month = feb
}

@article{gas55,
  title={The computational algorithm for the parametric objective function},
  author={Gass, Saul and Saaty, Thomas},
  journal={Naval research logistics quarterly},
  volume={2},
  number={1-2},
  pages={39--45},
  year={1955},
  publisher={Wiley Online Library}
}

@incollection{murtydegeneracy,
  title={Complexity of degeneracy},
  author={Murty, Katta G.},
  booktitle={Encyclopedia of Optimization},
  pages={419--425},
  year={2008},
  publisher={Springer}
}

@article{b82,
  author    = {Karl Heinz Borgwardt},
  title     = {The Average number of pivot steps required by the Simplex-Method is
               polynomial},
  journal   = {Zeitschrift für Operations Research},
  volume    = {26},
  number    = {1},
  pages     = {157--177},
  year      = {1982},
doi = {10.1007/BF01917108}
}

@article{km72,
  title={How good is the simplex algorithm},
  author={Klee, Victor and Minty, George J.},
  journal={Inequalities : III : proceedings of the 3rd Symposium on inequalities},
  pages={159--175},
  year={1972},
  publisher={New York}
}

@article{jer73,
  title={The simplex algorithm with the pivot rule of maximizing criterion improvement},
  author={Jeroslow, Robert G},
  journal={Discrete Mathematics},
  volume={4},
  number={4},
  pages={367--377},
  year={1973},
  publisher={Elsevier},
  doi={10.1016/0012-365X(73)90171-4}
}

@incollection{AC78,
  title={Notes on {B}land’s pivoting rule},
  author={Avis, David and Chv{\'a}tal, Vasek},
  booktitle={Polyhedral Combinatorics},
  pages={24--34},
  year={1978},
  publisher={Springer},
  doi={10.1007/BFb0121192}
}

@article{GS79,
  title={Worst case behavior of the steepest edge simplex method},
  author={Goldfarb, Donald and Sit, William Y},
  journal={Discrete Applied Mathematics},
  volume={1},
  number={4},
  pages={277--285},
  year={1979},
  publisher={Elsevier},
  doi={10.1016/0166-218X(79)90004-0}
}

@article{Mur80,
  title={Computational complexity of parametric linear programming},
  author={Murty, Katta G.},
  journal={Mathematical programming},
  volume={19},
  number={1},
  pages={213--219},
  year={1980},
  publisher={Springer},
  doi={10.1007/BFb0120782}
}

@article{sha87,
  title={The efficiency of the simplex method: a survey},
  author={Shamir, Ron},
  journal={Management science},
  volume={33},
  number={3},
  pages={301--334},
  doi = {10.1287/mnsc.33.3.301},
  year={1987},
  publisher={INFORMS}
}

@article{andrei2004complexity,
  title={On the complexity of {MINOS} package for linear programming},
  author={Andrei, Neculai},
  journal={Studies in Informatics and Control},
  volume={13},
  number={1},
  pages={35--46},
  year={2004},
  publisher={INFORMATICS AND CONTROL PUBLICATIONS}
}

@inproceedings{k92,
  title={A subexponential randomized simplex algorithm},
  author={Kalai, Gil},
  booktitle={Proceedings of the twenty-fourth annual ACM symposium on Theory of computing},
  pages={475--482},
  year={1992},
  doi={10.1145/129712.129759}
}

@article{msw96,
  title={A subexponential bound for linear programming},
  author={Matou{\v{s}}ek, Ji{\v{r}}{\'{i}} and Sharir, Micha and Welzl, Emo},
  journal={Algorithmica},
  volume={16},
  number={4-5},
  pages={498--516},
  year={1996},
  publisher={Springer},
  doi={10.1145/142675.142678}
}

@inproceedings{hz15,
  title={An improved version of the random-facet pivoting rule for the simplex algorithm},
  author={Hansen, Thomas Dueholm and Zwick, Uri},
  booktitle={Proceedings of the forty-seventh annual ACM symposium on Theory of computing},
  pages={209--218},
  year={2015},
  doi={10.1145/2746539.2746557}
}

@article{disser2020exponential,
  doi = {10.1007/s10107-022-01848-x},
  year = {2022},
  month = jul,
  publisher = {Springer Science and Business Media {LLC}},
  author = {Yann Disser and Oliver Friedmann and Alexander V. Hopp},
  title = {An exponential lower bound for {Z}adeh's pivot rule},
  journal = {Mathematical Programming}
}

@inproceedings{conf/stoc/FHZ11,
 author = {Friedmann, Oliver and Hansen, Thomas Dueholm and Zwick, Uri},
 title = {Subexponential Lower Bounds for Randomized Pivoting Rules for the Simplex Algorithm},
 booktitle = {Proceedings of the 43rd Annual ACM Symposium on Theory of Computing},
 year = {2011},
 pages = {283--292},
 numpages = {10},
 doi = {10.1145/1993636.1993675},
 acmid = {1993675},
}

@inproceedings{conf/ipco/Friedmann11,
author={Friedmann, Oliver},
editor={G{\"u}nl{\"u}k, Oktay and Woeginger, Gerhard J.},
title={A Subexponential Lower Bound for {Z}adeh's Pivoting Rule for Solving Linear
Programs and Games},
booktitle={Proceedings of 15th International Conference on Integer Programming
and Combinatorial Optimization},
year= {2011},
  publisher={Springer},
pages={192--206},
doi={10.1007/978-3-642-20807-2_16}
}

@phdthesis{thesis/Schnalzger14,
title = {Lineare Optimierung mit dem Schatteneckenalgorithmus im Kontext probabilistischer Analysen },
author = {Schnalzger, Emanuel},
school = {Universit{\"a}t Augsburg},
year = {2014},
note = {English translation by K.H. Borgwardt.}
}

@article{g83,
  title={Worst case complexity of the shadow vertex simplex algorithm},
  author={Goldfarb, Donald},
  journal={preprint, Columbia University},
  year={1983}
}

@techreport{report/Haimovich83,
  title={The simplex method is very good! {O}n the expected number of pivot steps
and related properties of random linear programs},
  author={Haimovich, M.},
  institution={Columbia University},
  year={1983}
}

@article{jour/jc/AKS87,
    AUTHOR = {Adler, Ilan and Karp, Richard M. and Shamir, Ron},
     TITLE = {A simplex variant solving an {$m\times d$} linear program in
              {$O({\rm min}(m^2, d^2))$} expected number of pivot steps},
   JOURNAL = {Journal of Complexity},
    VOLUME = {3},
      YEAR = {1987},
    NUMBER = {4},
     PAGES = {372--387},
      ISSN = {0885-064X},
   MRCLASS = {90C05 (65K05 68Q25)},
  MRNUMBER = {919095},
MRREVIEWER = {M. Bastian},
       DOI = {10.1016/0885-064X(87)90007-0},
}

@article{jour/mapr/Todd86,
  title={Polynomial expected behavior of a pivoting algorithm for linear complementarity and linear programming problems},
  author={Todd, Michael J.},
  journal={Mathematical Programming},
  volume={35},
  number={2},
  pages={173--192},
  year={1986},
  publisher={Springer},
  doi={10.1007/BF01580646}
}

@article{jour/jacm/AM85,
  title={A simplex algorithm whose average number of steps is bounded between two quadratic functions of the smaller dimension},
  author={Adler, Ilan and Megiddo, Nimrod},
  journal={Journal of the ACM (JACM)},
  volume={32},
  number={4},
  pages={871--895},
  year={1985},
  publisher={ACM},
  doi={10.1145/4221.4222}
}

@inproceedings{black22,
  doi = {10.1137/1.9781611977554.ch63},
  year = {2023},
  month = jan,
  publisher = {Society for Industrial and Applied Mathematics},
  pages = {1669--1679},
  author = {Alexander  Black},
  title = {Small Shadows of Lattice Polytopes},
  booktitle = {Proceedings of the 2023 Annual {ACM}-{SIAM} Symposium on Discrete Algorithms ({SODA})}
}

@misc{gurobi,
  author = {{Gurobi Optimization, LLC}},
  title = {{Gurobi Optimizer Reference Manual}},
  year = 2025,
  url = "https://docs.gurobi.com/_/downloads/optimizer/en/13.0/pdf/"
}

@misc{mosekmanual,
  author = {{MOSEK ApS}},
  title = {{MOSEK Optimizer API for Python: Release 11.0.28}},
  year = 2025,
  url = "https://docs.mosek.com/11.0/pythonapi.pdf"
}

@book{marosbook,
  doi = {10.1007/978-1-4615-0257-9},
  year = {2002},
  publisher = {Springer},
  author = {István Maros},
  title = {Computational Techniques of the Simplex Method}
}

@book{orchardhaysbook,
  year = {1968},
  publisher = {McGraw-Hill},
  author = {William Orchard-Hays},
  title = {Advanced linear-programming computing techniques}
}

@inbook{Tomlin1975,
  title = {On scaling linear programming problems},
  ISBN = {9783642007668},
  ISSN = {0303-3929},
  DOI = {10.1007/bfb0120718},
  booktitle = {Computational Practice in Mathematical Programming},
  publisher = {Springer Berlin Heidelberg},
  author = {Tomlin,  John A.},
  year = {1975},
  pages = {146–166}
}

@book{panbook,
  doi = {10.1007/978-3-642-40754-3},
  year = {2016},
  publisher = {Springer},
  author = {Ping-Qi Pan},
  title = {Linear Programming Computation}
}

@article{Hall2005,
  title = {Hyper-Sparsity in the Revised Simplex Method and How to Exploit it},
  volume = {32},
  ISSN = {1573-2894},
  DOI = {10.1007/s10589-005-4802-0},
  number = {3},
  journal = {Computational Optimization and Applications},
  publisher = {Springer Science and Business Media LLC},
  author = {Hall,  J. A. J. and McKinnon,  K. I. M.},
  year = {2005},
  month = dec,
  pages = {259–283}
}

@article{highs,
  title = {Parallelizing the dual revised simplex method},
  volume = {10},
  ISSN = {1867-2957},
  DOI = {10.1007/s12532-017-0130-5},
  number = {1},
  journal = {Mathematical Programming Computation},
  publisher = {Springer Science and Business Media LLC},
  author = {Huangfu,  Qi and Hall,  Julian A. J.},
  year = {2017},
  month = dec,
  pages = {119–142}
}

@phdthesis{qi-huangfu,
    author = {Qi Huangfu},
    title = {High performance simplex solver},
    school = {University of Edingburgh},
    year = {2013},
    url={http://hdl.handle.net/1842/7952}
}

@misc{ortools,
  title = {{OR}-Tools},
  version = { v9.12 },
  author = {Laurent Perron and Vincent Furnon},
  organization = {Google},
  url = {https://developers.google.com/optimization/},
  year={2025}
}

@techreport{mosekfarkas,
    author = {Erling D. Andersen},
    title = {How to use Farkas’ lemma to say something important about infeasible linear problems},
    institution = {MOSEK},
    year = 2013,
    url={https://docs.mosek.com/whitepapers/infeas.pdf}
}

@misc{mosekmodeling,
    author = {{MOSEK ApS}},
    title = {{MOSEK} Modeling Cookbook 3.4.0},
    year=2025,
    url={https://docs.mosek.com/modeling-cookbook/index.html},
}

@techreport{randibmmanual,
    author = {William Orchard-Hays and Leola Cutler and Harold Judd},
    title = {Manual for the RAND-IBM Code for Linear Programming},
    institution = {The RAND Corporation},
    day = 16,
    month=5,
    year=1956
}

@article{Eisenbrand2016,
  title = {Geometric random edge},
  volume = {164},
  ISSN = {1436-4646},
  DOI = {10.1007/s10107-016-1089-0},
  number = {1–2},
  journal = {Mathematical Programming},
  publisher = {Springer Science and Business Media LLC},
  author = {Eisenbrand,  Friedrich and Vempala,  Santosh},
  year = {2016},
  month = nov,
  pages = {325–339}
}

@inproceedings{kelnerspielman,
author       = {Jonathan A. Kelner and Daniel A. Spielman},
editor       = {Jon M. Kleinberg},
title        = {A randomized polynomial-time simplex algorithm for linear programming},
booktitle    = {Proceedings of the 38th Annual {ACM} Symposium on Theory of Computing},
pages        = {51--60},
publisher    = {{ACM}},
year         = {2006},
doi          = {10.1145/1132516.1132524},
}

@article{dadushhahnle,
    AUTHOR = {Dadush, Daniel and H\"ahnle, Nicolai},
     TITLE = {On the shadow simplex method for curved polyhedra},
   JOURNAL = {Discrete Comput. Geom.},
  FJOURNAL = {Discrete \& Computational Geometry. An International Journal
              of Mathematics and Computer Science},
    VOLUME = {56},
      YEAR = {2016},
    NUMBER = {4},
     PAGES = {882--909},
      ISSN = {0179-5376},
   MRCLASS = {90C05 (52B55 90C49)},
  MRNUMBER = {3561794},
       DOI = {10.1007/s00454-016-9793-3},
}

@inproceedings{hlz,
  doi = {10.1145/3564246.3585124},
  year = {2023},
  month = jun,
  publisher = {{ACM}},
  author = {Sophie Huiberts and Yin Tat Lee and Xinzhi Zhang},
  title = {Upper and Lower Bounds on the Smoothed Complexity of the Simplex Method},
  booktitle = {Proceedings of the 55th Annual {ACM} Symposium on Theory of Computing (STOC)}
}

@article{kitaharamizuno2011primal,
    doi = {10.1007/s10107-011-0482-y},
    year = {2011},
    month = aug,
    publisher = {Springer Science and Business Media {LLC}},
    volume = {137},
    number = {1-2},
    pages = {579--586},
    author = {Tomonari Kitahara and Shinji Mizuno},
    title = {A bound for the number of different basic solutions generated by the simplex method},
    journal = {Mathematical Programming}
}

@article{BixbySurvey,
    year = {2002},
    volume = {50},
    number = {1},
    pages = {3--15},
    author = {Bixby, Robert E.},
    title = {Solving real-world linear programs: A decade and more of progress},
    journal = {Operations Research},
    doi = {10.1287/opre.50.1.3.17780}
}

@article{kitaharamizuno2012dual,
  title = {On the number of solutions generated by the dual simplex method},
  volume = {40},
  ISSN = {0167-6377},
  DOI = {10.1016/j.orl.2012.01.004},
  number = {3},
  journal = {Operations Research Letters},
  publisher = {Elsevier BV},
  author = {Kitahara,  Tomonari and Mizuno,  Shinji},
  year = {2012},
  month = may,
  pages = {172–174}
}

@article{tanomiyashirokitahara2019steepest,
  title = {Steepest-edge rule and its number of simplex iterations for a nondegenerate LP},
  volume = {47},
  ISSN = {0167-6377},
  DOI = {10.1016/j.orl.2019.02.003},
  number = {3},
  journal = {Operations Research Letters},
  publisher = {Elsevier BV},
  author = {Tano,  Masaya and Miyashiro,  Ryuhei and Kitahara,  Tomonari},
  year = {2019},
  month = may,
  pages = {151–156}
}

@article{kunosanotsuruda2018computing,
  title = {Computing Kitahara–Mizuno’s bound on the number of basic feasible solutions generated with the simplex algorithm},
  volume = {12},
  ISSN = {1862-4480},
  DOI = {10.1007/s11590-018-1276-4},
  number = {5},
  journal = {Optimization Letters},
  publisher = {Springer Science and Business Media LLC},
  author = {Kuno,  Takahito and Sano,  Yoshio and Tsuruda,  Takahiro},
  year = {2018},
  month = may,
  pages = {933–943}
}

@article{dyerfrieze1994,
  title = {Random walks,  totally unimodular matrices,  and a randomised dual simplex algorithm},
  volume = {64},
  ISSN = {1436-4646},
  DOI = {10.1007/bf01582563},
  number = {1–3},
  journal = {Mathematical Programming},
  publisher = {Springer Science and Business Media LLC},
  author = {Dyer,  Martin and Frieze,  Alan},
  year = {1994},
  month = mar,
  pages = {1–16}
}

@article{Ye2005,
  title = {A New Complexity Result on Solving the Markov Decision Problem},
  volume = {30},
  ISSN = {1526-5471},
  DOI = {10.1287/moor.1050.0149},
  number = {3},
  journal = {Mathematics of Operations Research},
  publisher = {Institute for Operations Research and the Management Sciences (INFORMS)},
  author = {Ye,  Yinyu},
  year = {2005},
  month = aug,
  pages = {733–749}
}

@inbook{Kukharenko2024,
  title = {On the Number of Degenerate Simplex Pivots},
  ISBN = {9783031598357},
  ISSN = {1611-3349},
  DOI = {10.1007/978-3-031-59835-7_19},
  booktitle = {Integer Programming and Combinatorial Optimization},
  publisher = {Springer Nature Switzerland},
  author = {Kukharenko,  Kirill and Sanità,  Laura},
  year = {2024},
  pages = {252–264}
}

@phdthesis{koberstein,
    author = {Achim Koberstein},
    title = {The Dual Simplex Method, Techniques for a fast and stable implementation},
    school = {Universitat Paderborn},
    year = 2005,
url={https://digital.ub.uni-paderborn.de/hsmig/content/titleinfo/3885/full.pdf}
}

@article{Gill1989,
  title = {A practical anti-cycling procedure for linearly constrained optimization},
  volume = {45},
  ISSN = {1436-4646},
  DOI = {10.1007/bf01589114},
  number = {1–3},
  journal = {Mathematical Programming},
  publisher = {Springer Science and Business Media LLC},
  author = {Gill,  Philip E. and Murray,  Walter and Saunders,  Michael A. and Wright,  Margaret H.},
  year = {1989},
  month = aug,
  pages = {437–474}
}

@article{Bland1977,
  title = {New Finite Pivoting Rules for the Simplex Method},
  volume = {2},
  ISSN = {1526-5471},
  DOI = {10.1287/moor.2.2.103},
  number = {2},
  journal = {Mathematics of Operations Research},
  publisher = {Institute for Operations Research and the Management Sciences (INFORMS)},
  author = {Bland,  Robert G.},
  year = {1977},
  month = may,
  pages = {103–107}
}

@misc{Dantzig1990origins,
  title = {Origins of the simplex method},
  ISBN = {0201508141},
  DOI = {10.1145/87252.88081},
  journal = {A history of scientific computing},
  publisher = {ACM},
  author = {Dantzig,  George B.},
  year = {1990},
  month = jun,
  pages = {141–151}
}

@article{DH18,
  doi = {10.1137/18m1197205},
  year = {2020},
  month = jan,
  publisher = {Society for Industrial {\&} Applied Mathematics ({SIAM})},
  volume = {49},
  number = {5},
  pages = {STOC18--449--499},
  author = {Daniel Dadush and Sophie Huiberts},
  title = {A Friendly Smoothed Analysis of the Simplex Method},
  journal = {{SIAM} Journal on Computing}
}

@book{bwcabook,
  title = {Beyond the Worst-Case Analysis of Algorithms},
  ISBN = {9781108494311},
  DOI = {10.1017/9781108637435},
  publisher = {Cambridge University Press},
  year = {2020},
  month = dec,
  editor={Roughgarden, Tim},
}

@inbook{bwcachapter,
    place={Cambridge},
    title={Smoothed Analysis of the Simplex Method},
    DOI={10.1017/9781108637435.019},
    booktitle={Beyond the Worst-Case Analysis of Algorithms},
    publisher={Cambridge University Press},
    author={Dadush, Daniel and Huiberts, Sophie},
    year={2021},
    pages={309–333}
}

@book{dantzigbook,
  title = {Linear Programming and Extensions},
  ISBN = {9781400884179},
  DOI = {10.1515/9781400884179},
  publisher = {Princeton University Press},
  author = {Dantzig, George B.},
  year = {1963},
  month = dec
}

@phdthesis{thesis/Borgwardt77,
title = { Untersuchungen zur Asymptotik der mittleren Schrittzahl von Simplexverfahren in
der linearen Optimierung},
author = {Borgwardt, Karl Heinz},
school = {Universit{\"a}t Kaiserslautern},
year = {1977}
}

@article{jour/mapr/FG92,
    AUTHOR = {Forrest, John J. and Goldfarb, Donald},
     TITLE = {Steepest-edge simplex algorithms for linear programming},
  JOURNAL = {Mathematical Programming},
    VOLUME = {57},
      YEAR = {1992},
    NUMBER = {3},
     PAGES = {341--374},
      ISSN = {0025-5610},
   MRCLASS = {90C05},
  MRNUMBER = {1209401},
       DOI = {10.1007/BF01581089},
}

@article {jour/mapr/Megiddo86,
    AUTHOR = {Megiddo, Nimrod},
     TITLE = {Improved asymptotic analysis of the average number of steps
              performed by the self-dual simplex algorithm},
   JOURNAL = {Math. Programming},
  FJOURNAL = {Mathematical Programming},
    VOLUME = {35},
      YEAR = {1986},
    NUMBER = {2},
     PAGES = {140--172},
      ISSN = {0025-5610},
   MRCLASS = {90C05},
  MRNUMBER = {845755},
MRREVIEWER = {Gerard G. L. Meyer},
       DOI = {10.1007/BF01580645},
}

@inproceedings{Bonifas2012,
  series = {SoCG ’12},
  title = {On sub-determinants and the diameter of polyhedra},
  DOI = {10.1145/2261250.2261304},
  booktitle = {Proceedings of the twenty-eighth annual symposium on Computational geometry},
  publisher = {ACM},
  author = {Bonifas,  Nicolas and Di Summa,  Marco and Eisenbrand,  Friedrich and H\"{a}hnle,  Nicolai and Niemeier,  Martin},
  year = {2012},
  month = jun,
  pages = {357–362},
  collection = {SoCG ’12}
}

@inproceedings{conf/icalp/BR13,
author={Brunsch, Tobias and R{\"o}glin, Heiko},
title={Finding Short Paths on Polytopes by the Shadow Vertex Algorithm},
booktitle={Automata, Languages, and Programming},
series={ICALP `13},
year={2013},
publisher={Springer Berlin Heidelberg},
address={Berlin, Heidelberg},
pages={279--290},
doi={10.1007/978-3-642-39206-1_24},
}

@misc{xpress,
    title={{FICO} {X}press {O}ptimizer Reference Manual},
    author={FICO},
    year={2023},
    url={https://www.fico.com/fico-xpress-optimization/docs/dms2023-04/solver/optimizer/HTML/chapter4.html?scroll=subsection400}
}

@article{harris,
  title = {Pivot selection methods of the Devex LP code},
  volume = {5},
  ISSN = {1436-4646},
  DOI = {10.1007/bf01580108},
  number = {1},
  journal = {Mathematical Programming},
  publisher = {Springer Science and Business Media LLC},
  author = {Harris,  Paula M. J.},
  year = {1973},
  month = dec,
  pages = {1–28}
}

@InProceedings{dissermosis,
  author =	{Disser, Yann and Mosis, Nils},
  title =	{{A Unified Worst Case for Classical Simplex and Policy Iteration Pivot Rules}},
  booktitle =	{34th International Symposium on Algorithms and Computation (ISAAC 2023)},
  pages =	{27:1--27:17},
  series =	{Leibniz International Proceedings in Informatics (LIPIcs)},
  ISBN =	{978-3-95977-289-1},
  ISSN =	{1868-8969},
  year =	{2023},
  volume =	{283},
  editor =	{Iwata, Satoru and Kakimura, Naonori},
  publisher =	{Schloss Dagstuhl -- Leibniz-Zentrum f{\"u}r Informatik},
  address =	{Dagstuhl, Germany},
  doi =		{10.4230/LIPIcs.ISAAC.2023.27},
  annote =	{Keywords: Bland’s pivot rule, Dantzig’s pivot rule, Largest Increase pivot rule, Markov decision process, policy iteration, simplex algorithm}
}

@misc{makhoringlpk,
  title={{GLPK} ({GNU} {L}inear {P}rogramming {K}it) documentation},
  author={Makhorin, Andrew},
  year={2017}
}

@article{miplib,
  author                   = {Gleixner, Ambros and Hendel, Gregor and Gamrath, Gerald and Achterberg, Tobias and Bastubbe, Michael and Berthold, Timo and Christophel, Philipp M. and Jarck, Kati and Koch, Thorsten and Linderoth, Jeff and L\"ubbecke, Marco and Mittelmann, Hans D. and Ozyurt, Derya and Ralphs, Ted K. and Salvagnin, Domenico and Shinano, Yuji},
  title                    = {{MIPLIB 2017: Data-Driven Compilation of the 6th Mixed-Integer Programming Library}},
  journal                  = {Mathematical Programming Computation},
  year                     = {2021},
  doi                      = {10.1007/s12532-020-00194-3}
}

@book{robustoptimization,
    author = {Aharon Ben-Tal and Laurent El Ghaoui and Arkadi Nemirovski },
    title = {Robust Optimization},
    publisher = {Princeton University Press},
    year = {2009},
    isbn = {9780691143682},
    url = {https://web.archive.org/web/20251211052055/https://www2.isye.gatech.edu/~nemirovs/FullBookDec11.pdf}
}

@inbook{ball,
    author={Keith Ball},
    title={Flavors of Geometry},
    series={MSRI Book Series},
    volume={31},
    year={1997},
    chapter={An Elementary Introduction to Modern Convex Geometry},
    url={https://library.slmath.org/books/Book31/files/ball.pdf}
    }

@inbook{farbod-bento-laci-survey,
  journal = {Surveys in Combinatorics},
  place = {Cambridge},
  series = {London Mathematical Society Lecture Note Series},
  title = {Circuit Imbalance Measures and Linear Programming},
  doi = {10.1017/9781009093927.004},
  booktitle = {Surveys in Combinatorics 2022},
  publisher = {Cambridge University Press},
  year = {2022},
  pages = {64–114},
  collection = {London Mathematical Society Lecture Note Series},
  author = {Ekbatani, Farbod and Natura, Bento and Végh, László A.}
}

@InProceedings{jakub-martin-experimental,
  author =	{Kom\'{a}rek, Jakub and Kouteck\'{y}, Martin},
  title =	{{Experimental Analysis of LP Scaling Methods Based on Circuit Imbalance Minimization}},
  booktitle =	{22nd International Symposium on Experimental Algorithms (SEA 2024)},
  pages =	{18:1--18:21},
  series =	{Leibniz International Proceedings in Informatics (LIPIcs)},
  ISBN =	{978-3-95977-325-6},
  ISSN =	{1868-8969},
  year =	{2024},
  volume =	{301},
  editor =	{Liberti, Leo},
  publisher =	{Schloss Dagstuhl -- Leibniz-Zentrum f{\"u}r Informatik},
  address =	{Dagstuhl, Germany},
  URN =		{urn:nbn:de:0030-drops-203832},
  doi =		{10.4230/LIPIcs.SEA.2024.18},
  annote =	{Keywords: Linear programming, scaling, circuit imbalance measure}
}

@inproceedings{Summa2014,
  title = {On largest volume simplices and sub-determinants},
  DOI = {10.1137/1.9781611973730.23},
  booktitle = {Proceedings of the Twenty-Sixth Annual ACM-SIAM Symposium on Discrete Algorithms},
  publisher = {Society for Industrial and Applied Mathematics},
  author = {Summa,  Marco Di and Eisenbrand,  Friedrich and Faenza,  Yuri and Moldenhauer,  Carsten},
  year = {2014},
  month = dec,
  pages = {315–323}
}

@inproceedings{optimal_smoothed_analysis,
  title = {Optimal Smoothed Analysis of the Simplex Method},
  DOI = {10.1109/focs63196.2025.00096},
  booktitle = {2025 IEEE 66th Annual Symposium on Foundations of Computer Science (FOCS)},
  publisher = {IEEE},
  author = {Bach,  Eleon and Huiberts,  Sophie},
  year = {2025},
  month = dec,
  pages = {1829–1856},
  eprint={2504.04197},
  archivePrefix={arXiv}
}

@inbook{black2024exponentiallowerboundspivot,
  title = {Exponential Lower Bounds for Many Pivot Rules for the Simplex Method},
  ISBN = {9783031931123},
  ISSN = {1611-3349},
  DOI = {10.1007/978-3-031-93112-3_7},
  booktitle = {Integer Programming and Combinatorial Optimization},
  publisher = {Springer Nature Switzerland},
  author = {Black,  Alexander E.},
  year = {2025},
  pages = {86–99}
}

@article{gay1985electronic,
  title={Electronic mail distribution of linear programming test problems},
  author={Gay, David M},
  journal={Mathematical Programming Society COAL Newsletter},
  volume={13},
  pages={10--12},
  year={1985},
  url={https://netlib.org/lp/data/}
}

@book{probabilityandcomputing,
title={Probability and Computing: Randomized Algorithms and Probabilistic Analysis},
author={Michael Mitzenmacher and Eli Upfal},
year={2005},
isbn={0521835402}
}

@techreport{fischer,
  title={Zweidimensionale Projektionen von linearen Programmen},
  author={Stefan Fischer},
  year={1998},
  institution={TU Berlin}
}

@article{Hooker1994,
  title = {Needed: An Empirical Science of Algorithms},
  volume = {42},
  ISSN = {1526-5463},
  DOI = {10.1287/opre.42.2.201},
  number = {2},
  journal = {Operations Research},
  publisher = {Institute for Operations Research and the Management Sciences (INFORMS)},
  author = {Hooker,  John N.},
  year = {1994},
  month = apr,
  pages = {201–212}
}

\end{document}